\documentclass[twoside,11pt,theapa]{article}
\usepackage{jair, rawfonts}
\usepackage[natbibapa]{apacite}
\bibliographystyle{apacite}

\let\cite\relax
\let\cite\citep
%
\usepackage{algorithm}
\usepackage{algorithmic}
\usepackage{ifthen}

\usepackage[inline,shortlabels]{enumitem}
\usepackage{newfloat}
\usepackage{listings}
\lstset{%
	basicstyle={\footnotesize\ttfamily},
	numbers=left,numberstyle=\footnotesize,xleftmargin=2em,
	aboveskip=0pt,belowskip=0pt,%
	showstringspaces=false,tabsize=2,breaklines=true}
\floatstyle{ruled}
\newfloat{listing}{tb}{lst}{}
\floatname{listing}{Listing}
\usepackage[usenames,dvipsnames]{xcolor}

\usepackage[pdftex, colorlinks=true, hyperfootnotes=true, hyperindex=true,
            plainpages=false, pagebackref=false, pdfpagelabels=true, pdfstartview=FitH,
            linkcolor=black, citecolor=black, urlcolor=black,
            bookmarks, bookmarksopen, bookmarksdepth=3]{hyperref}

\usepackage[utf8]{inputenc}
\usepackage{todonotes}
\presetkeys{todonotes}{inline,backgroundcolor=yellow}{}
\usepackage{xspace}
\usepackage{amsmath,amssymb,amsfonts}
\usepackage{amsthm}
\allowdisplaybreaks
\usepackage{stmaryrd}
\usepackage{comment}
\usepackage[scientific-notation=false,group-separator={,}]{siunitx}%
\usepackage{adjustbox}
\usepackage{booktabs}
\usepackage{array}
\usepackage{color, colortbl}



\usepackage{xcolor}

\newcommand\cdf[1]{{\color{black}{#1}}}

\newcommand\cg[1]{{\color{black}{#1}}}

\newenvironment{marcodo}[1][]{%
\color{black}}{\normalcolor}
\newenvironment{claudiodo}[1][]{%
\color{black}}{\normalcolor}
\newcommand{\mr}[1]{\begin{marcodo}#1\end{marcodo}}
\newcommand{\cdc}[1]{\begin{claudiodo}#1\end{claudiodo}}


\newcommand{\declare}{\textsc{Declare}\xspace}
\newcommand{\LTLf}{LTL\textsubscript{f}\xspace}
\newcommand{\LTL}{LTL\xspace}
\newcommand{\nusmv}{\textsc{NuSMV}\xspace}
\newcommand{\trppp}{\textsc{trp++}\xspace}
\newcommand{\aaltaf}{\textsc{aaltaf}\xspace}
\newcommand{\UC}{\ensuremath{\mathrm{UC}}\xspace}
\newcommand{\UCs}{\ensuremath{\mathrm{UC}\textrm{s}}\xspace}
\newcommand{\UCset}{\ensuremath{\mathrm{UCS}}}

\DeclareMathOperator{\LPre}{\mathbf{Y}}
\DeclareMathOperator{\LWpre}{\mathbf{Z}}
\DeclareMathOperator{\LOnce}{\mathbf{O}}
\DeclareMathOperator{\LHistorically}{\mathbf{H}}
\DeclareMathOperator{\LSince}{\mathbf{S}}
\DeclareMathOperator{\LTrigger}{\mathbf{T}}

\DeclareMathOperator{\LNext}{\mathbf{X}}
\DeclareMathOperator{\LWnext}{\begin{marcodo}\mathbf{N}\end{marcodo}}
\DeclareMathOperator{\LGlobally}{\mathbf{G}}
\DeclareMathOperator{\LEventually}{\mathbf{F}}
\DeclareMathOperator{\LRelease}{\mathbf{R}}
\DeclareMathOperator{\LUntil}{\mathbf{U}}
\newcommand{\opt}{\ensuremath{\;\;\vert\;\;}}
\newcommand{\PVarSet}{\mathcal{V}}
\newcommand{\len}[1]{\ensuremath{\mathsf{len}(#1)}}
\newcommand{\ftol}[1]{\ensuremath{\mathsf{f2l}(#1)}}
\newcommand{\ltlftoltl}[1]{\ensuremath{\mathsf{LTLf2LTL}(#1)}}
\newcommand{\ltlptoltl}[1]{\ensuremath{\mathsf{P2F}(#1)}}
\newcommand{\xnf}[1]{\ensuremath{\mathsf{xnf}(#1)}}
\newcommand{\lend}{\ensuremath{\mathsf{end}}\xspace}
\newcommand{\den}[1]{\ensuremath{\llbracket #1 \rrbracket}}
\newcommand{\pair}[1]{\ensuremath{\langle #1 \rangle}}

\newtheorem{theorem}{Theorem}
\newtheorem{corollary}{Corollary}
\newtheorem{lemma}{Lemma}
\newtheorem{definition}{Definition}

\makeatletter
\newcommand{\IWHILE}[1]{\ALC@it\algorithmicwhile\ #1\ \algorithmicdo\begin{ALC@whl}}
\newcommand{\IENDWHILE}{\end{ALC@whl}}
\newcommand{\IIF}[1]{\ALC@it\algorithmicif\ #1\ \algorithmicthen}
\newcommand{\ENDIIF}{}
\newcommand{\IRETURN}{\algorithmicreturn{} \ }
\newcommand\Call[2]{\textsc{#1}\ifthenelse{\equal{#2}{}}{}{(#2)}}%
\makeatother

\title{
	Computing unsatisfiable cores for \LTLf specifications
}

\author{%
	\name{Marco Roveri} \email{marco.roveri@unitn.it} \\
	\addr{University of Trento, \\ Via Sommarive 9, Trento, Italy}
	\AND
	\name{Claudio Di Ciccio} \email{claudio.diciccio@uniroma1.it} \\ 
	\addr{Sapienza University of Rome, \\ Viale Regina Elena 295, 00161 Rome, Italy}
	\AND
	\name{Chiara Di Francescomarino} \email{dfmchiara@fbk.eu} \\
	\name{Chiara Ghidini} \email{ghidini@fbk.eu} \\
	\addr{FBK IRST, \\ Via Sommarive 18, Trento, Italy} %
}


\begin{document}

\maketitle

\begin{abstract}
	Linear-time temporal logic on finite traces (\LTLf) is rapidly becoming a de-facto standard to produce specifications in many application domains (e.g., planning, business process management, run-time monitoring, reactive synthesis). Several studies approached the respective satisfiability problem.
In this paper, we investigate the problem of extracting the unsatisfiable core in \LTLf specifications.
We provide four algorithms for extracting an unsatisfiable core leveraging the adaptation of state-of-the-art approaches to \LTLf satisfiability checking.
We implement the different approaches within the respective tools and carry out an experimental evaluation on a set of reference benchmarks, restricting to the  unsatisfiable ones.
The results show the feasibility, effectiveness, and complementarities of the different algorithms and tools.
%

\end{abstract}



\section{Introduction}



A growing body of literature evidences the adoption of
linear-time temporal logic on finite traces (\LTLf)~\cite{DBLP:conf/ijcai/GiacomoV13} 
to produce systems specifications~\cite{DBLP:conf/aaai/GiacomoMM14}.
Its widespread use spans across several application domains, including
business process management (BPM) for declarative process modeling~\cite{DBLP:journals/tweb/MontaliPACMS10,DBLP:conf/bpm/GiacomoMGMM14}
\begin{claudiodo}
and mining~\cite{DBLP:conf/bpm/CecconiCGM18,DBLP:conf/otm/RaimCMMM14}%
\end{claudiodo},
run-time monitoring and verification~\cite{DBLP:conf/bpm/GiacomoMGMM14,DBLP:journals/corr/abs-2004-01859,8133351},
%
and AI planning~\cite{DBLP:conf/kr/CalvaneseGV02,DBLP:conf/aaai/SohrabiBM11,DBLP:conf/aips/CamachoBMM18,DBLP:conf/ijcai/CamachoM19}.

\cg{When it comes to verification techniques and tool support for \LTLf,} several studies approach the \LTLf satisfiability problem via
reduction to \LTL~\cite{DBLP:conf/focs/Pnueli77} satisfiability on infinite traces~\cite{DBLP:conf/aaai/GiacomoMM14}, or via specific
propositional satisfiability approaches~\cite{DBLP:journals/jair/FiondaG18,DBLP:journals/ai/LiPZVR20}.
However, 
no efforts have been devoted thus far to the identification of the formulas that lead to unsatisfiability in \LTLf specifications, with the consequence that no support has been offered
for modelers and system designers to single out the causes of possible inconsistencies.

In this paper, we tackle the challenge of extracting unsatisfiable cores (\UCs) from \LTLf specifications. \cg{%
Investigating this problem is interesting both from practical and theoretical viewpoints.
On the practical side, if unsatisfiability signals that a specification is defective, the identification of unsatisfiable cores provides the users with the opportunity to isolate the source of inconsistency and leads them to a consequent debugging.
Notice that determining a reason for unsatisfiability without automated support may reveal unfeasible for a number of reasons that range from the sheer size of the formula to the lack of time and skills of the user~\cite{DBLP:journals/scp/Schuppan12,DBLP:conf/ictai/Schuppan18}.
On the theoretical side, we remark that dealing with the extraction of \UCs in \LTLf specifications is far from trivial.
Indeed, there is neither a default pathway to move from the support provided for \LTL to the one that has to be provided for \LTLf nor a default algorithm upon which this transition could be based.
Concerning the \emph{pathway}, there are two clear alternatives to address this problem: the first one extends techniques for the extraction of \UCs in \LTL to the case of \LTLf; the second one exploits algorithms that directly compute satisfiability in \LTLf to provide support for the extraction of \UCs.
Concerning the specific \emph{algorithms} from which to start, the two approaches present different scenarios. In the first pathway, it is easy to observe that several techniques for the extraction of \UCs in \LTL exist, and could be extended to the case of \LTLf. Since recent works show that a single universal best algorithm does not exist and often the systems exhibit behaviors that complement each other~\cite{DBLP:journals/fmsd/LiZPZV19,DBLP:journals/ai/LiPZVR20}, choosing a single algorithm from which to start is less than obvious. In the second pathway, instead, 
the number of works on satisfiability in \LTLf is still rather limited.

In this work, we 
explore both the above pathways. 
For the \LTL pathway, in particular, we consider algorithms belonging to two reference approaches: one based on model-checking, and the other one based on theorem proving. 
\mr{For the \LTLf pathway, we consider a reference state-of-the-art specific reduction to propositional satisfiability.}
We believe that leveraging reference state-of-the-art approaches provides a rich starting point for the investigation of the problem and the provision of effective tools for the extraction of \UCs in \LTLf specifications.
Our comparative evaluation shows a complementary behavior of the different algorithms.
}
%

\mr{Our contributions consist of the following}:
\cg{\begin{enumerate}
\item Four algorithms that allow for the computation of an unsatisfiable core through the adaptation of the main reference state-of-the-art approaches for \LTL and \LTLf satisfiability checking (Section~\ref{sec:algorithms}). For the \LTL pathway, we consider two satisfiability checking algorithms: one based on Binary Decision Diagrams (BDDs)~\cite{DBLP:journals/fmsd/ClarkeGH97}, and the other based on propositional satisfiability~\cite{DBLP:journals/lmcs/BiereHJLS06}, and a theorem proving algorithm based on temporal resolution~\cite{DBLP:conf/cade/HustadtK03,DBLP:journals/acta/Schuppan16}. For the native \LTLf pathway we consider the reference work of~\cite{DBLP:journals/ai/LiPZVR20} \mr{based on explicit search and} propositional satisfiability. Note that, the techniques based on propositional satisfiability (that is, based on~\cite{DBLP:journals/lmcs/BiereHJLS06} and~\cite{DBLP:journals/ai/LiPZVR20})
aim at extracting \mr{an} \UC, which may not necessarily be the minimum one.
The BDD and temporal-resolution based algorithms already allow for the extraction of a minimum unsatisfiable core\cdf{.}
\item An implementation of the proposed four algorithms (Section~\ref{sub:implementation}). Three implementations extend existing tools for the corresponding original algorithms; the implementation of \mr{the}
algorithm \cdf{based on temporal resolution,} instead, \cdc{resorts to} a pre-processing of the formula to reduce the input to the language restrictions of the original tool.
\item An experimental evaluation on a \mr{large} set of reference benchmarks taken from~\cite{DBLP:journals/ai/LiPZVR20}, restricted to the unsatisfiable ones (Sections~\ref{sub:the_experimental_setup} and~\ref{sub:results}).
The results show an overall better \cdc{time efficiency} of the algorithm based on the native \LTLf pathway \cite{DBLP:journals/ai/LiPZVR20}. \cdc{However, the cardinality of the \UC extracted by the fastest approach is the smallest one in only about half of the cases. The experimental findings exhibit} a complementarity of the proposed approaches on different specifications: %
\cdc{depending on the varying number of \mr{propositional} variables, number of conjuncts and degree of nesting of the temporal operators in the benchmarks, it is not rare that \mr{some of} the implemented techniques achieve a noticeable performance when \mr{the} other \mr{ones} terminate with no result and vice-versa.}
\end{enumerate}
}

\cg{Since popular usages of \LTLf \mr{leverage} past temporal operators (see e.g., the \declare language~\cite{2009-Aalst}), we also} provide a way to handle \LTLf with past temporal operators \cg{(see Definition~\ref{def:LTLf-formulasat} and all the \mr{respective} technical parts). This} results in the same expressive power as that of pure future version, though allowing
for exponentially more succinct specifications~\cite{DBLP:conf/tls/Gabbay87,DBLP:conf/lics/LaroussinieMS02} and more natural encodings of \LTLf based modeling languages \cg{that make use of these operators}.
This objective is pursued by leveraging algorithms already supporting \LTL with past
temporal operators, or through a reduction to \LTLf with only future
temporal operators to use existing approaches for \LTLf
satisfiability checking.

\cg{The contributions highlighted above are complemented with sections where we illustrate relevant background knowledge (Section~\ref{sec:background}), related works (Section~\ref{sec:related}), conclusions and future work (Section~\ref{sec:conclusions}).}
%



\section{Background}
\label{sec:background}
\sloppypar


\subsection{\LTLf Syntax and Semantics}

We assume that a finite set of propositional variables $\PVarSet$ is
given.

A \emph{state} $\mu$ over propositional variables in $\PVarSet$ is a
complete assignment of a Boolean value to variables in $\PVarSet$.
\begin{marcodo}
\begin{definition}
  We say that variable $x \in \PVarSet$ holds in $\mu$ iff $x$ is
  assigned the true value in $\mu$, and we denote this as
  $\mu \models_p x$.
\end{definition}
\end{marcodo}

A \emph{finite trace} over propositional variables in $\PVarSet$ is a
sequence $\pi = \mu_0, \mu_1, ..., \mu_{n-1}$ of states. The length of
a trace $\pi = \mu_0, \mu_1, ..., \mu_{n-1}$, denoted $\len{\pi}$,
is $n$. We denote with $\pi[i]$ the $i$-th state $\mu_i$, and with
$\pi[i:-]$ the suffix of the finite trace starting at state $i$, i.e.,
$\pi[i:-] = \mu_i, ..., \mu_{n-1}$.
\begin{marcodo}
  An \emph{infinite trace} over propositional variables in $\PVarSet$
  is a sequence $\pi = \mu_0, \mu_1, ...$ of states such that
  $\pi \in (2^{\PVarSet})^{\omega}$. Given two finite traces $\pi_1$
  and $\pi_2$, we indicate with $\pi_1\pi_2^{\omega}$ the infinite
  \emph{lazo-shaped} trace with prefix $\pi_1$ and trace $\pi_2$
  repeated indefinitely (intuitively, to indicate that $\pi_2$ is
  repeated within an infinite loop).
\end{marcodo}

An \emph{\LTLf formula} $\varphi$ is built over the propositional
variables in $\PVarSet$ by using the classical Boolean connectives
``$\wedge$'', ``$\vee$'', and ``$\neg$'', complemented with the future temporal operators ``$\LNext$'' (next), ``$\LWnext$'' (weak
next), ``$\LGlobally$'' (always/globally), ``$\LEventually$''
(eventually/finally), ``$\LUntil$'' (until) and ``$\LRelease$''
(release), and with the past temporal operators ``$\LPre$'' (yesterday), ``$\LWpre$'' (weak
yesterday), ``$\LHistorically$'' (historically), ``$\LOnce$''
(once), ``$\LSince$'' (since), and ``$\LTrigger$'' (trigger).
The $\LWnext$ (resp. $\LWpre$) operator is similar to $\LNext$ ($\LPre$) and solely
differs in the way the final (resp. initial) state is dealt with:
In the last (resp.\ initial) state, $\LNext \varphi$ ($\LPre \varphi$) is false, while $\LWnext \varphi$ (resp. $\LWpre \varphi$) is true.
The grammar for building \LTLf formulas is:
\begin{marcodo}
\begin{align*}
  \varphi ::= & \;\; x \opt (\varphi_1 \wedge \varphi_2) \opt (\varphi_1 \vee \varphi_2) \opt \neg \varphi_1 \opt \\
              & \text{\textit{Future temporal operators}}\\
              & (\LNext \varphi_1) \opt (\LWnext \varphi_1) \opt (\LEventually \varphi_1) \opt (\LGlobally \varphi_1) \opt
               (\varphi_1 \LUntil \varphi_2) \opt (\varphi_1 \LRelease \varphi_2),\\
              & \text{\textit{Past temporal operators}}\\
              & (\LPre \varphi_1) \opt (\LWpre \varphi_1) \opt (\LHistorically \varphi_1) \opt (\LOnce \varphi_1) \opt
               ( \varphi_1 \LSince \varphi_2) \opt ( \varphi_1 \LTrigger \varphi_2),
\end{align*}
\end{marcodo}%
\noindent where $x\in \PVarSet$ is a propositional variable,
\begin{marcodo} $\varphi_1$ and $\varphi_2$ are \LTLf formulas.\end{marcodo}
Classical implication and equivalence connectives can be obtained in
standard ways in terms of the $\wedge,\vee,\neg$ connectives.

\begin{definition}
	\label{def:LTLf-formulasat}
Given a finite trace $\pi$, the \LTLf formula $\varphi$ is true
in $\pi$ at state $\pi[i]$ s.t. $i \in [0..\len{\pi}-1]$, denoted with
$\pi,i \models \varphi$, iff:
\begin{itemize}[itemsep=1pt]
\item $\pi,i \models x$ iff $\pi[i] \models_p x$;
\item $\pi,i \models \varphi_1 \wedge \varphi_2$ iff $\pi,i \models \varphi_1$ and $\pi,i \models \varphi_2$;
\item $\pi,i \models \varphi_1 \vee \varphi_2$ iff $\pi,i \models \varphi_1$ or $\pi,i \models \varphi_2$;
\end{itemize}
Future temporal operators:
\begin{itemize}
\item $\pi,i \models \LNext \varphi$ iff $i < \len{\pi} -1$ and $\pi,i+1 \models \varphi$;
\item $\pi,i \models \LWnext \varphi$ iff $i < \len{\pi} -1$ and
  $\pi,i+1 \models \varphi$, or $i = \len{\pi}-1$;
\item $\pi,i \models \LEventually \varphi$ iff
  $\exists j \textrm{ with } i \le j < \len{\pi}$ such that
  $\pi,j \models \varphi$;
\item $\pi,i \models \LGlobally \varphi$ iff
  $\forall j \textrm{ with } i \le j < \len{\pi}$ it holds
  $\pi,j \models \varphi$;
\item $\pi,i \models \varphi_1 \LUntil \varphi_2$ iff
  $\exists j \textrm{ with } i \le j < \len{\pi}$ such that
  $\pi,j \models \varphi_2$ and
  $\forall k \textrm{ with } i \le k < j$ it holds that
  $\pi,k \models \varphi_1$;
\item $\pi,i \models \varphi_1 \LRelease \varphi_2$ iff
  $\forall j \textrm{ with } i \le j < \len{\pi}$ it holds that
  $\pi,j \models \varphi_2$, or
  $\exists j \textrm{ with } i \le j < \len{\pi}$ such that
  $\pi,j \models \varphi_1$ and
  $\forall k \textrm{ with } i \le k \begin{marcodo}\le\end{marcodo} j$ it holds that
  $\pi,k \models \varphi_2$;
\end{itemize}
Past temporal operators:
  \begin{itemize}
  \item $\pi,i \models \LPre \varphi$ iff $1 \le i$ and $\pi, i-1 \models \varphi$;
  \item $\pi,i \models \LWpre \varphi$ iff $0 = i$ or $\pi, i-1 \models \varphi$;
  \begin{marcodo}
  \item $\pi,i \models \LOnce \varphi$ iff $\exists j \text{ with } 0 \le j \le i$ such that $\pi, j \models \varphi$;
  \item $\pi,i \models \LHistorically \varphi$ iff $\forall j \text{ with } 0 \le j \le i$ it holds that $\pi, j \models \varphi$;
  \item $\pi,i \models \varphi_1 \LSince \varphi_2$ iff $\exists k \textrm{ with } 0 \le k \le i$ \textrm{ such that } $\pi, k \models \varphi_2 \textrm{ and } \forall j \textrm{ with } k < j \le i$ it holds that
$\pi, j \models \varphi_1$;
  \item $\pi,i \models \varphi_1 \LTrigger \varphi_2$ iff $\forall k \textrm{ with } 0 \le k \le i$ \textrm{ such that } $\pi, k \models \varphi_2 \textrm{ or } \exists j \textrm{ with } k < j \le i$ such that $\pi, j \models \varphi_1$;
  \end{marcodo}
  \end{itemize}
We say that $\pi$ is a \emph{model of $\varphi$} whenever
$\pi,0 \models \varphi$, and we say that $\varphi$ \emph{is
  satisfiable}, whenever there exists a $\pi$ such that
$\pi,0 \models \varphi$.
\end{definition}

\mr{

\paragraph{Remark.} Notice that the following equivalences hold:
$(\varphi_1 \LTrigger \varphi_2) \leftrightarrow \neg (\neg
\varphi_1 \LSince \neg \varphi_2)$,
$(\LWpre \varphi) \leftrightarrow \neg (\LPre \neg \varphi)$, and
$(\varphi_1 \LRelease \varphi_2) \leftrightarrow \neg (\neg
\varphi_1 \LUntil \neg \varphi_2)$. In the following we leverage
these equivalences whenever needed to simplify the presentation and the proofs.

}

The language of an \LTLf formula $\varphi$ over the set of $\PVarSet$
is defined as the set
$\mathcal{L}(\varphi) := \{ \pi | \pi,0 \models \varphi\}$.
Thus, the satisfiability problem for an \LTLf formula $\varphi$ can be
reduced to checking that $\mathcal{L}(\varphi) \not=\emptyset$.

Let us consider the formula $\LGlobally (a \rightarrow \LWnext b)$.
Trace $\pi^1 = \mu^1_0, \mu^1_1, \mu^1_2, \mu^1_3$ of length 4
such that $\mu^1_0 = \{a=\bot,b=\top\}$,
$\mu^1_1 = \{a=\top,b=\bot\}$, $\mu^1_2 = \{a=\top,b=\top\}$,
$\mu^1_3 = \{a=\top,b=\top\}$ satisfies the formula,
while $\pi^2 = \mu^2_0, \mu^2_1, \mu^2_2, \mu^2_3$ of length 4 such
that $\mu^2_0 = \{a=\bot,b=\top\}$, $\mu^2_1 = \{a=\top,b=\bot\}$,
$\mu^2_2 = \{a=\top,b=\top\}$, $\mu^2_3 = \{a=\top,b=\bot\}$ is not a
model since \begin{marcodo}$\mu^2_2 \models_p a$\end{marcodo} but in the next state
$\mu^2_3 \not \models_p b$.
\begin{marcodo}On the other hand,\end{marcodo} the \LTLf formula
$\LGlobally (a \rightarrow \LNext b)$ does not hold for both traces:
\begin{marcodo} $\pi_1$ does not satisfy the formula because\end{marcodo} in the
last state $\mu^1_3 \models_p a$ and there is no next state; %
\begin{marcodo} as for $\pi_2$, $\mu^2_2 \models_p a$ but in
  the next state $\mu^2_3 \not \models_p b$, and 
  $\mu^1_3 \models_p a$ but that is the last state, so no next state exists.\end{marcodo}



\subsubsection{Unsatisfiable core}

Given a set $\Gamma = \{\varphi_1, ..., \varphi_N\}$ of \LTLf formulas
$\varphi_i$ (considered in implicit conjunction, i.e.
$\Gamma = \bigwedge_{i=0}^{N} \varphi_i$), such that $\Gamma$ is not
satisfiable, we say that a formula $\Phi \subseteq \Gamma$ is an
\emph{unsatisfiable core} of $\Gamma$ iff $\Phi$ is unsatisfiable.
A \emph{minimal unsatisfiable core} $\Phi$ is such that each \LTLf
formula $\Phi_i = \Phi \setminus \{\varphi_i\}$ for $\varphi_i \in \Phi$
is satisfiable.
A \emph{minimum unsatisfiable core} is a minimal unsatisfiable core
with the smallest possible cardinality.

\subsection{Checking Satisfiability of an \LTLf Formula}

Checking the satisfiability of an \LTLf formula $\varphi$ can be reduced to
checking language emptiness of a Nondeterministic Finite state
Automaton (NFA)~\cite{DBLP:conf/aaai/GiacomoMM14}.
Alternative approaches for \LTLf formulas without past temporal
operators~\cite{DBLP:conf/ijcai/GiacomoV13,DBLP:conf/aaai/GiacomoMM14,DBLP:journals/jair/FiondaG18}
address this problem by checking the satisfiability of an
equi-satisfiable \LTL formula over infinite traces\footnote{We refer
  the reader to \cite{DBLP:conf/focs/Pnueli77,Tsay2021} for the semantics of
  \LTL over infinite traces.%
} leveraging on existing well established techniques (e.g
\cite{DBLP:journals/fmsd/ClarkeGH97,DBLP:journals/lmcs/BiereHJLS06}).
These approaches proceed as follows:
\begin{enumerate*}[label=(\roman*)]
\item they introduce a new fresh propositional variable
  $\lend \not\in \PVarSet$ used to denote the trace has ended;
\item they require that \lend eventually holds (i.e.,
  $\LEventually \lend$);
\item they require that once \lend becomes true, it stays true forever
(i.e. $\LGlobally (\lend \rightarrow \LNext \lend)$);
\item they translate the \LTLf formula $\varphi$ into an \LTL formula by means of
  a translation function $\ftol{\varphi}$ that is defined recursively on
  the structure of the \LTLf formula $\varphi$ as follows:
\end{enumerate*}
%
\begingroup
\addtolength{\jot}{-0.5ex}
\begin{align*}
  \ftol{x} & \mapsto  x \\
  \ftol{\neg \varphi} & \mapsto  \neg \ftol{\varphi}\\
  \ftol{\varphi_1 \wedge \varphi_2}  &\mapsto \ftol{\varphi_1} \wedge \ftol{\varphi_2}\\
  \ftol{\varphi_1 \vee \varphi_2} & \mapsto  \ftol{\varphi_1} \vee \ftol{\varphi_2}\\
  \ftol{\LNext \varphi} & \mapsto  \LNext (\ftol{\varphi} \wedge \neg \lend) \\
  \ftol{\LWnext \varphi} & \mapsto  \LNext (\ftol{\varphi} \vee \lend)\\
  \ftol{\LEventually \varphi} & \mapsto  \LEventually (\ftol{\varphi} \wedge \neg \lend)\\
  \ftol{\LGlobally \varphi} & \mapsto  \LGlobally (\ftol{\varphi} \vee \lend)\\
  \ftol{\varphi_1 \LUntil \varphi_2} & \mapsto  \ftol{\varphi_1} \LUntil (\ftol{\varphi_2} \wedge \neg \lend) \\
  \ftol{\varphi_1 \LRelease \varphi_2} & \mapsto  (\ftol{\varphi_1} \wedge \neg \lend) \LRelease (\ftol{\varphi_2} \vee \lend)\\
\end{align*}
\endgroup




\begin{theorem}[\cite{DBLP:conf/aaai/GiacomoMM14}]\label{th:vardidegiacomo}
  Any \LTLf formula without past temporal operators $\varphi$ is
  satisfiable iff the \LTL formula
  \begin{equation}
    \label{eq:ltlf2ltl}
    \LEventually \lend \wedge \LGlobally (\lend \rightarrow \LNext \lend) \wedge \ftol{\varphi}
  \end{equation}
  is satisfiable.
\end{theorem}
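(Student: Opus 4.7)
The plan is to prove both directions by exhibiting an explicit correspondence between finite \LTLf models of $\varphi$ and infinite \LTL models of the translated formula. Given any finite trace $\pi=\mu_0,\ldots,\mu_{n-1}$ over $\PVarSet$, I define a canonical infinite extension $\widehat{\pi}$ over $\PVarSet\cup\{\lend\}$ by setting $\widehat{\pi}[i]|_{\PVarSet}=\pi[i]$ and $\widehat{\pi}[i]\not\models_{p}\lend$ for $i<n$, and $\widehat{\pi}[i]\models_{p}\lend$ (with the other variables set arbitrarily, e.g.\ all false) for $i\geq n$. By construction $\widehat{\pi}$ satisfies the two ``end-marker'' conjuncts $\LEventually\lend$ and $\LGlobally(\lend\rightarrow\LNext\lend)$. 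Conversely, given an infinite \LTL trace $\sigma$ satisfying those two conjuncts, let $n$ be the least index with $\sigma[n]\models_{p}\lend$; then $\sigma[i]\models_{p}\lend$ iff $i\geq n$, and I define the finite trace $\check{\sigma}:=\sigma[0]|_{\PVarSet},\ldots,\sigma[n-1]|_{\PVarSet}$.

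The heart of the proof is the following correspondence lemma, proved by structural induction on the past-free \LTLf formula $\psi$: for every $i$ with $0\leq i<n$,
\[
  \pi,i\models\psi \quad\Longleftrightarrow\quad \widehat{\pi},i\models\ftol{\psi},
\]
and symmetrically $\check{\sigma},i\models\psi \Leftrightarrow \sigma,i\models\ftol{\psi}$ for $i<n$. The propositional and Boolean cases are immediate from the definition of $\ftol{\cdot}$ and the fact that $\widehat{\pi}[i]|_{\PVarSet}=\pi[i]$. The interesting cases are the temporal ones, where the fresh $\lend$ acts as a finite-horizon guard: for $\LNext\psi$, the \LTLf semantics demands $i+1<n$, which in the \LTL side corresponds precisely to the conjunct $\neg\lend$ placed at position $i+1$ by $\ftol{\LNext\psi}=\LNext(\ftol{\psi}\wedge\neg\lend)$; for $\LWnext\psi$, the disjunct with $\lend$ in $\ftol{\LWnext\psi}=\LNext(\ftol{\psi}\vee\lend)$ models the ``vacuous at the last state'' clause; for $\LEventually$ and $\LGlobally$ the same $\neg\lend$/$\lend$ guards restrict the quantification to indices $<n$; and for $\LUntil,\LRelease$ one uses the same argument on both sub-formulas, relying on the fact that once $\lend$ holds it holds forever, so no ``after-the-end'' state can witness an until or break a release. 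Each case reduces, via the inductive hypothesis applied at $i+1$ (or at $j,k<n$), to the corresponding \LTLf clause from Definition~\ref{def:LTLf-formulasat}.

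With the lemma in hand the theorem follows in a few lines. For the $(\Rightarrow)$ direction, if $\pi,0\models\varphi$ with $\pi$ finite of length $n\geq 1$, then $\widehat{\pi}$ satisfies both end-marker conjuncts and, by the lemma at $i=0$, also $\ftol{\varphi}$. For $(\Leftarrow)$, any infinite $\sigma$ satisfying~\eqref{eq:ltlf2ltl} produces, via the least-$\lend$ index, a finite trace $\check{\sigma}$ of length $n\geq 1$ (we note $n>0$ because the empty finite trace is excluded by $\len{\pi}\geq 1$; should the minimal $n$ be $0$, one can always postpone $\lend$ by one step without breaking the conjuncts, or argue that the conjunction with $\ftol{\varphi}$ forces $\neg\lend$ at position $0$ whenever $\varphi$ is not purely propositional-trivial, which is what the guards in $\ftol{\cdot}$ achieve).

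The main obstacle I anticipate is the bookkeeping in the inductive step for $\LUntil$ and $\LRelease$: one has to verify that the witness/universal indices produced on the \LTL side necessarily lie in $[0,n-1]$, using the conjuncts $\neg\lend$ and the ``sticky'' behaviour of $\lend$ forced by the second end-marker conjunct. A secondary, minor subtlety is handling the base case $i=n-1$ (the last finite state), where $\LNext\psi$ must be false on the \LTLf side and is correctly falsified on the \LTL side because $\widehat{\pi}[n]\models_{p}\lend$, so $\ftol{\psi}\wedge\neg\lend$ fails. Once these case analyses are done carefully, the theorem drops out.
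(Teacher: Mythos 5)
This statement is imported from \citeA{DBLP:conf/aaai/GiacomoMM14}; the paper you were given contains no proof of it, so the comparison is with the standard argument. Your overall strategy is exactly that standard argument: the canonical extension $\widehat{\pi}$, the inverse truncation $\check{\sigma}$ at the least $\lend$-position, and the correspondence lemma $\pi,i\models\psi \Leftrightarrow \widehat{\pi},i\models\ftol{\psi}$ for $i<n$ proved by structural induction. The induction itself is sound: the $\neg\lend$ guards confine witnesses for $\LNext,\LEventually,\LUntil$ to positions $<n$, the $\lend$ disjuncts make $\LWnext,\LGlobally,\LRelease$ vacuous beyond position $n-1$, and the stickiness of $\lend$ enforced by $\LGlobally(\lend\rightarrow\LNext\lend)$ is what makes the $\LUntil/\LRelease$ bookkeeping go through. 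The $(\Rightarrow)$ direction is complete.

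The genuine gap is the case $n=0$ in the $(\Leftarrow)$ direction, which you flag but then dismiss with two patches that are both unsound. First, ``postponing $\lend$ by one step'' can falsify $\ftol{\varphi}$: a conjunct such as $\LGlobally(\ftol{\psi}\vee\lend)$ may hold at position $0$ only because $\lend$ holds there, and flipping $\lend$ to false at $0$ then demands $\ftol{\psi}$ at $0$, which need not hold. Second, the claim that $\ftol{\varphi}$ ``forces $\neg\lend$ at position $0$'' is false for $\LGlobally$- and $\LRelease$-rooted formulas, whose translations only place $\lend$ in disjunctive (not conjunctive) position. Concretely, take $\varphi=\LGlobally x\wedge\LGlobally\neg x$, which is unsatisfiable over the non-empty finite traces used in Definition~\ref{def:LTLf-formulasat}; yet $\ftol{\varphi}=\LGlobally(x\vee\lend)\wedge\LGlobally(\neg x\vee\lend)$ together with the two end-marker conjuncts is satisfied by any infinite trace on which $\lend$ holds from position $0$ onward. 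So the $n=0$ case cannot be argued away: it must be excluded by additionally requiring $\neg\lend$ at the initial position (as the encoding in the cited source effectively does via its initial ``tail'' constraint), or by admitting the empty trace as a model. Your proof should either add that conjunct to Equation~\eqref{eq:ltlf2ltl} and observe that $\widehat{\pi}$ still satisfies it when $\len{\pi}\ge 1$, or state explicitly that the theorem is read with that convention; as written, the $(\Leftarrow)$ direction does not close.
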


Hereafter, we denote with $\ltlftoltl{\varphi}$ the
equation~\eqref{eq:ltlf2ltl} resulting from applying
Theorem~\ref{th:vardidegiacomo}, i.e.,
$\ltlftoltl{\varphi} := \LEventually \lend \wedge \LGlobally (\lend
\rightarrow \LNext \lend) \wedge \ftol{\varphi}$. The resulting \LTL
formula can then be checked for satisfiability with any state of the
art \LTL satisfiability checker as discussed
in~\cite{DBLP:conf/aaai/GiacomoMM14,DBLP:journals/ai/LiPZVR20}.

Finally, there are SAT based frameworks for \LTLf satisfiability
checking like e.g.~\cite{DBLP:journals/ai/LiPZVR20}, where
propositional SAT solving techniques are used to construct a
transition system $T_\varphi$ for a given \LTLf formula $\varphi$, and \LTLf
satisfiability checking is reduced to a path search problem over the
constructed transition system.

\begin{theorem}[\cite{DBLP:journals/ai/LiPZVR20}]\label{th:vardirozier}
  Let $\varphi$ be an \LTLf formula without past temporal
  operators. $\varphi$ is satisfiable iff there is a final state in
  $T_\varphi$.
\end{theorem}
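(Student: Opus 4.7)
The plan is to establish a bijective correspondence between finite models of $\varphi$ and finite paths in the transition system $T_\varphi$ that terminate in a final state. The construction of $T_\varphi$ in \cite{DBLP:journals/ai/LiPZVR20} rewrites every subformula into its \emph{next normal form} $\xnf{\cdot}$, which separates a purely propositional constraint to be discharged in the current state from obligations delegated to the successor via $\LNext$. A state of $T_\varphi$ is a conjunction of \LTLf subformulas encoding the pending obligations, with the initial state being $\{\varphi\}$. A final state is one whose obligations admit the empty continuation, i.e.\ whose propositional abstraction of $\xnf{\cdot}$ is satisfiable without appealing to any pending $\LNext$.

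For the ``$\Rightarrow$'' direction, I would induct on the length of a satisfying trace $\pi = \mu_0, \ldots, \mu_{n-1}$ with $\pi,0 \models \varphi$. Maintaining the invariant that at step $i$ the state $\Psi_i$ reached in $T_\varphi$ satisfies $\pi,i \models \bigwedge \Psi_i$, the xnf decomposition of $\bigwedge \Psi_i$ splits into a propositional part satisfied by $\mu_i$ and a next-part whose body becomes $\Psi_{i+1}$. By induction on the structure of \LTLf operators, one checks $\pi,i+1 \models \bigwedge \Psi_{i+1}$, so $\Psi_i \to_{\mu_i} \Psi_{i+1}$ is a legal transition. At step $n$, no further states remain in $\pi$, hence $\Psi_n$ can carry no pending $\LNext$ obligation, meaning $\Psi_n$ is final.

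For the converse, given a path $\{\varphi\} = \Psi_0 \to_{\mu_0} \Psi_1 \to_{\mu_1} \cdots \to_{\mu_{n-1}} \Psi_n$ with $\Psi_n$ final, I would construct $\pi = \mu_0, \ldots, \mu_{n-1}$ and prove by downward induction on $i$ that $\pi,i \models \bigwedge \Psi_i$. The base case $i = n$ is handled by showing that a final state's obligations are all vacuously or propositionally satisfied at the last position of a finite trace. The inductive step uses the soundness of $\xnf{\cdot}$: if $\mu_i$ satisfies the current propositional part and $\pi,i+1 \models \bigwedge \Psi_{i+1}$ captures the delegated part, then $\pi,i$ satisfies the original conjunction. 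In particular, $\pi,0 \models \varphi$.

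The main obstacle I expect lies in the treatment of \emph{eventualities} such as $\LEventually \psi$ and $\varphi_1 \LUntil \varphi_2$, whose satisfaction cannot simply be postponed indefinitely. Under infinite-trace \LTL semantics this requires an explicit fairness argument (e.g.\ acceptance conditions on lasso-shaped runs), but under \LTLf semantics the termination is enforced automatically: once the path reaches a final state, no further positions exist, so any surviving eventuality would violate the ``no pending $\LNext$'' condition characterizing final states. Making precise that the final-state predicate of $T_\varphi$ coincides exactly with the set of obligation sets dischargeable at $i = \len{\pi}-1$, and that the xnf rewriting preserves the one-step semantics for every connective including $\LWnext$ and $\LRelease$, is where the technical bookkeeping concentrates and must be handled connective by connective.
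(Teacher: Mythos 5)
The paper does not actually prove this statement: Theorem~\ref{th:vardirozier} is imported verbatim from \cite{DBLP:journals/ai/LiPZVR20}, and the surrounding text explicitly defers the construction of $T_\varphi$ and the correctness and termination arguments to that reference. So there is no in-paper proof to compare against; what you have written is a reconstruction of the cited source's argument. Your outline does follow the standard (and, as far as I can tell, the original) route: a two-way simulation between satisfying finite traces and runs of $T_\varphi$ reaching a final state, driven by the one-step expansion property of $\xnf{\cdot}$. Three bookkeeping points need care if you flesh it out. First, the correspondence is not bijective: a single trace can induce several runs, because the propositional part of a state's xnf can be discharged by $\mu_i$ in more than one way; you only need existence in each direction. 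Second, your indexing places the final state $\Psi_n$ \emph{after} the last letter $\mu_{n-1}$, which would require $\Psi_n$ to be satisfied by an empty continuation—but Definition~\ref{def:LTLf-formulasat} admits no empty trace. The construction in \cite{DBLP:journals/ai/LiPZVR20} instead makes the final state the one whose obligations are discharged \emph{by} the last letter together with $\lend$, i.e., $\Psi_{n-1}$ with $\mu_{n-1}\cup\{\lend\}$ satisfying $(\xnf{\Psi_{n-1}})^p$; you should adopt that convention. Third, ``no pending $\LNext$'' must mean ``no pending \emph{strong} next'': at the last position $\LWnext\psi$ is vacuously true while $\LNext\psi$ is false, and accordingly $\LEventually$ and $\LUntil$ must be discharged by their argument (respectively, right-hand formula) holding propositionally there, whereas $\LGlobally$ and $\LRelease$ are discharged automatically. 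With those adjustments the sketch is sound and matches the argument the paper delegates to the cited work.
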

A \emph{final state} for $T_\varphi$ is any state satisfying the Boolean
formula $\lend \wedge (\xnf{\varphi})^p$, where
\begin{enumerate*}[label=(\roman*)]
\item $\lend$ is a new propositional atom to identify the last state
  of satisfying traces (similarly to~\cite{DBLP:conf/aaai/GiacomoMM14});
\item $\xnf{\varphi}$ is the \emph{neXt Normal Form} of $\varphi$, an
  equi-satisfiable formula such that there are no Until/Release
  sub-formulas in the propositional atoms of $\xnf{\varphi}$, built
  linearly from $\varphi$; and
\item $(\xnf{\varphi})^p$ is a propositional formula over the
  propositional atoms of $\xnf{\varphi}$.
\end{enumerate*}
This approach uses a conflict driven algorithm, leveraging on
propositional \mr{unsatisfiable} cores, to perform the explicit path-search.
\begin{marcodo}
  We report hereafter some useful definitions.
\begin{definition}[Conflict Sequence \cite{DBLP:journals/ai/LiPZVR20}]
  Given an \LTLf formula $\varphi$, a \emph{conflict sequence}
  $\mathcal{C}$ for the transition system $T_\varphi$ is a finite
  sequence of set\cdf{s} of states such that:
  \begin{itemize}
  \item The initial state $s_0 = \{\varphi\}$ is in $\mathcal{C}[i]$
    for $0 \le i < |\mathcal{C}|$;
  \item Every state in $\mathcal{C}[0]$ is not a final state;
  \item For every state $s \in \mathcal{C}[i+1]$
    ($0 \le i < |\mathcal{C}|-1$), all the one-transition next
    states of $s$ are included in $\mathcal{C}[i]$.
  \end{itemize}
  We call each $\mathcal{C}[i]$ a \emph{frame}, and $i$ is the
  \emph{frame level}.
\end{definition}
For a given conflict sequence $\mathcal{C}$, the set $\bigcap_{0 \le j < i} \mathcal{C}[j]$ (for $0 \le i < |\mathcal{C}|$) represents a set of states that cannot reach a final state of $T_\varphi$ in up to $i$ steps.
\begin{theorem}[\cite{DBLP:journals/ai/LiPZVR20}]\label{th6:lin}
  The \LTLf formula $\varphi$ is unsatisfiable iff there is a conflict
  sequence $\mathcal{C}$ and an $i \ge 0$ such that $\bigcap_{0 \le j < i} \mathcal{C}[j] \subseteq \mathcal{C}[i+1]$.
\end{theorem}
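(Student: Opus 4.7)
The plan is to interpret a conflict sequence as a certificate that the initial state $s_0 = \{\varphi\}$ cannot reach any final state of $T_\varphi$, and then to invoke Theorem~\ref{th:vardirozier}. The core semantic fact, stated informally in the paper just before the theorem, is that $\bigcap_{0 \le j < i} \mathcal{C}[j]$ is a set of states from which no final state is reachable within $i$ transitions; the inequality in the theorem is then precisely the stabilization condition of this monotone construction.

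First, I would prove by induction on $i$ that any state in $\bigcap_{0 \le j < i} \mathcal{C}[j]$ is unable to reach a final state of $T_\varphi$ within $i$ transitions. The base case $i=1$ is immediate from the second clause of the definition of a conflict sequence, which excludes final states from $\mathcal{C}[0]$. The inductive step uses the third clause: if $s \in \mathcal{C}[i+1]$ then every one-step successor of $s$ is in $\mathcal{C}[i]$; combined with the hypothesis $s \in \bigcap_{0 \le j \le i} \mathcal{C}[j]$ and the inductive hypothesis applied to successors, no path of length $i+1$ from $s$ can reach a final state.

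For the ($\Leftarrow$) direction, assume the fixed-point inclusion $\bigcap_{0 \le j < i} \mathcal{C}[j] \subseteq \mathcal{C}[i+1]$. Set $I = \bigcap_{0 \le j \le i+1} \mathcal{C}[j]$. I would show that $I$ is closed under the transition relation of $T_\varphi$: by the third clause every successor of a state in $\mathcal{C}[k+1]$ lies in $\mathcal{C}[k]$, so successors of states in $I$ all lie in $\bigcap_{0 \le j \le i} \mathcal{C}[j]$; the assumed inclusion then lifts this back into $\mathcal{C}[i+1]$ (and hence into $I$). Since $s_0 \in I$ by the first clause, no trajectory of $T_\varphi$ starting from $s_0$ ever leaves $I \subseteq \mathcal{C}[0]$, so no final state is ever reached, and Theorem~\ref{th:vardirozier} gives that $\varphi$ is unsatisfiable.

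For the ($\Rightarrow$) direction, assume $\varphi$ is unsatisfiable, hence by Theorem~\ref{th:vardirozier} no final state of $T_\varphi$ is reachable from $s_0$. Define $\mathcal{C}[k]$ to be the set of all states of $T_\varphi$ from which no final state can be reached in at most $k$ transitions. The three clauses of the definition of conflict sequence are immediate ($s_0$ belongs to every $\mathcal{C}[k]$ by assumption, $\mathcal{C}[0]$ contains only non-final states, and the backward closure under successors holds by construction). Because $T_\varphi$ has only finitely many states (its state space is built from Boolean formulas over the atoms of $\xnf{\varphi}$), the monotone chain $\mathcal{C}[0] \supseteq \mathcal{C}[1] \supseteq \cdots$ must stabilize at some index $i^\ast$, and at stabilization the required inclusion $\bigcap_{0 \le j < i^\ast} \mathcal{C}[j] \subseteq \mathcal{C}[i^\ast+1]$ holds. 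The main obstacle I anticipate is the bookkeeping in the first inductive step: one must thread together membership of a state in a single frame $\mathcal{C}[k+1]$ with membership in the lower frames so that successors inherit enough structure to keep the inductive hypothesis alive; the remainder is a standard finiteness/fixed-point argument combined with Theorem~\ref{th:vardirozier}.
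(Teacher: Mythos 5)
This statement is imported verbatim from \citet{DBLP:journals/ai/LiPZVR20}: the paper under review gives no proof of it and explicitly defers to that reference for the construction of $T_\varphi$ and the correctness of the associated algorithm. So there is no in-paper proof to compare against; what you have produced is a self-contained reconstruction, and it is essentially sound. Your ($\Leftarrow$) direction is the right argument: the set $I = \bigcap_{0 \le j \le i+1} \mathcal{C}[j]$ is an inductive invariant (closed under transitions via the third clause plus the assumed inclusion, containing $s_0$ via the first clause, and disjoint from the final states via the second), so no final state is reachable and Theorem~\ref{th:vardirozier} applies. Your ($\Rightarrow$) direction — define $\mathcal{C}[k]$ semantically as the states that cannot reach a final state in at most $k$ steps and let the decreasing chain stabilize over the finite state space — is also correct, provided you pick the index so that the inclusion $\bigcap_{0 \le j < i^\ast}\mathcal{C}[j] = \mathcal{C}[i^\ast - 1] \subseteq \mathcal{C}[i^\ast+1]$ actually follows from stabilization (take $i^\ast = k+1$ where $\mathcal{C}[k]=\mathcal{C}[k+1]$; one must then also observe that equality propagates to $\mathcal{C}[k+2]$, which it does for the semantic definition). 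Two small blemishes: first, the base case of your preliminary induction is off by one — membership in $\mathcal{C}[0]$ only guarantees the state is not itself final (unreachability of a final state in $0$ steps), not unreachability \emph{within one transition}; the paper's own informal remark preceding the theorem has the same off-by-one, and since neither direction of your main argument actually uses this preliminary claim, nothing breaks. Second, you silently read Theorem~\ref{th:vardirozier}'s ``there is a final state in $T_\varphi$'' as ``a final state is reachable from the initial state $\{\varphi\}$''; that is the intended meaning in \citet{DBLP:journals/ai/LiPZVR20} and is needed for your argument, but it is worth making explicit since the paper's restatement omits the reachability qualifier.
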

We refer the reader to~\cite{DBLP:journals/ai/LiPZVR20} for further
details about the construction of $T_\varphi$, for the SAT based
algorithm to check for the existence of a final state in $T_\varphi$,
and for the correctness and termination of such algorithm.
\end{marcodo}

\subsubsection{Symbolic Approaches to Check Language Emptiness for \LTL}
\label{sssec:symbleltl}
The standard symbolic approaches to check language emptiness for a
given \LTL formula $\varphi$~\cite{DBLP:journals/fmsd/ClarkeGH97}
consists in
\begin{enumerate*}[label=(\roman*)]
\item building a Symbolic Non-Deterministic B\"uchi automaton
  for the formula $\varphi$;
\item compute on this automaton the set of
  fair states; and
\item intersect it with the set of initial states.
\end{enumerate*}
The resulting set, denoted with $\den{\varphi}$, is a propositional
formula whose models represent all states that are the initial state
of some infinite trace that accepts $\varphi$.
\begin{marcodo}
  More precisely, let $\mathcal{M_\varphi}$ be a symbolic fair transition system over a set of Boolean variables $\mathcal{V}_\varphi$ that encodes the formula $\varphi$ as discussed for instance in~\cite{DBLP:journals/fmsd/ClarkeGH97}. In this setting, $\mathcal{V}_\varphi = \mathcal{V} \cup \mathcal{V}_{B(\varphi)}$ contains all the propositional variables $\mathcal{V}$ and the Boolean variables $\mathcal{V}_{B(\varphi)}$ (such that $\mathcal{V}_{B(\varphi)} \cap \mathcal{V} = \emptyset$) needed to encode a symbolic fair transition system representing the B\"uchi automaton for $\varphi$.\footnote{\mr{We refer the reader to~\cite{DBLP:journals/fmsd/ClarkeGH97} for
\begin{enumerate*}[label=(\roman*)]
\item the formal definition of symbolic fair transition system and
\item the details on a construction of a symbolic fair transition system $\mathcal{M_\varphi}$ for a given \LTL formula $\varphi$.
\end{enumerate*}}}
Let $\den{\varphi}$ be a set of states of such symbolic fair transition system such that:

\begin{enumerate}[label=(A\arabic*),ref=(A\arabic*),itemsep=0pt,topsep=0pt]
    \item All states in $\den{\varphi}$ are the starting point of some path accepting $\varphi$;\label{ass:ASS1}
    \item All words accepted by $\varphi$ are accepted by some path starting from $\den{\varphi}$.\label{ass:ASS2}
\end{enumerate}
\end{marcodo}

We remark that, this approach is suitable both for BDD based and for SAT based approaches to \LTL satisfiability.

\subsubsection{Temporal Resolution Approaches for \LTL Satisfiability}

\LTL satisfiability can also be addressed with temporal
resolution~\cite{DBLP:conf/ijcai/Fisher91,DBLP:journals/tocl/FisherDP01}.
Temporal resolution extends classical propositional resolution with
specific inference rules for each temporal operator.
Temporal resolution has been implemented in 
solvers like e.g. \trppp~\cite{DBLP:conf/cade/HustadtK03}
showing effectiveness in analyzing unsatisfiable \LTL
formulas~\cite{DBLP:conf/atva/SchuppanD11}.
We refer the reader to~\cite{DBLP:conf/ijcai/Fisher91,
  DBLP:journals/tocl/FisherDP01, DBLP:conf/cade/HustadtK03} for
further details.
We remark that, in \cite{DBLP:journals/acta/Schuppan16}, it was showed
how the temporal resolution proof graph constructed to prove
unsatisfiability of an \LTL formula without past temporal operators
could be used to compute a minimal unsatisfiable core for the
respective \LTL formula.

\section{Extracting unsat cores for \LTLf}
\label{sec:algorithms}
\sloppypar


We present here how four complementary state-of-the-art algorithms can be leveraged to extract unsatisfiable cores for a given set of \LTLf formulas, \cg{following two different pathways. The first pathway comprises algorithms that extend approaches originally developed for \LTL, either relying on satisfiability checking or on temporal resolution; the second pathway instead extends \mr{a reference approach} developed for \LTLf in a native manner.}

\mr{
\subsection{Preliminary results}
}

\mr{This section presents three results that enable the use of the
  different frameworks we will adopt in the two different pathways for
  \LTLf unsat core extraction:
  \begin{enumerate*}[label=(\roman*)]
  \item the extension of the translation function $\ftol{\varphi}$
    presented in Section~\ref{sec:background} to handle \LTLf past
    temporal operators;
  \item a translation that allows to transform any \LTLf formula with
    past temporal operators in an equi-satisfiable one with only
    future temporal operators;
  \item the use of an activation variable associated to each \LTLf
    formula in $\Gamma$ to extract unsatisfiable cores from existing
    frameworks for \LTL/\LTLf satisfiability frameworks.
  \end{enumerate*}
  The first enables the use of any framework for \LTL
  satisfiability checking that supports both past and future temporal
  operators. The second enables the use of any framework for
  \LTL/\LTLf satisfiability checking that supports only future
  temporal operators.  Finally, the latter enables for obtaining the
  unsatisfiable cores of $\Gamma$ leveraging existing \LTL/\LTLf
  satisfiability frameworks by building an equi-satisfiable formula
  with these activation variables and looking at the activation
  variables that will make such equi-satisfiable formula
  unsatisfiable.}

\paragraph{Extending \ftol{} to handle past temporal operators.}
\mr{We make the following observation: the semantics for past temporal
  operators over finite traces coincides with the respective semantics
  on infinite traces (it refers to the prefix of the path). Thus, we
  can extend the $\ftol{\varphi}$ encoding to handle \LTLf past
  temporal operators as follows:}
\begin{displaymath}
 \begin{array}{r@{\;\;\mapsto\;\;}l r@{\;\;\mapsto\;\;}l}
  \ftol{\LPre \varphi} & \LPre (\ftol{\varphi}) &
  \ftol{\LWpre \varphi} & \LWpre (\ftol{\varphi})\\
  \ftol{\LOnce \varphi} & \LOnce (\ftol{\varphi}) &
  \ftol{\LHistorically \varphi} & \LHistorically (\ftol{\varphi})\\[-0.2em]
 \end{array}
\end{displaymath}
\begin{displaymath}
\begin{array}{r@{\;\;\mapsto\;\;}l}
\ftol{\varphi_1 \LSince \varphi_2}   & \ftol{\varphi_1} \LSince \ftol{\varphi_2}\\
\ftol{\varphi_1 \LTrigger \varphi_2} & \ftol{\varphi_1} \LTrigger \ftol{\varphi_2}\\
 \end{array}
\end{displaymath}
Basically, for past temporal operators the encoding is propagated
recursively to the sub-formulas without modifications on the past
operator itself.
This extension together with Theorem~\ref{th:vardidegiacomo} allows us
to prove the following corollary.
\begin{corollary}\label{th:ltlf2ltlwithpast}
  Any \LTLf formula $\varphi$ is satisfiable iff the \LTL formula
  \begin{equation}
    \label{eq:ltlf2ltl2}
    \LEventually \lend \wedge \LGlobally (\lend \rightarrow \LNext \lend) \wedge \ftol{\varphi}
  \end{equation}
  is satisfiable.
\end{corollary}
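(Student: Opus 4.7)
My plan is to extend the structural induction underlying Theorem~\ref{th:vardidegiacomo} with cases for the past temporal operators. Thanks to the equivalences recalled in the authors' remark, $\LTrigger$ and $\LWpre$ reduce to the duals of $\LSince$ and $\LPre$, so the real work lies in the four cases $\LPre$, $\LOnce$, $\LHistorically$, and $\LSince$. The key auxiliary step I would isolate is a trace correspondence lemma: given a finite trace $\pi = \mu_0, \ldots, \mu_{n-1}$ over $\PVarSet$, construct an infinite trace $\pi^\star$ over $\PVarSet \cup \{\lend\}$ by taking $\pi^\star[i] = \mu_i$ (with $\lend$ false) for $i < n$, and $\pi^\star[i]$ an arbitrary but fixed state with $\lend$ true for $i \geq n$. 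The lemma to prove by induction on $\varphi$ is that, for every $0 \leq i < n$, $\pi, i \models \varphi$ in \LTLf iff $\pi^\star, i \models \ftol{\varphi}$ in \LTL.

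Granted the lemma, the corollary follows as in Theorem~\ref{th:vardidegiacomo}: for the ($\Rightarrow$) direction, $\pi^\star$ satisfies $\ftol{\varphi}$ at position~$0$ by the lemma and satisfies $\LEventually \lend$ and $\LGlobally(\lend \rightarrow \LNext \lend)$ by construction; for the ($\Leftarrow$) direction, any \LTL model $\pi'$ of $\ltlftoltl{\varphi}$ has a smallest position $n$ where $\lend$ first holds, and truncating $\pi'$ at length $n$ yields a finite trace that models $\varphi$ via the converse direction of the lemma. The new inductive cases are essentially immediate because past operators quantify only over indices $j \leq i < n$, all of which fall in the shared prefix where $\pi$ and $\pi^\star$ agree on $\PVarSet$. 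For $\LPre$, $\pi, i \models \LPre \varphi$ iff $i \geq 1$ and $\pi, i-1 \models \varphi$; applying the inductive hypothesis at $i-1 < n$ gives $\pi^\star, i-1 \models \ftol{\varphi}$, matching $\pi^\star, i \models \LPre \ftol{\varphi}$ exactly since the \LTL semantics of $\LPre$ also demands $i \geq 1$. The cases $\LOnce$, $\LHistorically$, and $\LSince$ proceed by analogous unfolding.

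The main delicacy I anticipate lies in revisiting the future-operator cases of the original induction once past subformulas are admitted. When evaluating, for instance, $\ftol{\LGlobally \psi} = \LGlobally(\ftol{\psi} \vee \lend)$ at positions $i \geq n$ in $\pi^\star$, one must check that past subformulas inside $\psi$ are handled consistently despite the padded suffix. This is handled uniformly by the $\lend$ guards in the future translation clauses: for $i \geq n$ the disjunct $\lend$ makes the obligation trivially true, and for $i < n$ the inductive hypothesis applies directly since every past subformula looks strictly backward into the shared prefix. A final sanity check is that $\ftol{}$ acts as the identity on the outermost past operator while recursing into its arguments, which is precisely what aligns the two semantics. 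I would then close the proof by reiterating the satisfiability-preserving argument of Theorem~\ref{th:vardidegiacomo}.
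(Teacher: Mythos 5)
Your proposal is correct and follows essentially the same route as the paper: the authors justify the corollary with the single observation that past operators only refer to the prefix of the trace, where the finite trace and its $\lend$-padded infinite extension agree, so $\ftol{}$ can be propagated through past operators unchanged and the argument of Theorem~\ref{th:vardidegiacomo} carries over. Your trace-correspondence lemma and the discussion of the $\lend$ guards simply make explicit the induction that the paper leaves implicit.
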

This corollary enables the use of any framework for \LTL
satisfiability checking that supports both past and future temporal
operators.

\paragraph{Removing past temporal operators.}
Given an \LTLf formula with past operators, we \mr{can} build an
equi-satisfiable \mr{\LTLf} formula over only future operators using
the function $\ltlptoltl{\varphi,\emptyset} = \pair{\varphi',\Upsilon}$ that takes an \LTLf formula with past operators, and builds a new formula
$\varphi'$ and a set of formulas $\Upsilon$ as follows:

\noindent
\begin{align*}
	\ltlptoltl{x,\Upsilon} \mapsto & \pair{x,\Upsilon}\\
  	\ltlptoltl{\sim \varphi,\Upsilon} \mapsto & \pair{\sim \varphi', \Upsilon'} \textrm{ where } 		\pair{\varphi',\Upsilon'} = \ltlptoltl{\varphi, \Upsilon}\\
 		& \textrm{and } \sim \in \{\neg, \LNext, \LWnext, \LEventually, \LGlobally\}\\
  	  \ltlptoltl{\varphi_1 \oplus \varphi_2,\Upsilon} \mapsto & \pair{\varphi_1' \oplus \varphi_2', \Upsilon'} \textrm{ where } \pair{\varphi_1',\Upsilon_1} = \ltlptoltl{\varphi_1, \Upsilon}, \\
   		& \pair{\varphi_2',\Upsilon_2} = \ltlptoltl{\varphi_2, \Upsilon}, \Upsilon' = \Upsilon_1 \cup \Upsilon_2, \textrm{ and}\\
        & \oplus \in \{\wedge, \vee, \LUntil. \LRelease\}\\
	\ltlptoltl{\LWpre \varphi,\Upsilon} \mapsto & \ltlptoltl{\neg \LPre \neg \varphi,\Upsilon}\\
 	\ltlptoltl{\LPre \varphi,\Upsilon} \mapsto & \pair{(\LPre \varphi)^p, \Upsilon''} \textrm{ where } 		\pair{\varphi',\Upsilon'} = \ltlptoltl{\varphi, \Upsilon},\\
        &  \Upsilon''= \Upsilon' \cup \{\neg (\LPre \varphi)^p, \LGlobally (\LNext (\LPre \varphi)^p \leftrightarrow \varphi')\}\\
	\ltlptoltl{\varphi_1 \LSince \varphi_2, \Upsilon} \mapsto & \pair{\varphi_2' \vee (\mr{\varphi_1'} \wedge (\varphi_1 \LSince \varphi_2)^p), \Upsilon'} \textrm{ where } \\
        & \pair{\varphi_1',\Upsilon_1} = \ltlptoltl{\varphi_1, \Upsilon}, \pair{\varphi_2',\Upsilon_2} = \ltlptoltl{\varphi_2, \Upsilon},\\
        & \Upsilon'= \Upsilon_1 \cup \Upsilon_2 \cup \{\neg (\varphi_1 \LSince \varphi_2)^p\} \cup\\
        & \{\LGlobally (\LNext (\varphi_1 \LSince \varphi_2)^p \leftrightarrow (\varphi_2 \vee (\varphi_1 \wedge (\varphi_1 \LSince \varphi_2)^p)))\}\\
 \ltlptoltl{\varphi_1 \LTrigger \varphi_2, \Upsilon} \mapsto &  \mr{\ltlptoltl{\neg(\neg \varphi_1 \LSince \neg \varphi_2), \Upsilon}}
\end{align*}

Intuitively, $\ltlptoltl{\varphi}$ recursively replaces each
sub-formula of $\varphi$ with a past temporal operator with a new
fresh propositional variable, and accumulates in $\Upsilon$ formulas
capturing the \mr{semantics} of the substituted past temporal sub-formulas
(e.g., a kind of monitor). In light of this translation, the
following theorem follows.

\begin{theorem}\label{th:ltlfp2ltlff}
  Any \LTLf formula $\varphi$ is satisfiable if and only if the \LTLf
  formula $\phi' \wedge \bigwedge_{\rho \in \Upsilon}\rho$, where
  $\pair{\phi',\Upsilon} = \ltlptoltl{\varphi,\emptyset}$, is satisfiable.
\end{theorem}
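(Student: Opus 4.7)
The plan is to proceed by structural induction on the \LTLf formula $\varphi$, carrying along a simultaneous invariant about the translated formula $\varphi'$ and the accumulated monitor set $\Upsilon$. Concretely, I would prove the following strengthened statement: for every sub-formula $\psi$ of $\varphi$ with translation $\pair{\psi',\Upsilon_\psi} = \ltlptoltl{\psi,\emptyset}$, and for every finite trace $\pi$, there is an extension $\pi^{*}$ of $\pi$ to the fresh propositional variables introduced by the translation such that $\pi^{*},i \models \psi'$ and $\pi^{*},0 \models \bigwedge_{\rho \in \Upsilon_\psi}\rho$ iff $\pi, i \models \psi$. Once this is established, the theorem follows by taking $\psi = \varphi$ and $i=0$.

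For the base case ($\psi = x$) and for the Boolean, future-temporal, and until/release cases, the translation is homomorphic, so the induction step is immediate from the inductive hypotheses on the subformulas and from the fact that $\Upsilon$ only accumulates constraints from the recursive calls. The interesting cases are the past operators. For $\psi = \LPre \chi$, the translation introduces a fresh propositional variable $(\LPre \chi)^p$ and appends to $\Upsilon$ the monitor constraints $\neg (\LPre \chi)^p$ and $\LGlobally (\LNext (\LPre \chi)^p \leftrightarrow \chi')$. I would argue that these constraints together force $(\LPre \chi)^p$ to be true at position $i$ exactly when $\chi$ is true at position $i-1$: the first conjunct handles $i=0$ (matching the semantics $\pi,0 \not\models \LPre \chi$), and the second propagates $\chi' \leftrightarrow (\LPre \chi)^p$ one step forward along the trace, which by the inductive hypothesis on $\chi$ means $\chi$ at position $i$ corresponds to $(\LPre \chi)^p$ at position $i+1$. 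The case $\psi = \varphi_1 \LSince \varphi_2$ is analogous, exploiting the standard fixpoint expansion $\varphi_1 \LSince \varphi_2 \equiv \varphi_2 \vee (\varphi_1 \wedge \LPre(\varphi_1 \LSince \varphi_2))$ that is implicitly encoded in the translation rule. The cases for $\LWpre$ and $\LTrigger$ reduce to the previous ones via the equivalences stated in the Remark, $\LWpre \varphi \leftrightarrow \neg \LPre \neg \varphi$ and $\varphi_1 \LTrigger \varphi_2 \leftrightarrow \neg(\neg \varphi_1 \LSince \neg \varphi_2)$, so their correctness is inherited. The $\LOnce$ and $\LHistorically$ cases follow similarly, either by reduction (e.g.\ $\LOnce \varphi \equiv \top \LSince \varphi$) or by adding analogous monitor constraints.

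For the concrete equisatisfiability argument, the ``only if'' direction takes a finite trace $\pi$ satisfying $\varphi$ and builds $\pi^{*}$ by setting each fresh variable $(\psi)^p$ at each position $i$ to $\mathtt{true}$ iff $\pi, i \models \psi$. By induction on the translation, $\pi^{*}$ satisfies every monitor in $\Upsilon$ and agrees with $\pi$ on the original variables, so $\pi^{*},0 \models \phi' \wedge \bigwedge_{\rho\in\Upsilon}\rho$. The ``if'' direction takes a trace $\pi^{*}$ satisfying $\phi' \wedge \bigwedge_{\rho\in\Upsilon}\rho$ and proves by induction that the monitor constraints force each fresh variable $(\psi)^p$ to coincide with the actual truth value of $\psi$ at every position; projecting $\pi^{*}$ onto $\mathcal{V}$ then yields a model of $\varphi$.

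The main obstacle will be the bookkeeping required for nested past operators, because after translating the inner subformulas $\varphi_1, \varphi_2$ of a past operator, the recursively produced $\varphi_1', \varphi_2'$ already refer to fresh variables whose intended meaning is only enforced by the corresponding monitors in $\Upsilon_1, \Upsilon_2$. This is why the invariant must be about $\psi'$ and $\Upsilon_\psi$ together rather than separately, and why I would take care to state the induction hypothesis as a biconditional parametric in the position $i$. A secondary subtlety is that the monitor constraints of the form $\neg (\psi)^p$ are evaluated at position $0$ via $\pi,0 \models \rho$, so I would check explicitly that the translation indeed enforces the correct initial value of every fresh variable, including those introduced deep inside nested past subformulas.
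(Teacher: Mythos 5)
Your proposal matches the paper's own proof in essence: both extend a satisfying trace with fresh variables whose values track the truth of the past subformulas, and both argue that the monitor constraints ($\neg(\LPre\varphi)^p$ at the initial position plus the $\LGlobally(\LNext(\cdot)\leftrightarrow\cdot)$ propagation) force exactly this correspondence, treating only $\LPre$ and $\LSince$ explicitly and dispatching $\LWpre$ and $\LTrigger$ via the stated equivalences. The paper presents this as a case analysis on the two interesting operators, whereas you make the structural induction and its invariant (the joint statement about $\psi'$ and $\Upsilon_\psi$, needed for nested past operators) explicit --- a tightening of the same argument rather than a different route.
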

\begin{marcodo}
\begin{proof}
The proof is by cases on the structure of the formula. We consider only the $\LPre$ and $\LSince$ past temporal operators since in all the other cases, the $\ltlptoltl{}$ preserves the formula and/or rewrites it leveraging on equivalences of temporal operators w.r.t. these two past operators.
\begin{itemize}
\item $\LPre \varphi$. \\
  $\Longrightarrow$ Let's assume that there exists a path $\pi$ such that $\pi,i \models \LPre \varphi$ and $i \ge 1$ (i.e. such that $\pi,i-1 \models \varphi$). We can construct a new path $\pi'$ extending the path $\pi$ to consider a new fresh variable $(\LPre \varphi)^p$ such that for $\pi'[0] \models \neg (\LPre \varphi)^p$ and $\forall i \ge 1.\, \pi'[i] \models (\LPre \varphi)^p$ iff $\pi',i-1 \models \varphi$. Thus, $\pi' \models \neg (\LPre \varphi)^p \wedge \LGlobally (\LNext (\LPre \varphi)^p \leftrightarrow \varphi)$, and is such that at $i\ge 1$ it holds $\pi'[i] \models (\LPre \varphi)^p$ by construction, thus $\pi',i \models (\LPre \varphi)^p$.

  $\Longleftarrow$ Let's assume that there exists a path $\pi$ such that $\pi \models \neg (\LPre \varphi)^p \wedge \LGlobally (\LNext (\LPre \varphi)^p \leftrightarrow \varphi)$ and there exists an $i\ge 1$ such that $\pi,i \models (\LPre \varphi)^p$. This path will be such that
    $\pi[0] \models \neg (\LPre \varphi)^p$ and $\forall i.\, i \ge 1. \, \pi[i] \models (\LPre \varphi)^p$ iff $\pi,i-1 \models \varphi$. Thus, $\pi,i \models \LPre \varphi$.

\item $\varphi_1 \LSince \varphi_2$\\
  $\Longrightarrow$ Let's assume that there exists a path $\pi$ such $\pi,i \models \varphi_1 \LSince \varphi_2$. This path is such that $\exists k \textrm{ with } 0 \le k \le i$ \textrm{ such that } $\pi, k \models \varphi_2 \textrm{ and } \forall j \textrm{ with } k < j \le i$ it holds that
$\pi, j \models \varphi_1$. We can build a new path $\pi'$ extending the path $\pi$ to consider a new fresh variable $(\varphi_1 \LSince \varphi_2)^p$ such that $\pi'[0] \models \neg (\varphi_1 \LSince \varphi_2)^p$, and  $\forall i \ge 1. \pi'[i] \models (\varphi_1 \LSince \varphi_2)^p$ iff $\pi',i-1 \models \varphi_2$ or $\pi',i-1 \models \varphi_1 \wedge (\varphi_1 \LSince \varphi_2)^p$. Thus, $\pi' \models \neg (\varphi_1 \LSince \varphi_2)^p \wedge \LGlobally (\LNext (\varphi_1 \LSince \varphi_2)^p \leftrightarrow (\varphi_2 \vee (\varphi_1 \wedge (\varphi_1 \LSince \varphi_2)^p)))$, and it is such that at $i \ge 0$ it holds that $\pi',i \models \varphi_2 $ or  $\pi',i \models \varphi_1 \wedge (\varphi_1 \LSince \varphi_2)^p$ by construction, and thus $\pi',i \models \varphi_2 \vee (\varphi_1 \wedge (\varphi_1 \LSince \varphi_2)^p)$.

  $\Longleftarrow$ Let's assume there is a path $\pi$ such that $\pi \models \neg (\varphi_1 \LSince \varphi_2)^p \wedge \LGlobally (\LNext (\varphi_1 \LSince \varphi_2)^p \leftrightarrow (\varphi_2 \vee (\varphi_1 \wedge (\varphi_1 \LSince \varphi_2)^p)))$ and there exists an $i$ such that $\pi,i \models (\varphi_2 \vee (\varphi_1 \wedge (\varphi_1 \LSince \varphi_2)^p))$. This path will be such that $\exists k \textrm{ with } 0 \le k \le i$ \textrm{ such that } $\pi, k \models \varphi_2 \textrm{ and } \forall j \textrm{ with } k < j \le i$ it holds that $\pi, j \models \varphi_1$, thus $\pi,i \models \varphi_1 \LSince \varphi_2$.
\end{itemize}
\end{proof}
\end{marcodo}

\mr{This result enables the use of any framework for \LTL/\LTLf satisfiability checking that does not support past temporal operators.}

\paragraph{Activation variables.}
To compute the unsatisfiable core for a given set
$\Gamma = \{\varphi_1, ..., \varphi_N\}$ of \LTLf formulas we proceed
as follows. For each \LTLf formula $\varphi_i \in \Gamma$ we introduce
an \emph{activation} variable $A_i$, i.e., a fresh propositional
variable $A_i \not\in \PVarSet$. We then define the \LTLf formula
$\Psi = \bigwedge_{i} (A_i \rightarrow \varphi_i)$.
Let $A = \{A_1, ..., A_N\}$ be the set of activation variables, thus
the formula $\Psi$ is over $\PVarSet \cup A$.

We make the following observation: the satisfiability of $\Gamma$ is
conditioned by the activation variables $A$, and we have the following
theorems.

\begin{theorem}\label{th:unsatwithactivation}
  Let $\Gamma = \{\varphi_1, ..., \varphi_N\}$ be a set of \LTLf
  formulas over $\PVarSet$, $A = \{A_1, ..., A_N\}$ be a set of
  propositional variables such that $A \cap \PVarSet =
  \emptyset$\begin{marcodo}, and $\Psi = \bigwedge_{i} (A_i \rightarrow \varphi_i)$. $\Gamma$ is unsatisfiable if and only if
  $\Psi \wedge \bigwedge_{A_i \in A} A_i$ is unsatisfiable.
   \end{marcodo}
\end{theorem}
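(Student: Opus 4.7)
The plan is to prove both directions by a straightforward model-theoretic argument, exploiting the fact that the activation variables $A$ are disjoint from $\PVarSet$.

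For the ($\Rightarrow$) direction I would argue by contraposition: assume $\Psi \wedge \bigwedge_{A_i \in A} A_i$ is satisfiable and show $\Gamma$ is satisfiable. Let $\pi$ be a finite trace over $\PVarSet \cup A$ with $\pi, 0 \models \Psi \wedge \bigwedge_{i} A_i$. By definition, $\pi, 0 \models A_i$ for each $i$, and $\pi, 0 \models A_i \rightarrow \varphi_i$, hence $\pi, 0 \models \varphi_i$ for every $i$. Now let $\pi'$ be the trace obtained from $\pi$ by restricting each state to its assignment on $\PVarSet$. A simple induction on the structure of $\varphi_i$ (which contains no variable in $A$) shows that $\pi', 0 \models \varphi_i$ iff $\pi, 0 \models \varphi_i$; the base case uses $A \cap \PVarSet = \emptyset$, and the inductive steps for the Boolean and temporal operators are immediate since neither the length of the trace nor the truth of atoms in $\PVarSet$ is affected by the restriction. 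Consequently $\pi', 0 \models \bigwedge_{i} \varphi_i$, i.e., $\Gamma$ is satisfiable.

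For the ($\Leftarrow$) direction I would similarly argue by contraposition: assume $\Gamma$ is satisfiable and show $\Psi \wedge \bigwedge_{A_i \in A} A_i$ is satisfiable. Let $\pi$ be a finite trace over $\PVarSet$ with $\pi, 0 \models \varphi_i$ for all $i$. Extend $\pi$ to a trace $\pi^{+}$ over $\PVarSet \cup A$ of the same length by setting $A_i$ to true in every state (the choice of value at states other than $0$ is irrelevant since $A_i$ does not appear under any temporal operator in $\Psi$, but assigning it true uniformly makes the argument cleanest). By the same structural induction as above (now in the opposite direction), $\pi^{+}, 0 \models \varphi_i$ for every $i$. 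Moreover $\pi^{+}, 0 \models A_i$ by construction, so $\pi^{+}, 0 \models A_i \rightarrow \varphi_i$ for every $i$, hence $\pi^{+}, 0 \models \Psi \wedge \bigwedge_{i} A_i$.

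The only non-routine step is the coincidence lemma for \LTLf stating that the truth of a formula depends only on the interpretation of its propositional symbols; I expect this to be the main thing to spell out, but it is a standard fact that follows by a direct induction on formula structure using the semantic clauses in Definition~\ref{def:LTLf-formulasat}, with no interaction between the future/past temporal operators and the added activation variables since those variables do not occur in any $\varphi_i$. Everything else reduces to propagating the implication $A_i \rightarrow \varphi_i$ together with $A_i$ at the initial state.
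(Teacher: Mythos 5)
Your proof is correct and follows essentially the same route as the paper's: both directions amount to converting a model of $\Psi \wedge \bigwedge_{A_i \in A} A_i$ into a model of $\bigwedge_i \varphi_i$ and vice versa (you phrase it as contraposition, the paper as contradiction, which is immaterial). The only difference is that you explicitly spell out the trace restriction/extension between $\PVarSet$ and $\PVarSet \cup A$ and the coincidence lemma justifying it, which the paper's proof leaves implicit; this makes your version slightly more rigorous but not a different argument.
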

\begin{marcodo}
\begin{proof}
$\Longrightarrow$ Let us assume $\Gamma$ unsatisfiable, this means that $\bigwedge_{\varphi_i \in \Gamma} \varphi_i$ is unsatisfiable. Let's now consider $\Psi \wedge \bigwedge_{A_i \in A} A_i$, and let us assume it is satisfiable. This means that there exists a path $\pi$ such that all $A_i \in A$ should be true in the initial state $\pi[0]$, and as a consequence also that all $A_i \rightarrow \varphi_i$ should be satisfiable by such path $\pi$, and also that the conjunction of all $\varphi_i$ should be so. However, this contradicts the hypothesis that $\Gamma$ is unsatisfiable.

\noindent $\Longleftarrow$ Let's assume $\Psi \wedge \bigwedge_{A_i \in A} A_i$ being unsatisfiable.
This means that for all subset $A' \subseteq A$ such that each $A_i \in A'$ is true and all the other variables in $A \setminus A'$ are set to false, the conjunction $\bigwedge_{A_i \in A' }\varphi_i$ is unsatisfiable.
Let's consider $\Gamma$, and let us assume $\Gamma$ is satisfiable. This means that there exists a path $\pi$ such that $\pi,0 \models \bigwedge_{\varphi_i \in \Gamma} \varphi_i$, and also that for all $\varphi_i \in \Gamma$ $\pi,0 \models \varphi_i$. This contradicts the hypothesis that $\Psi \wedge \bigwedge_{A_i \in A} A_i$ \cdf{is} unsatisfiable.
\end{proof}
\end{marcodo}

\begin{theorem}\label{th:unsatcore}
  Let $\UC$ be a subset of $A$. Then the set
  $\Phi_{\UC} = \{\varphi_i | A_i \in \UC\}$ is an unsatisfiable core for $\Psi$
  iff the formula $\Psi \wedge \bigwedge_{A_i \in \UC} A_i$ is
  unsatisfiable.
\end{theorem}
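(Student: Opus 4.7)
The plan is to derive this theorem as a direct consequence of Theorem~\ref{th:unsatwithactivation}, using the fact that the activation variables $A_j$ for $A_j \notin \UC$ are fresh and appear only inside the implication $A_j \to \varphi_j$ of $\Psi$. The overall argument reduces to showing that conjoining those extra implications to $\bigwedge_{A_i \in \UC}(A_i \to \varphi_i) \wedge \bigwedge_{A_i \in \UC} A_i$ does not affect its satisfiability.

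First, I would apply Theorem~\ref{th:unsatwithactivation} to the set $\Phi_{\UC} \subseteq \Gamma$, together with the subset of activation variables $\UC \subseteq A$ (these are still pairwise fresh w.r.t.\ $\PVarSet$). This immediately yields that $\Phi_{\UC}$ is unsatisfiable if and only if $\Psi_{\UC} \wedge \bigwedge_{A_i \in \UC} A_i$ is unsatisfiable, where $\Psi_{\UC} := \bigwedge_{A_i \in \UC}(A_i \to \varphi_i)$.

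Second, I would establish the equi-satisfiability of $\Psi \wedge \bigwedge_{A_i \in \UC} A_i$ and $\Psi_{\UC} \wedge \bigwedge_{A_i \in \UC} A_i$. The forward direction is trivial since $\Psi \to \Psi_{\UC}$ propositionally. For the reverse, given any finite trace $\pi$ satisfying $\Psi_{\UC} \wedge \bigwedge_{A_i \in \UC} A_i$, I would construct $\pi'$ by extending each state of $\pi$ with the assignment $A_j = \bot$ for every $A_j \in A \setminus \UC$. Since the $A_j$ are fresh propositional variables that do not occur in any $\varphi_i$ nor in any $A_i$ with $A_i \in \UC$, this extension preserves satisfaction of $\Psi_{\UC}$ and of $\bigwedge_{A_i \in \UC} A_i$; moreover, for each $A_j \notin \UC$ the implication $A_j \to \varphi_j$ becomes vacuously true at every state of $\pi'$, so $\pi' \models \Psi$ and hence $\pi' \models \Psi \wedge \bigwedge_{A_i \in \UC} A_i$. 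Chaining the two equivalences gives the claim, recalling that ``$\Phi_{\UC}$ is an unsatisfiable core'' here means precisely that $\Phi_{\UC} \subseteq \Gamma$ (immediate from the definition of $\Phi_{\UC}$) and $\Phi_{\UC}$ is unsatisfiable.

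The only nontrivial point is the model-extension step: one has to be careful that plugging in the values $A_j = \bot$ at every position of the trace really yields a legitimate model of the \LTLf formula $\Psi \wedge \bigwedge_{A_i \in \UC} A_i$. This is where the freshness hypothesis $A \cap \PVarSet = \emptyset$ and the disjointness of activation variables among themselves are essential; it guarantees that the truth value of the sub-formulas $\varphi_i$ and $A_i$ (for $A_i \in \UC$) along $\pi'$ coincides with their truth value along $\pi$, so no temporal sub-formula is disturbed by the extension. Once this is spelled out, both directions of the biconditional follow immediately.
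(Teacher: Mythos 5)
Your proposal is correct. The paper itself dispenses with this theorem in one line (``the proof is analogous to the proof of Theorem~\ref{th:unsatwithactivation}''), i.e.\ it implicitly re-runs the direct contradiction argument of Theorem~\ref{th:unsatwithactivation} with $\UC$ in place of $A$. You instead organize the proof as a formal reduction \emph{to} Theorem~\ref{th:unsatwithactivation} applied to the sub-collection $\Phi_{\UC}$ with activation set $\UC$, bridged by an equi-satisfiability claim between $\Psi \wedge \bigwedge_{A_i \in \UC} A_i$ and $\Psi_{\UC} \wedge \bigwedge_{A_i \in \UC} A_i$. The two routes rest on the same ideas, but yours makes explicit the one step that ``analogous'' quietly glosses over and that is genuinely needed for the right-to-left direction: a model of the restricted formula must be extended to a model of the full $\Psi$ by assigning $\bot$ to every $A_j \in A \setminus \UC$, so that the remaining implications $A_j \rightarrow \varphi_j$ become vacuous, and this extension is harmless only because the activation variables are fresh with respect to $\PVarSet$ and to one another. (Strictly, only the truth of $\Psi$ at position $0$ matters, so falsifying the $A_j$ at every position is more than is needed, but it is the cleanest choice.) The one cosmetic caveat is the statement's phrase ``unsatisfiable core for $\Psi$'', which per the paper's own definition should be read as ``for $\Gamma$''; your reading of it as ``$\Phi_{\UC} \subseteq \Gamma$ and $\Phi_{\UC}$ unsatisfiable'' is the intended one.
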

\begin{marcodo}
\begin{proof}
The proof is analogous to the proof of Theorem~\ref{th:unsatwithactivation}.
\end{proof}
\end{marcodo}

This theorem allows us to obtain the unsatisfiable cores (\UCs) of
$\Gamma$ by looking at the activation variables that will make $\Psi$
unsatisfiable.

\subsection{\LTLf Unsatisfiable Core Extraction via Reduction to \LTL}
\label{sec:reduction-to-LTL}

\cg{This section provides details of how we compute \LTLf unsat core extraction via reduction to \LTL satisfiability checking over infinite traces and via \LTL temporal resolution. The first two algorithms we present leverage two different state-of-the-art techniques for \LTL satisfiability checking, namely,
\begin{enumerate*}[label=(\roman*)]
\item Binary Decision Diagrams \mr{(BDDs)}~\cite{DBLP:journals/csur/Bryant92} approaches such as e.g.,~\cite{DBLP:journals/fmsd/ClarkeGH97}; and
\item SAT based approaches such as e.g.,~\cite{DBLP:journals/lmcs/BiereHJLS06}.
\end{enumerate*} The third algorithm instead is based on \mr{temporal resolution for} \LTL~\cite{DBLP:conf/cade/HustadtK03,DBLP:journals/acta/Schuppan16}, extended to support past temporal operators.
}

\mr{ We leverage Theorem~\ref{th:unsatcore} to obtain the
  unsatisfiable cores (\UCs) of $\Gamma = \{\varphi_1,...,\varphi_N\}$
  by looking at the activation variables $A = \{A_1, ..., A_N\}$, with
  $A \cap \PVarSet = \emptyset$, that will make the formula
  $\bigwedge_{i} (A_i \rightarrow \varphi_i) \wedge \bigwedge_{A_i \in
    A} A_i$ unsatisfiable.  }
\mr{In the following,} we show how to obtain \UCs using different
solving techniques.

\subsubsection{BDD based \LTLf Unsatisfiable Core Extraction}
\label{sec:bddmcltlfuc}

Given the set $\Gamma = \{\varphi_1, ..., \varphi_N\}$ of \LTLf
formulas, we build the formula $\Psi$ as discussed in
Theorem~\ref{th:unsatwithactivation}. Then we consider the following
\LTL formula built leveraging Corollary~\ref{th:ltlf2ltlwithpast}:
\begin{equation}
  \label{eq:bddltlf2ltlcore}
  \Psi' = \LEventually \lend \wedge \LGlobally (\lend \rightarrow \LNext \lend) \wedge \ftol{\Psi}
\end{equation}
The set $\den{\Psi'}$ resulting from applying language emptiness
algorithms on $\Psi'$ (i.e. BDDLTLSAT($\Psi'$) in Algorithm~\ref{alg:bddltlalgorithm}) is a propositional formula whose models encode
all states that are the initial state of some infinite trace that
accepts $\Psi'$, and it contains both the activation variables $A$ and
the variables $\PVarSet$ together with the variables
$\PVarSet_{\Psi'}$ needed to encode the symbolic B\"uchi automaton for
$\Psi'$.

\begin{theorem}\label{th:ltlfunsatcore}
  There exists a state $s \in \den{\Psi'}$ and a set $C \subseteq A$
  such that $s \models \bigwedge_{A_i \in C} A_i$ if and only if
  $\bigwedge_{i,A_i \in C} \varphi_i$ is satisfiable.
\end{theorem}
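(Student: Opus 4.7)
The plan is a direct two-way argument that pushes a model back and forth between the symbolic representation $\den{\Psi'}$ of accepting states of $\Psi'$ and a finite trace model of $\bigwedge_{i,A_i\in C}\varphi_i$, using Corollary~\ref{th:ltlf2ltlwithpast} as the bridge between finite \LTLf satisfaction and infinite \LTL satisfaction, and using assumptions \ref{ass:ASS1} and \ref{ass:ASS2} as the bridge between $\den{\Psi'}$ and the actual accepting runs.

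For the forward direction, suppose $s \in \den{\Psi'}$ and $s \models \bigwedge_{A_i \in C} A_i$. By \ref{ass:ASS1} there exists an infinite trace $\sigma$ starting from $s$ with $\sigma \models \Psi'$. Since $\Psi'$ enforces both $\LEventually\,\lend$ and $\LGlobally(\lend \rightarrow \LNext\,\lend)$, there is a least index $n$ with $\sigma[n] \models \lend$, and $\lend$ holds at every position $\geq n$. I take $\pi$ to be the finite prefix $\sigma[0],\dots,\sigma[n-1]$ projected to $\PVarSet \cup A$. By Corollary~\ref{th:ltlf2ltlwithpast} (applied to $\Psi$), $\pi,0 \models \Psi$, i.e.\ $\pi,0 \models \bigwedge_i (A_i \rightarrow \varphi_i)$. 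Because $A_i\notin\PVarSet$ does not appear inside any $\varphi_i$ and $s$ (hence $\pi[0]$) assigns $A_i = \top$ for every $A_i \in C$, modus ponens yields $\pi,0 \models \varphi_i$ for each $A_i\in C$. Projecting $\pi$ to $\PVarSet$ produces a finite trace witnessing satisfiability of $\bigwedge_{i, A_i \in C}\varphi_i$.

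For the backward direction, assume $\bigwedge_{i,A_i\in C}\varphi_i$ is satisfiable and let $\pi$ be a finite trace over $\PVarSet$ with $\pi,0 \models \varphi_i$ for every $A_i\in C$. I extend $\pi$ to a finite trace $\pi^{+}$ over $\PVarSet \cup A$ by setting $A_i = \top$ at state $0$ iff $A_i \in C$ (the value at later states is irrelevant since $A_i$ appears only inside the top-level conjunct $A_i\rightarrow\varphi_i$ evaluated at position $0$). Then $\pi^{+},0 \models A_i \rightarrow \varphi_i$ for every $i$ — trivially when $A_i\notin C$, and by the choice of $\pi$ when $A_i\in C$ — so $\pi^{+},0 \models \Psi$. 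By Corollary~\ref{th:ltlf2ltlwithpast}, there is an infinite trace $\sigma$ over $\PVarSet\cup A \cup\{\lend\}$ with $\sigma \models \Psi'$ whose projection agrees with $\pi^{+}$ on $[0,|\pi|-1]$. Applying \ref{ass:ASS2} to $\sigma$ gives an accepting run whose initial state $s$ lies in $\den{\Psi'}$; the values of the $A_i$ in $s$ coincide with those chosen in $\pi^{+}$, so $s \models \bigwedge_{A_i\in C} A_i$.

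The main obstacle I foresee is purely notational rather than mathematical: the states in $\den{\Psi'}$ assign values not only to $\PVarSet \cup A \cup \{\lend\}$ but also to the auxiliary Büchi variables $\PVarSet_{\Psi'}$, so the assertion $s \models \bigwedge_{A_i\in C} A_i$ must be read as a condition on the projection of $s$ to $A$. Once this is made explicit, both directions follow mechanically because the activation atoms occur only positively at the top level and are evaluated exclusively at the initial position, so their values can be freely prescribed without interfering with the satisfaction of any $\varphi_i$ or with the end-marker machinery introduced by $\ftol{\cdot}$.
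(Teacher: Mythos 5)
Your proof is correct and follows essentially the same route as the paper's: the forward direction via assumption \ref{ass:ASS1} and the backward direction via \ref{ass:ASS2}, extending traces with the activation variables in between. You are in fact somewhat more careful than the paper, both in making the finite/infinite bridge through Corollary~\ref{th:ltlf2ltlwithpast} and the end-marker explicit, and in setting $A_j=\bot$ for $A_j\notin C$ in the backward direction (which is needed for $\Psi'$ to hold and which the paper's proof leaves implicit).
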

\begin{marcodo}
\begin{proof}
$\Longrightarrow$  Suppose there exists a state $s$ in $\den{\Psi'}$ such that $s \models \bigwedge_{A_i \in C} A_i$. For \ref{ass:ASS1} there exists a path $\pi$ starting from $s$ satisfying $\Psi'$. Since $s \models A_i$ for all $A_i \in C$. Then the path $\pi$ also satisfies $\bigwedge_{i,A_i \in C} \varphi_i$.

\noindent $\Longleftarrow$ Suppose that $\bigwedge_{i,A_i \in C} \varphi_i$ is satisfiable by some word $w$ over the alphabet $2^{\mathcal{V}}$. We extend $w$ to $w'$ such that $w' \models \bigwedge_{A_i\in C} A_i$. Then $w'$ satisfies $\Psi'$. For \ref{ass:ASS2} there exists a path $\pi$ starting from $\den{\Psi'}$ satisfying $w'$. Then $\pi[0]$ satisfies $\bigwedge_{A_i\in C} A_i$.
\end{proof}
\end{marcodo}
This theorem allows for the extraction from $\den{\Psi'}$ of all possible
subsets of the implicants $\varphi_1, ..., \varphi_N$ that are
consistent or inconsistent. In particular, given the set of states
$\den{\Psi'}$ corresponding to the assignments to variables $A$,
$\PVarSet$ and $\PVarSet_{\Psi'}$, the set
$\{s \in 2^{A} | \bigwedge_{i,s \in \den{\Psi'}, s \models A_i}
\varphi_i \textrm{ is unsat}\}$ can be obtained from $\den{\Psi'}$ by
quantifying existentially (projecting) the variables corresponding to
$\PVarSet$ and $\PVarSet_{\Psi'}$ and negating (complementing) the
result.
\begin{equation}
  \label{eq:bddunsatcores}
  \UCset_\Gamma (A) = \neg (\exists \PVarSet. \exists \PVarSet_{\Psi'}. \den{\Psi'})
\end{equation}
$\UCset_\Gamma(A)$ is a propositional formula over variables in $A$ where
each satisfying assignment corresponds to an unsatisfiable core for $\Gamma$.
\begin{marcodo}
\begin{corollary}\label{cor:corollary_uc_bdd}
  $SAT_\Gamma (A) = (\exists \PVarSet. \exists \PVarSet_{\Psi'}. \den{\Psi'}) = \{s \in 2^{A}| \bigwedge_{i, s\models A_i} \varphi_i \text{ is satisfiable} \}$.
  $\UCset_\Gamma (A) = \neg (\exists \PVarSet. \exists \PVarSet_{\Psi'}. \den{\Psi'}) = \{s \in 2^{A}| \bigwedge_{i, s\models A_i} \varphi_i \text{ is unsatisfiable} \}$.
\end{corollary}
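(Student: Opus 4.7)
The plan is to derive the corollary as a direct reformulation of Theorem~\ref{th:ltlfunsatcore} using the standard correspondence between BDD-based existential quantification and projection of sets of assignments. The key observation is that $\den{\Psi'}$ is a set of assignments over $A \cup \PVarSet \cup \PVarSet_{\Psi'}$, so the existential quantification $\exists \PVarSet . \exists \PVarSet_{\Psi'} . \den{\Psi'}$ yields precisely the projection of $\den{\Psi'}$ onto the activation variables $A$, i.e., the set of assignments $s \in 2^{A}$ that can be extended to some state in $\den{\Psi'}$.

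First I would argue the inclusion $\exists \PVarSet . \exists \PVarSet_{\Psi'} . \den{\Psi'} \subseteq \{s \in 2^{A} \mid \bigwedge_{i, s \models A_i} \varphi_i \text{ is satisfiable}\}$. Take any $s \in 2^{A}$ in the projection; by definition there is a full assignment $s' \in \den{\Psi'}$ whose restriction to $A$ equals $s$. Let $C = \{A_i \in A \mid s \models A_i\}$. Then $s' \models \bigwedge_{A_i \in C} A_i$ and $s' \in \den{\Psi'}$, so by the forward direction of Theorem~\ref{th:ltlfunsatcore} the conjunction $\bigwedge_{i, A_i \in C} \varphi_i$ is satisfiable. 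For the reverse inclusion, pick $s \in 2^{A}$ such that $\bigwedge_{i, s \models A_i} \varphi_i$ is satisfiable; by the backward direction of Theorem~\ref{th:ltlfunsatcore} applied to $C = \{A_i \mid s \models A_i\}$, there exists a state $s' \in \den{\Psi'}$ with $s' \models \bigwedge_{A_i \in C} A_i$. Observe that $s'$ must additionally falsify every $A_j \notin C$, which can be enforced by refining $C$ to encode both the positive and negative literals of $s$; alternatively, one invokes Theorem~\ref{th:ltlfunsatcore} after noting that activation variables only appear in positive position in the antecedents of $\Psi$, so any satisfying trace can be freely modified to set the $A_j \notin C$ to false while preserving satisfaction of $\Psi'$. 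Hence the projection of $s'$ onto $A$ equals $s$, giving $s \in \exists \PVarSet . \exists \PVarSet_{\Psi'} . \den{\Psi'}$.

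This establishes the first equality. The second equality for $\UCset_\Gamma(A)$ then follows immediately by complementing both sides in $2^{A}$: a state $s \in 2^A$ satisfies $\neg(\exists \PVarSet . \exists \PVarSet_{\Psi'} . \den{\Psi'})$ iff it does not appear in $SAT_\Gamma(A)$, iff $\bigwedge_{i, s \models A_i} \varphi_i$ is not satisfiable, which is exactly the condition identifying an unsatisfiable core via the correspondence $\Phi_\UC = \{\varphi_i \mid A_i \in \UC\}$ of Theorem~\ref{th:unsatcore}.

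The only delicate point is the handling of activation variables $A_j \notin C$ in the reverse inclusion: one must be careful that the state $s' \in \den{\Psi'}$ supplied by Theorem~\ref{th:ltlfunsatcore} can indeed be chosen to match the full assignment $s$ on all of $A$, not just on the positive literals. This relies on the monotonic role of activation variables in $\Psi = \bigwedge_i (A_i \rightarrow \varphi_i)$: setting an $A_j$ to false only weakens the antecedent, so a witness trace for $C$ remains a witness for the full assignment $s$ that sets every $A_j \notin C$ to false. Making this monotonicity argument explicit is the main substantive step; everything else is a bookkeeping application of the quantifier/projection duality.
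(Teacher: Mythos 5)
Your proposal is correct and follows essentially the same route as the paper, which states this corollary without an explicit proof and justifies it by the immediately preceding discussion: project $\den{\Psi'}$ onto $A$ via existential quantification of $\PVarSet$ and $\PVarSet_{\Psi'}$, apply Theorem~\ref{th:ltlfunsatcore} in both directions, and complement to pass from $SAT_\Gamma(A)$ to $\UCset_\Gamma(A)$. Your explicit treatment of the delicate point --- that the witness state can be chosen to falsify every $A_j$ with $s \not\models A_j$, because $\Psi = \bigwedge_i (A_i \rightarrow \varphi_i)$ is monotonically non-increasing in the activation variables --- is the same device used implicitly in the paper's proof of Theorem~\ref{th:ltlfunsatcore} (where $w$ is ``extended'' to $w'$), so it is a faithful, slightly more careful rendering of the intended argument rather than a different one.
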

\end{marcodo}
Equation~\eqref{eq:bddunsatcores} can be easily implemented with BDDs
through the respective existential quantification and negation BDD
operations~\cite{DBLP:journals/csur/Bryant92,DBLP:conf/cav/CimattiRST07}.


\begin{algorithm}[tb]
\caption{\small BDD \LTL \UC Extraction with~\cite{DBLP:journals/fmsd/ClarkeGH97}}\small
\label{alg:bddltlalgorithm}
\textbf{Input}: $\Psi'$, $A$\\
\textbf{Output}: \mr{$UC$} or $\emptyset$
\begin{algorithmic}[1] 
  \mr{\STATE $\den{\Psi'} \gets$ \Call{BDDLTLSAT}{$\Psi'$}}
  \mr{\STATE $\UCset \gets \neg(\exists \PVarSet. \exists \PVarSet_{\Psi'}. \den{\Psi'})$}
   \mr{\IIF{($\UCset = \emptyset$)} \IRETURN{$\emptyset$} \ENDIIF}
   \mr{\STATE $\UC \gets$ \Call{PickOne}{$\UCset$}}
   \mr{\STATE \IRETURN{$\UC$}}
\end{algorithmic}
\end{algorithm}

Algorithm~\ref{alg:bddltlalgorithm} computes all the unsatisfiable
cores $\UCset$ for a set of \LTLf formulas $\Gamma$ leveraging the BDD
based approach discussed in~\cite{DBLP:journals/fmsd/ClarkeGH97}.
It takes in input a rewritten formula $\Psi'$, and it returns the
empty set ($\emptyset$) if the formula is satisfiable, otherwise
\mr{it returns an $\UC \in \UCset \subseteq 2^A$ such that
  $\forall s \in \UCset. \bigwedge_{i,A_i \in s} \varphi_i$ is
  unsatisfiable}. It uses BDDLTLSAT
algorithm~\cite{DBLP:journals/fmsd/ClarkeGH97} to compute
$\den{\Psi'}$. \mr{See Section~\ref{sssec:symbleltl} and
  \cite{DBLP:journals/fmsd/ClarkeGH97} for a more thorough discussion
  on how the check for language emptiness is performed.}

\begin{theorem}
  Algorithm~\ref{alg:bddltlalgorithm} returns $\emptyset$ if the
  set of \LTLf formulas $\Gamma$ is satisfiable, otherwise it \mr{computes}
  an $\UCset \not=\emptyset$ such that the $\forall \UC \in \UCset$ is
  $\Phi_{\UC} = \{\varphi_i | A_i \in \UC\}$ an unsatisfiable core for
  $\Gamma$\mr{, and then it returns an $\UC \in \UCset$.}
\end{theorem}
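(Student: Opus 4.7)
The plan is to establish the two cases of the theorem by relating the output of the algorithm to the semantic characterization of unsatisfiable cores provided by Theorem~\ref{th:ltlfunsatcore} and Corollary~\ref{cor:corollary_uc_bdd}. The key observation is that line~2 of Algorithm~\ref{alg:bddltlalgorithm} computes precisely the set $\UCset_\Gamma(A)$ of Equation~\eqref{eq:bddunsatcores}, so the correctness reduces to showing that (i)~$\UCset_\Gamma(A) = \emptyset$ when $\Gamma$ is satisfiable, and (ii)~$\UCset_\Gamma(A) \neq \emptyset$ with every element corresponding to an unsatisfiable core of $\Gamma$ when $\Gamma$ is unsatisfiable.

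First I would note that, by construction, $\Psi'$ is the \LTL encoding via Corollary~\ref{th:ltlf2ltlwithpast} of $\Psi = \bigwedge_i (A_i \rightarrow \varphi_i)$, and that BDDLTLSAT($\Psi'$) on line~1 yields a set $\den{\Psi'}$ satisfying assumptions~\ref{ass:ASS1} and~\ref{ass:ASS2} (from Section~\ref{sssec:symbleltl} and the guarantees of~\cite{DBLP:journals/fmsd/ClarkeGH97}). Line~2 existentially projects away $\PVarSet$ and $\PVarSet_{\Psi'}$ and negates, yielding exactly $\UCset_\Gamma(A)$. By Corollary~\ref{cor:corollary_uc_bdd}, $\UCset_\Gamma(A) = \{s \in 2^A \mid \bigwedge_{i, s\models A_i}\varphi_i \text{ is unsatisfiable}\}$. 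Both the existential abstraction and the complementation are standard BDD operations whose correctness is inherited from~\cite{DBLP:journals/csur/Bryant92,DBLP:conf/cav/CimattiRST07}; I would not dwell on verifying these.

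For case~(i), suppose $\Gamma$ is satisfiable. Then any model $\pi$ of $\bigwedge_i \varphi_i$ is also a model of $\bigwedge_{i\in S}\varphi_i$ for every $S \subseteq \{1,\dots,N\}$, so no subset of $\Gamma$ is unsatisfiable. Hence $\UCset_\Gamma(A) = \emptyset$, the test on line~3 succeeds, and the algorithm returns $\emptyset$. For case~(ii), suppose $\Gamma$ is unsatisfiable. By Theorem~\ref{th:unsatwithactivation}, $\Psi \wedge \bigwedge_{A_i\in A} A_i$ is unsatisfiable, and thus by Corollary~\ref{cor:corollary_uc_bdd} the full-activation assignment $s^\star$ with $s^\star \models A_i$ for all $i$ belongs to $\UCset_\Gamma(A)$, so $\UCset \neq \emptyset$. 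For any $\UC \in \UCset$, Corollary~\ref{cor:corollary_uc_bdd} yields that $\bigwedge_{i, A_i\in \UC}\varphi_i$ is unsatisfiable, i.e.\ $\Phi_\UC = \{\varphi_i \mid A_i \in \UC\} \subseteq \Gamma$ is unsatisfiable, which is exactly the definition of an unsatisfiable core of $\Gamma$. The call \Call{PickOne}{$\UCset$} on line~4 therefore returns a valid $\UC$.

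The only delicate step is the bridge between the BDD-level set $\den{\Psi'}$ and the semantic statement of Theorem~\ref{th:ltlfunsatcore}: we must ensure that projecting away all variables other than $A$ correctly captures ``there exists a completion to a model of $\Psi'$''. This is immediate from the semantics of existential quantification over BDDs and assumption~\ref{ass:ASS1}/\ref{ass:ASS2}, but it is the only place where one might have to be careful, since a sloppy reading could conflate ``$s$ extends to a state in $\den{\Psi'}$'' with ``$s$ extends to a model of $\Psi'$'' rather than \emph{the initial state} of such a model. Once this is properly invoked from Theorem~\ref{th:ltlfunsatcore}, the rest of the proof is a short case analysis as sketched above.
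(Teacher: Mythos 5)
Your proof is correct and follows essentially the same route as the paper's: both reduce the claim to Theorem~\ref{th:ltlfunsatcore} and Corollary~\ref{cor:corollary_uc_bdd} and then split on whether $\Gamma$ is satisfiable. If anything, your version is slightly tighter in the unsatisfiable case, where you explicitly exhibit the full-activation assignment as a witness that $\UCset \neq \emptyset$, and in the satisfiable case, where you correctly argue that \emph{every} $s \in 2^A$ lies in $SAT_\Gamma(A)$ (rather than merely that $SAT_\Gamma(A) \neq \emptyset$) to conclude $\UCset_\Gamma(A) = \emptyset$.
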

\begin{marcodo}
\begin{proof}\sloppypar
The proof is a direct consequence of Theorem~\ref{th:ltlfunsatcore} and Corollary~\ref{cor:corollary_uc_bdd}. Indeed, BDDLTLSAT($\Psi'$) computes $\den{\Psi'}$, i.e. the set of states such that are the starting point of some path satisfying $\Psi'$. If $\Gamma$ is satisfiable, it means that any of its subset will be satisfiable as well, thus any possible assignment to $A_i$ will be such that $\Psi'$ will be satisfiable, and the set $SAT_\Gamma (A) \not= \emptyset$, and thus (line 3 of Algorithm~\ref{alg:bddltlalgorithm}) $\UCset_\Gamma (A) = \emptyset$. On the other hand, if $\Gamma$ is unsatisfiable, Equation~\ref{eq:bddunsatcores} extracts the formula over variables $A_i$ such that each satisfying assignment for such formula corresponds to an unsatisfiable core for $\Psi'$, and this in turn is an unsatisfiable core for $\Gamma$.
\end{proof}
\end{marcodo}

\subsubsection{SAT based \LTLf Unsatisfiable Core Extraction}
\label{sec:satmcltlfuc}

Determining language emptiness of an \LTL formula can also be
performed leveraging any off-the-shelf SAT-based bounded model
checking technique equipped with completeness
check~\cite{DBLP:journals/lmcs/BiereHJLS06,kliveness}.
We observe that, all these approaches can be easily extended to
extract an unsatisfiable core from a conjunction of temporal
constraints leveraging the ability of propositional SAT solvers to
check the satisfiability of a propositional formula $\psi$ under a set
of assumptions specified in form of literals $L = \{l_j\}$,
 i.e., checking the satisfiability of the formula
$\psi' = \psi \wedge \bigwedge_{l_j \in L} l_j$. If $\psi'$ turns out
to be unsatisfiable, then the SAT solver can return a subset
$\UC \subseteq L$ such that $\psi \wedge \bigwedge_{l_j \in \UC} l_j$ is
still unsatisfiable.
SAT-based bounded model checking~\cite{DBLP:journals/ac/BiereCCSZ03}
encodes a finite path of length $k$ with a propositional formula over
the set of variables representing the $\PVarSet$ at each time step
from 0 to $k$. To check for completeness, they typically encode the
fact that the path cannot be extended with states not yet
visited~\cite{DBLP:journals/lmcs/BiereHJLS06}.
\mr{We remark that, in model checking one considers both a transition
  system (i.e. a model) and a temporal logic formula. However, since
  we are concerned on satisfiability of \LTL formulas only, we
  consider it the universal model (i.e. if $\PVarSet$ is the set of
  propositional variables, the initial set of states is $2^\PVarSet$,
  and the transitions relation is equal to
  $2^\PVarSet \times 2^\PVarSet$), which corresponds to
  encode symbolically both the initial set of states, and the transitions
  relation with $\top$.}

The approach proceeds as \mr{illustrated in
  Algorithm~\ref{alg:satbmcltlalgorithm} that takes in input the
  rewritten formula $\Psi'$ and computes an unsatisfiable core for a
  set of \LTLf formulas $\Gamma$ leveraging the \mr{bounded model
    checking} encoding defined
  in~\cite{DBLP:journals/lmcs/BiereHJLS06}.  \mr{It uses a
    \emph{completeness formula} $EncC(\phi,k)$ that is unsatisfiable
    iff $\phi$ is unsatisfiable, and a \emph{witness formula}
    $EncP(\phi,k)$ that is satisfiable iff the \LTL formula $\phi$ is
    satisfiable by a path of length $k$.\footnote{\mr{We refer the reader
      to \cite{DBLP:journals/lmcs/BiereHJLS06} for details on how the
      $EncP(\phi,k)$ and $EncC(\phi,k)$ propositional formulas are
      constructed.}}}  For increasing value of $k$, we} submit the SAT
solver a propositional encoding up to the considered $k$ \mr{of} a path
satisfying the formula $\Psi'$ under the assumption that all the
literals in $A$ are true in the initial time step
($A^{[0]} = \bigwedge_{A_i \in A} A_i^{[0]}$). \mr{This is achieved
  through the call \Call{SAT\_Assume}{EncC($\Psi',k$), $A^{[0]}$} that
  checks the satisfiability of EncC($\Psi',k$) under the assumptions
  \cdf{that} the literals in $A^{[0]}$ are true.}
When \mr{this call} proves the formula unsatisfiable, it is
straightforward to get the corresponding unsatisfiable core from the
SAT solver in terms of a subset of the variables in \mr{$A^{[0]}$, and
  we are done with the search}. \mr{On the other hand, if this call
  returns SAT, we cannot conclude the \LTL formula being
  unsatisfiable. In this case, we need to check if it is satisfiable,
  i.e. if there exists a lasso-shaped path of length $k$ that
  satisfies the propositional formula EncP($\Psi',k$)
  $\wedge \bigwedge_{A_i \in A} A_{i}^{[0]}$. This is achieved with
  the simple call \Call{SAT}{EncP($\Psi',k$)
    $\wedge \bigwedge_{A_i \in A} A_{i}^{[0]}$}. If such call returns
  SAT the \LTL formula is satisfiable, and we are done. Otherwise, we
  increase $k$ and we iterate.}

\mr{The Algorithm~\ref{alg:satbmcltlalgorithm} takes} in input the
rewritten formula $\Psi'$ and returns the empty set ($\emptyset$) if
the formula is satisfiable, otherwise it returns a subset
$\UC \subseteq A$ such that $\bigwedge_{i,A_i \in \UC} \varphi_i$ is
unsatisfiable.
%

\begin{algorithm}[tb]
\caption{\small SAT BMC \LTL \UC Extraction with~\cite{DBLP:journals/lmcs/BiereHJLS06}}\small
\label{alg:satbmcltlalgorithm}
\textbf{Input}: $\Psi'$, $A$\\
\textbf{Output}: $\UC$ or $\emptyset$
\begin{algorithmic}[1] 
  \STATE $k \gets 0$
  \IWHILE{(True)}
   \STATE $res,\UC \gets $ \Call{SAT\_Assume}{EncC($\Psi',k$), $A^{[0]}$}
   \IIF{($res = UNSAT$)} \IRETURN{$\UC$} \ENDIIF
   \STATE $res \gets$ \Call{SAT}{EncP($\Psi',k$) \begin{marcodo}$\wedge \bigwedge_{A_i \in A} A_{i}^{[0]}$\end{marcodo}}

   \IIF{($res = SAT$)} \IRETURN{$\emptyset$} \ENDIIF
   \STATE $k \gets k+1$
  \IENDWHILE
\end{algorithmic}
\end{algorithm}

\begin{marcodo}
\begin{lemma}\label{lemma:satbmc_lemma1}
$EncP(\Psi',k) \wedge \bigwedge_{A_i \in A} A_{i}^{[0]}$ is satisfiable iff there exists a lasso shaped witness $\pi[0,l-1] \pi[l,k]^{\omega}$ such that $\pi \models \bigwedge_{i,A_i \in A} \varphi_i$.
\end{lemma}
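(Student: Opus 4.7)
The plan is to derive the equivalence by chaining three building blocks already established in the paper: (a) the correctness of the bounded model checking witness encoding from~\cite{DBLP:journals/lmcs/BiereHJLS06}, which states that $EncP(\phi,k)$ is satisfiable iff there is a lasso-shaped path of length $k$ satisfying the LTL formula $\phi$; (b) Corollary~\ref{th:ltlf2ltlwithpast}, which converts satisfaction of $\ftol{\cdot}$-encoded LTLf formulas on infinite lasso traces into satisfaction of the original LTLf formulas on the finite prefix terminating at the first position where $\lend$ holds; and (c) Theorem~\ref{th:unsatwithactivation}/Theorem~\ref{th:unsatcore}, which characterizes the behavior of the activation-variable conjuncts inside $\Psi$.

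For the forward direction ($\Rightarrow$), I would assume that $EncP(\Psi',k)\wedge\bigwedge_{A_i\in A} A_i^{[0]}$ is satisfied by some propositional assignment. Invoking (a), this yields a lasso-shaped infinite trace $\sigma=\pi[0,l-1]\pi[l,k]^{\omega}$ over $\PVarSet\cup A\cup\{\lend\}$ (together with the auxiliary BMC bookkeeping variables, which are just used to encode the lasso) with $\sigma,0\models\Psi'$ and $\sigma[0]\models A_i$ for all $i$. Because $\Psi'=\LEventually\lend\wedge\LGlobally(\lend\rightarrow\LNext\lend)\wedge\ftol{\Psi}$, there is a smallest index $m$ at which $\lend$ holds along $\sigma$, and $\lend$ holds in every subsequent state. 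Let $\pi$ be the finite prefix of $\sigma$ of length $m$. By Corollary~\ref{th:ltlf2ltlwithpast} applied to $\Psi$, we obtain $\pi,0\models\Psi$, i.e.\ $\pi,0\models\bigwedge_i(A_i\rightarrow\varphi_i)$; since the activation variables are all true at position $0$ of $\pi$, modus ponens over each conjunct gives $\pi,0\models\bigwedge_{i,A_i\in A}\varphi_i$, exactly the witness required by the lemma.

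For the backward direction ($\Leftarrow$), I would start from a lasso-shaped witness $\pi$ (interpreted in the LTLf sense through its finite presentation via the loop) satisfying $\bigwedge_{i,A_i\in A}\varphi_i$, and \emph{extend} its valuation to the fresh variables in $A\cup\{\lend\}$: set every $A_i$ to true everywhere and use $\lend$ to mark the end of the finite trace, continuing with $\lend$ true in the looping suffix. By construction the extended trace satisfies $\LEventually\lend$, $\LGlobally(\lend\rightarrow\LNext\lend)$, and each implication $A_i\rightarrow\varphi_i$; applying Corollary~\ref{th:ltlf2ltlwithpast} in the other direction, the extended trace also satisfies $\ftol{\Psi}$, hence $\Psi'$ on the infinite lasso. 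Finally, by the completeness side of (a), this lasso-shaped model induces a satisfying assignment of $EncP(\Psi',k)$, and by construction $A_i^{[0]}$ holds for every $i$, so $EncP(\Psi',k)\wedge\bigwedge_{A_i\in A}A_i^{[0]}$ is satisfiable.

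The main obstacle I anticipate is the careful bridging between the LTLf-finite-trace semantics of $\bigwedge\varphi_i$ and the LTL-infinite-lasso semantics of $EncP(\Psi',k)$: the statement asserts that the \emph{same} lasso $\pi[0,l-1]\pi[l,k]^{\omega}$ serves as the witness on both sides, so I must be explicit that the projection/extension with $\lend$ and the activation variables preserves the lasso shape and bound $k$, and that past-operator semantics on the finite prefix agree with those on the prefix of the infinite lasso (which is precisely the point where Corollary~\ref{th:ltlf2ltlwithpast}, and behind it the observation that past operators refer only to the prefix, is doing the crucial work).
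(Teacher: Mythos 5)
Your proposal is correct and follows essentially the same route as the paper's own proof: both directions hinge on (i) the soundness/completeness of the $EncP$ witness encoding to move between propositional assignments and lasso-shaped traces satisfying $\Psi'$, (ii) the $\lend$-based correspondence of Corollary~\ref{th:ltlf2ltlwithpast} to move between the infinite lasso and the finite LTLf trace, and (iii) the activation variables at position $0$ to recover $\bigwedge_{i,A_i\in A}\varphi_i$. The only cosmetic difference is that the paper phrases step (i) via its assumptions \ref{ass:ASS1}/\ref{ass:ASS2} and speaks of "projection onto $\mathcal{V}$" where you explicitly extract the prefix up to the first $\lend$-state; your version is, if anything, more explicit about the finite-trace/infinite-lasso bridging.
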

\begin{proof}\sloppypar
$\Longrightarrow$ Let $\pi[0,k]$ be a finite path corresponding to a satisfying assignment for $EncP(\Psi',k) \wedge \bigwedge_{A_i \in A} A_{i}^{[0]}$. From assumption~\ref{ass:ASS1} such path should be such that $\exists l$ s.t. $0 \le l \le k \wedge \pi[0,l-1]\pi[l,k]^{\omega} \models \Psi'$.  With $\bigwedge_{A_i \in A} A_{i}^{[0]}$ and the construction of $\Psi'$ we conclude that the projection of $\pi[0,l-1] \pi[l,k]^{\omega}$ onto $\mathcal{V}$ satisfies $\bigwedge_{i,A_i \in A} \varphi_i$.

\noindent $\Longleftarrow$ Let us assume a lasso shaped witness $\pi[0,l-1] \pi[l,k]^{\omega}$ for $\bigwedge_{i,A_i \in A} \varphi_i$. Any extension to $\pi[0,l-1] \pi[l,k]^{\omega}$ such that $\bigwedge_{A_i \in A} A_{i}^{[0]}$ is a witness for $\Psi'$. As a consequence, $EncP(\Psi',k) \wedge \bigwedge_{A_i \in A} A_{i}^{[0]}$ is satisfiable.
\end{proof}

\begin{lemma}\label{lemma:satbmc_lemma2}
Let $\UC\subseteq A$, if $EncC(\Psi',k)$  is unsatisfiable under assumption $\bigwedge_{A_i\in \UC} A_i^{[0]}$, then $\bigwedge_{i,A_i \in \UC} \varphi_i$ is unsatisfiable.
\end{lemma}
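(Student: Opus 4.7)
I would prove this by contraposition: assume $\bigwedge_{i, A_i \in \UC} \varphi_i$ is satisfiable over $\PVarSet$, and derive that $EncC(\Psi',k) \wedge \bigwedge_{A_i \in \UC} A_i^{[0]}$ is satisfiable, contradicting the hypothesis. The proof chains together three facts already established in the excerpt: (i) Theorem~\ref{th:unsatcore} linking unsatisfiability of the $\varphi_i$'s to unsatisfiability of $\Psi$ with the activation variables forced to true; (ii) Corollary~\ref{th:ltlf2ltlwithpast} lifting \LTLf (un)satisfiability to \LTL (un)satisfiability of $\Psi'$; and (iii) the completeness property of the Biere et al.\ BMC encoding stating that $EncC(\phi,k)$ unsatisfiable implies $\phi$ unsatisfiable.

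The concrete steps: first, take a finite trace $\pi$ over $\PVarSet$ witnessing $\bigwedge_{i, A_i \in \UC}\varphi_i$ and extend it to a trace $\pi^\star$ over $\PVarSet \cup A$ by setting $A_i = \top$ at position $0$ for every $A_i \in \UC$ and $A_i = \bot$ otherwise. As argued in the proof of Theorem~\ref{th:unsatwithactivation} (and used in the proof of Theorem~\ref{th:unsatcore}), $\pi^\star,0 \models \Psi \wedge \bigwedge_{A_i \in \UC} A_i$, because for each $A_i \notin \UC$ the implication $A_i \to \varphi_i$ is trivially satisfied and for $A_i \in \UC$ the trace already satisfies $\varphi_i$. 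Second, applying Corollary~\ref{th:ltlf2ltlwithpast} to $\pi^\star$ yields an infinite lasso-shaped trace $\pi'$ that satisfies $\Psi' = \LEventually \lend \wedge \LGlobally(\lend \rightarrow \LNext \lend) \wedge \ftol{\Psi}$ at position $0$ together with $\bigwedge_{A_i \in \UC} A_i$; concretely one prolongs $\pi^\star$ with states where $\lend$ holds and all propositional variables of $\PVarSet$ have arbitrary fixed values, which is always possible. Third, by the completeness of $EncC$ (a witness of $\Psi' \wedge \bigwedge_{A_i \in \UC} A_i$ implies that the BMC completeness formula is satisfiable for every bound $k$ under the corresponding initial-state assumptions on $A^{[0]}$), we obtain a satisfying assignment for $EncC(\Psi',k)$ under $\bigwedge_{A_i\in\UC} A_i^{[0]}$, contradicting the hypothesis.

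The main obstacle is the third step: we must invoke the completeness property of $EncC$ in a slightly non-standard form, namely \emph{under assumptions} expressed as a conjunction of initial-time literals $A_i^{[0]}$. However, since $\bigwedge_{A_i \in \UC} A_i^{[0]}$ is simply a constraint on the initial state (and can be absorbed into the \LTL formula as $\Psi' \wedge \bigwedge_{A_i \in \UC} A_i$), the standard completeness statement from~\cite{DBLP:journals/lmcs/BiereHJLS06} applies directly to this conjunction, closing the argument. A minor technicality to check is that the witness constructed in the second step indeed fits the lasso-shape required by $EncP/EncC$, which is guaranteed because, after the $\lend$ prefix, we can loop on a single state where $\lend$ and a fixed assignment to $\PVarSet$ hold forever.
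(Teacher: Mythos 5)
Your proof is correct and follows essentially the same route as the paper's: argue by contraposition, extend a finite witness of $\bigwedge_{i,A_i \in \UC}\varphi_i$ to a trace over $\PVarSet \cup A$ satisfying $\Psi'$ together with $\bigwedge_{A_i\in\UC}A_i$, and conclude that the completeness encoding must then be satisfiable under the assumptions. The only cosmetic difference is that the paper closes the argument by appealing to assumption~(A2) on the symbolic encoding, whereas you invoke the completeness property of $EncC$ directly (and you work out the lasso-shaped $\lend$-suffix explicitly, which the paper leaves implicit).
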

\begin{proof}
Let $EncC(\Psi',k)$  be unsatisfiable under assumption $\bigwedge_{A_i\in \UC} A_i^{[0]}$. Let's assume there exists a witness $\pi$ of $\bigwedge_{i,A_i \in \UC} \varphi_i$. We can extend such path $\pi$ to a path $\pi'$ over variables $\mathcal{V} \cup A$ such that $\bigwedge_{A_i \in \UC} A_i^{[0]}$ is a witness for $\Psi'$, and this contradicts the  assumption~\ref{ass:ASS2}.
\end{proof}
\end{marcodo}

\begin{theorem}
  Algorithm~\ref{alg:satbmcltlalgorithm} returns $\emptyset$ if
  the set of \LTLf formulas $\Gamma$ is satisfiable, otherwise it
  returns an $\UC \not=\emptyset$ such that the set
  $\Phi_{\UC} = \{\varphi_i | A_i \in \UC\}$ is an unsatisfiable core
  for $\Gamma$.
\end{theorem}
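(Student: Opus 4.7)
The plan is to prove the two directions of the claim separately and then rely on the completeness of the BMC encoding of~\cite{DBLP:journals/lmcs/BiereHJLS06} to guarantee termination. The central tools will be the two preceding lemmas (Lemma~\ref{lemma:satbmc_lemma1} on $EncP$ and Lemma~\ref{lemma:satbmc_lemma2} on $EncC$) together with Theorem~\ref{th:unsatcore}, which links the activation variables to the actual \UCs of $\Gamma$.

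First I would argue correctness of the two possible return points of Algorithm~\ref{alg:satbmcltlalgorithm}. For the return at line~6, whenever \Call{SAT}{EncP($\Psi',k$) $\wedge \bigwedge_{A_i \in A} A_{i}^{[0]}$} returns SAT, Lemma~\ref{lemma:satbmc_lemma1} gives a lasso-shaped witness for $\bigwedge_{i,A_i \in A} \varphi_i$, hence $\Gamma$ is satisfiable and returning $\emptyset$ is correct. For the return at line~4, when \Call{SAT\_Assume}{EncC($\Psi',k$), $A^{[0]}$} returns UNSAT the SAT solver also reports a subset $\UC \subseteq A^{[0]}$ of the assumption literals that preserves unsatisfiability; Lemma~\ref{lemma:satbmc_lemma2} then states that $\bigwedge_{i,A_i \in \UC} \varphi_i$ is unsatisfiable, which by Theorem~\ref{th:unsatcore} means that $\Phi_{\UC} = \{\varphi_i \mid A_i \in \UC\}$ is an unsatisfiable core of $\Gamma$.

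Next I would handle termination, which is the main subtlety. By~\cite{DBLP:journals/lmcs/BiereHJLS06}, for every \LTL formula $\phi$ there is a bound $K(\phi)$ (the \emph{completeness threshold}) such that either $\phi$ is satisfiable by a lasso path of length at most $K(\phi)$, or $EncC(\phi,k)$ is unsatisfiable for some $k \le K(\phi)$. Applying this to $\Psi' \wedge \bigwedge_{A_i \in A} A_i^{[0]}$, one of the two SAT calls inside the loop must produce a definitive answer at some iteration $k \le K(\Psi')$: if $\Gamma$ is satisfiable, then so is $\Psi' \wedge \bigwedge_{A_i \in A} A_i^{[0]}$ and by Lemma~\ref{lemma:satbmc_lemma1} \Call{SAT}{\ldots} at line~5 will eventually return SAT, making the algorithm return $\emptyset$; conversely, if $\Gamma$ is unsatisfiable, then by Theorem~\ref{th:unsatwithactivation} neither does $\Psi' \wedge \bigwedge_{A_i \in A} A_i^{[0]}$ have a model, so the completeness check $EncC$ under the activation-literal assumptions will eventually return UNSAT, triggering the return at line~4 with a non-empty $\UC$.

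The hard part is the termination argument, because it depends on completeness properties of the specific BMC encoding rather than on purely logical reasoning: one must appeal to the fact that $EncC(\Psi',k)$ is guaranteed to witness unsatisfiability at some finite $k$, and that the presence of the extra assumption literals $A_i^{[0]}$ (which are purely propositional constraints on the initial state) does not break this completeness property. Once this is in place, combining the two cases shows that the algorithm always terminates with the claimed output, concluding the proof.
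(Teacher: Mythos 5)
Your proposal is correct and follows essentially the same route as the paper, which proves this theorem simply as ``a direct consequence of Lemma~\ref{lemma:satbmc_lemma1} and Lemma~\ref{lemma:satbmc_lemma2}'': you justify the two return points with exactly those two lemmas plus Theorem~\ref{th:unsatcore}. Your additional discussion of termination via the completeness threshold of the encoding of \cite{DBLP:journals/lmcs/BiereHJLS06} is a welcome elaboration that the paper's proof leaves implicit, but it does not change the underlying argument.
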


\begin{marcodo}
\begin{proof}
The proof is a direct consequence of Lemma~\ref{lemma:satbmc_lemma1} and Lemma~\ref{lemma:satbmc_lemma2}.
\end{proof}
\end{marcodo}
The algorithm above uses the
\cite{DBLP:journals/lmcs/BiereHJLS06}
encoding \mr{for both $EncC(\Psi',k)$ and $EncP(\Psi',k)$}. We remark that the schema can also be easily adapted to
leverage \mr{other} algorithms \mr{e.g.} based on $k$-liveness~\cite{kliveness} or
liveness to safety~\cite{liveness2safety} both relying on the
IC3~\cite{bradley} algorithm. What intuitively changes is the
propositional encoding of the \LTL formula and the calls to the SAT
solver to reflect the IC3 algorithm.

\subsubsection{\cg{Temporal Resolution based \LTLf Unsatisfiable Core Extraction}}
\label{sec:resolution}
We can extract the unsat core of a set of \LTLf formulas
$\Gamma = \{\varphi_1, ..., \varphi_N\}$ via \LTL temporal
resolution~\cite{DBLP:conf/cade/HustadtK03} (TR) leveraging the results
previously discussed in this paper and existing \LTL temporal
resolution engines equipped for temporal unsat core
extraction~\cite{DBLP:journals/acta/Schuppan16}.
The approach is as follows.
First we build the formula
$\Psi = \bigwedge_i (A_i \rightarrow \varphi_i)$.
Second, we apply Theorem~\ref{th:ltlfp2ltlff} to remove the past
temporal operators.
Third, we leverage  Theorem~\ref{th:vardidegiacomo} to convert the
\LTLf formula into an equi-satisfiable \LTL formula.
Finally, the resulting \LTL formula is \mr{given in input} to any \LTL
temporal resolution solver suitable to extract a temporal
unsatisfiable core
(e.g. \trppp~\cite{DBLP:journals/acta/Schuppan16}) by
enforcing the activation variables to hold \mr{(i.e. enforcing
  $\bigwedge_{A_i\in A} A_i$ in conjunction with the resulting \LTL
  formula)}.
If the \LTL temporal resolution solver responds UNSAT, looking at the
activation variables in the extracted temporal unsatisfiable core we
get an unsat core of the original set of \LTLf formulas.

\begin{algorithm}[tb]
\caption{\small TR \LTL \UC Extraction with~\cite{DBLP:journals/acta/Schuppan16}}\small
\label{alg:trltlalgorithm}
\textbf{Input}: $\Psi'$, $A$\\
\textbf{Output}: $\UC$ or $\emptyset$
\begin{algorithmic}[1] 
  \STATE $\pair{\phi',\Gamma} \gets \ltlptoltl{\Psi',\emptyset}$
  \STATE $\psi \gets \phi' \wedge \bigwedge_{\varphi \in \Gamma}\varphi \wedge \bigwedge_{A_i \in A} A_i$
  \STATE $res,\UC \gets$ \Call{\trppp}{$\psi$}
  \IIF{$(res = UNSAT)$} \IRETURN{$\UC|_A$}
  \RETURN{$\emptyset$}
\end{algorithmic}
\end{algorithm}

Algorithm~\ref{alg:trltlalgorithm} computes an unsatisfiable core for
a set of \LTLf formulas $\Gamma$ leveraging a \LTL temporal resolution
prover equipped for extracting an unsatisfiable core like, e.g.,
TRP++~\cite{DBLP:journals/acta/Schuppan16}. It takes in input the
formula $\Psi'$ and returns the empty set ($\emptyset$) if the formula
is satisfiable, otherwise it returns a subset $\UC \subseteq A$ such
that $\bigwedge_{i,A_i \in \UC} \varphi_i$ is unsatisfiable. It uses
the TRP++ algorithm~\cite{DBLP:journals/acta/Schuppan16} to compute
the unsatisfiable core $\UC$, and then it extracts from this only the
formulas corresponding to $A_i \in A$ (denoted in the algorithm with
$\UC|_A$). \mr{The TRP++($\psi$) algorithm discussed in
  \cite{DBLP:journals/acta/Schuppan16} first converts the \LTL formula
  $\psi$ into an equi-satisfiable set of Separated Normal Form
  (SNF)~\cite{DBLP:conf/ijcai/Fisher91} clauses $C$, and then it
  checks whether this set is satisfiable or not, and in case of
  unsatisfiability it computes an unsatisfiable core
  $C^{uc} \subseteq C$ and returns it by applying a reconstruction w.r.t. the original set of \LTL formulas. We refer the reader to
  \cite{DBLP:journals/acta/Schuppan16} for the details of the
  algorithm and for the respective proof of correctness of the TRP++
  algorithm.}

\begin{theorem}
  Algorithm~\ref{alg:trltlalgorithm} returns $\emptyset$ if the
  set of \LTLf formulas $\Gamma$ is satisfiable, otherwise it returns
  an $\UC \not=\emptyset$ such that the set
  $\Phi_{\UC}\! =\! \{\varphi_i | A_i\! \in\! \UC\}$ is an unsatisfiable core
  for $\Gamma$.
\end{theorem}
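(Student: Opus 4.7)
The plan is to prove the two directions of the biconditional by chaining the preliminary results established earlier with the correctness guarantees of the temporal-resolution prover of~\cite{DBLP:journals/acta/Schuppan16}. Let $\Psi = \bigwedge_i (A_i \rightarrow \varphi_i)$, let $\pair{\phi',\Upsilon} = \ltlptoltl{\Psi,\emptyset}$, and let $\psi$ be the \LTL formula actually handed to \trppp (obtained after applying Corollary~\ref{th:ltlf2ltlwithpast} to the future-only \LTLf formula $\phi' \wedge \bigwedge_{\rho \in \Upsilon}\rho \wedge \bigwedge_{A_i \in A} A_i$).

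For the satisfiable case, I would assume that $\Gamma$ is satisfiable. By the contrapositive of Theorem~\ref{th:unsatwithactivation}, $\Psi \wedge \bigwedge_{A_i \in A} A_i$ is satisfiable as an \LTLf formula over $\PVarSet \cup A$. Applying Theorem~\ref{th:ltlfp2ltlff} yields satisfiability of the future-only \LTLf formula $\phi' \wedge \bigwedge_{\rho \in \Upsilon}\rho \wedge \bigwedge_{A_i \in A} A_i$, and Corollary~\ref{th:ltlf2ltlwithpast} lifts this to satisfiability of the equivalent \LTL formula $\psi$. By soundness of \trppp, the solver responds SAT and Algorithm~\ref{alg:trltlalgorithm} returns $\emptyset$.

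For the unsatisfiable case, suppose \trppp on input $\psi$ returns UNSAT together with a core $\UC$, and let $\UC|_A$ denote its restriction to the activation-variable assertions. By the correctness of unsatisfiable-core extraction in~\cite{DBLP:journals/acta/Schuppan16}, the conjunction of the clauses in $\UC$ is itself unsatisfiable. Since $\UC \subseteq \{\phi'\} \cup \Upsilon \cup \{A_i : A_i \in \UC|_A\}$, reinstating the omitted structural clauses from $\{\phi'\} \cup \Upsilon$ preserves unsatisfiability, so $\phi' \wedge \bigwedge_{\rho \in \Upsilon}\rho \wedge \bigwedge_{A_i \in \UC|_A} A_i$ is unsatisfiable as a future-only \LTLf formula. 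Reversing the equi-satisfiable reductions of Corollary~\ref{th:ltlf2ltlwithpast} and Theorem~\ref{th:ltlfp2ltlff} yields that $\Psi \wedge \bigwedge_{A_i \in \UC|_A} A_i$ is unsatisfiable, and Theorem~\ref{th:unsatcore} then gives that $\{\varphi_i : A_i \in \UC|_A\}$ is an unsatisfiable core for $\Gamma$. Non-emptiness of $\UC|_A$ follows because $\phi' \wedge \bigwedge_{\rho \in \Upsilon}\rho$ is always satisfiable (for example, by the assignment setting every $A_i$ to false), and hence the \trppp core must involve at least one $A_i$ literal.

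The hard part is the careful handling of the \trppp core and its restriction to $\UC|_A$: one must argue that discarding the assertions $A_j$ with $A_j \notin \UC|_A$ while keeping the full $\phi'$ and $\Upsilon$ leaves a formula that remains unsatisfiable. This rests on the precise guarantee that \trppp returns a subset of input clauses whose conjunction is already unsatisfiable, combined with the monotonicity of UNSAT under conjunction. A secondary technical concern is the bookkeeping of the auxiliary propositional variables introduced along the way---the fresh atoms created by $\ltlptoltl{\cdot}$ for past operators, the \lend marker from $\ftol{\cdot}$, and any SNF-internal atoms produced by \trppp---none of which must collide with the activation variables in $A$.
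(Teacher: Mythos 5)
Your proof is correct and follows essentially the same route as the paper's: the satisfiable direction chains the satisfiability-preserving reductions (Theorems~\ref{th:unsatwithactivation} and~\ref{th:ltlfp2ltlff}, Corollary~\ref{th:ltlf2ltlwithpast}) to conclude \trppp answers SAT, and the unsatisfiable direction relies on the correctness of the \trppp core extraction together with Theorem~\ref{th:unsatcore} to read an \UC for $\Gamma$ off the activation variables. If anything, your write-up is more explicit than the paper's on two points the paper glosses over --- the monotonicity argument showing that keeping all of $\{\phi'\}\cup\Upsilon$ while dropping the $A_j$ outside $\UC|_A$ preserves unsatisfiability, and the non-emptiness of $\UC|_A$ via satisfiability of the formula with all activation variables false.
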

\mr{%
\begin{proof}
  If the set $\Gamma$ is satisfiable, then the formula
  $\psi = \phi' \wedge \bigwedge_{\varphi \in \Gamma}\varphi \wedge
  \bigwedge_{A_i \in A} A_i $ where
  $\pair{\phi',\Gamma} = \ltlptoltl{\Psi',\emptyset}$, is satisfiable
  since it leverages on transformations that preserve satisfiability
  (as proved in Theorems~\ref{th:ltlfp2ltlff},\ref{th:vardidegiacomo}
  and \ref{th:unsatcore}). Thus also \Call{TRP++}{$\psi$} returns sat,
  and the algorithm returns $\emptyset$ to indicate the formula is
  satisfiable.

  On the other hand, if $\Gamma$ is unsatisfiable, then also
  $\psi = \phi' \wedge \bigwedge_{\varphi \in \Gamma}\varphi \wedge
  \bigwedge_{A_i \in A} A_i $ is unsatisfiable. Thus
  \Call{TRP++}{$\psi$} returns UNSAT, together with an unsatisfiable
  core for the formula $\psi$. We remark that, given the structure of
  $\psi$, each $A_i$ will be then converted into an SNF clause
  $c_{A_i} = A_i$ which thus will be part of the set of SNF clauses
  $C$ used internally by TRP++. Since this formula is unsatisfiable,
  the TRP++ algorithm will extract an unsat core
  $\UC = C^{uc} \subseteq C$ such that $C^{uc}$ is
  unsatisfiable. $C^{uc}$ among other clauses will contain some
  $c_{A_i}$ for $A_i \in A$ that will correspond to the respective
  formulas in $\Gamma$ (thanks also to Theorem~\ref{th:unsatcore}),
  and thus this set will represent an unsatisfiable core for $\Gamma$.
\end{proof}
}

\subsection{\LTLf Unsatisfiable Core Extraction via Native SAT}
\label{sec:native-sat}

We adapted the native SAT based \LTLf satisfiability approach
discussed in~\cite{DBLP:journals/ai/LiPZVR20} to extract the
unsatisfiable core. \mr{Since the original approach for \LTLf
  satisfiability checking in~\cite{DBLP:journals/ai/LiPZVR20} was not
  supporting past temporal operators, we rely on
  Theorem~\ref{th:ltlfp2ltlff} to get rid of the past temporal
  operators obtaining an equi-satisfiable \LTLf formula without past
  temporal operators}.

The algorithm \mr{originally} discussed
in~\cite{DBLP:journals/ai/LiPZVR20} can be extended to extract
the unsat core of a set $\Gamma = \{\varphi_1, ..., \varphi_N\}$ of
\LTLf formulas as follows (see
Algorithm~\ref{alg:satltlfalgorithm}). First, we build the \LTLf
formula $\Psi = \bigwedge_i (A_i \rightarrow \varphi_i)$. Then, we
apply Theorem~\ref{th:ltlfp2ltlff} to get an equi-satisfiable
formula. The resulting formula (without past temporal operators) is
then passed in input to the algorithm SATLTLF discussed
in~\cite{DBLP:journals/ai/LiPZVR20}. \mr{The SATLTLF algorithm in~\cite{DBLP:journals/ai/LiPZVR20} has been modified to enforce that each call} to the SAT solver \mr{under assumption}
performed in the algorithm
\mr{assumes also that} the activation
variables $A$ are all true.
\mr{We refer the reader to~\cite{DBLP:journals/ai/LiPZVR20} for a
  thorough description of the algorithm (which is out of the scope of
  this paper). We remark that, the only modifications performed to the
  algorithm consists in adding the assumptions on the activation
  variables to the already considered assumptions in each call to
  satisfiability under assumption already performed within the
  algorithm itself.
  Intuitively, the algorithm in~\cite{DBLP:journals/ai/LiPZVR20}
  constructs a \emph{conflict sequence}
  $\mathcal{C} = \mathcal{C}[0], ..., \mathcal{C}[k]$ (i.e., sequences
  of states that cannot reach a final state of the transition system
  $T_\phi$ constructed from the formula $\phi$ given in input)
  extracted from the unsat cores resulting from different
  propositional unsatisfiable queries performed in the algorithm
  itself. As per Theorem~\ref{th6:lin} the input \LTLf formula is
  unsatisfiable iff there exists a conflict sequence $\mathcal{C}$ and
  an integer $i \ge 0$ such that
  $\bigcap_{0\le j \le i} \mathcal{C}[j] \subseteq \mathcal{C}[i+1]$.}
When the SATLTLF algorithm returns UNSAT, we extract from the last
element of the \mr{computed conflict sequence
  (i.e. $\mathcal{C}[i+1]$)} the unsat core $\UC \subseteq A$, and the
set $\Gamma' = \{\varphi_i | A_i \in \UC\}$ is an unsat core for
$\Gamma$ leveraging on Theorem~\ref{th6:lin}.

\begin{algorithm}[tb]\small
\caption{\small SAT \LTLf \UC Extraction with~\cite{DBLP:journals/ai/LiPZVR20}}
\label{alg:satltlfalgorithm}
\textbf{Input}: $\Psi$, $A$\\
\textbf{Output}: $\UC$ or $\emptyset$
\begin{algorithmic}[1] 
  \STATE $\pair{\phi',\Upsilon} \gets \ltlptoltl{\Psi,\emptyset}$
  \STATE $\psi \gets \phi' \wedge \bigwedge_{\rho \in \Upsilon}\varphi$
  \STATE $res \gets $ SATLTLF($\psi$, $A$)
  \IIF{($res = UNSAT$)}
  \IRETURN{SAT\textsubscript{SOLVER}.GET\_\UC($A$)} \ENDIIF
  \RETURN $\emptyset$
\end{algorithmic}
\end{algorithm}

\begin{theorem}
  Algorithm~\ref{alg:satltlfalgorithm} returns $\emptyset$ if the
  set of \LTLf formulas $\Gamma$ is satisfiable, otherwise it returns
  an $\UC \not=\emptyset$ such that the set
  $\Phi_{\UC} = \{\varphi_i | A_i \in \UC\}$ is an unsatisfiable core
  for $\Gamma$.
\end{theorem}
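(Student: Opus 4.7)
The plan is to reduce this correctness claim to a composition of results already established in the paper, namely Theorem~\ref{th:ltlfp2ltlff} (past elimination preserves satisfiability), Theorem~\ref{th:unsatwithactivation} and Theorem~\ref{th:unsatcore} (activation variables track satisfiability and unsat cores), and the known correctness of SATLTLF underpinned by Theorem~\ref{th6:lin}. The only genuinely new verification concerns the effect of adding $\bigwedge_{A_i\in A} A_i$ as an assumption to every internal SAT-solver call in SATLTLF: we need that this modification preserves both soundness and completeness, and that the SAT solver's unsat-core facility returns a subset of $A$ that still certifies unsatisfiability of the whole \LTLf formula.

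First, for the satisfiable direction, I would assume $\Gamma$ is satisfiable. Theorem~\ref{th:unsatwithactivation} (in its contrapositive form) gives that $\Psi\wedge\bigwedge_{A_i\in A}A_i$ is satisfiable, and Theorem~\ref{th:ltlfp2ltlff} transfers satisfiability to the transformed formula $\psi=\phi'\wedge\bigwedge_{\rho\in\Upsilon}\rho$ built in line~2 of Algorithm~\ref{alg:satltlfalgorithm}; note that $\ltlptoltl{\cdot}$ leaves the activation variables untouched, so the assumption on $A$ still makes sense on $\psi$. The modified SATLTLF thus searches for a conflict sequence on a satisfiable input, fails to find one, and returns SAT, so the algorithm returns $\emptyset$.

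Second, for the unsatisfiable direction, I would assume $\Gamma$ is unsatisfiable. Theorem~\ref{th:unsatwithactivation} yields $\Psi\wedge\bigwedge_{A_i\in A}A_i$ unsatisfiable, and Theorem~\ref{th:ltlfp2ltlff} transfers this to $\psi\wedge\bigwedge_{A_i\in A}A_i$. By Theorem~\ref{th6:lin}, there exists a conflict sequence $\mathcal{C}$ and index $i\ge 0$ with $\bigcap_{0\le j<i}\mathcal{C}[j]\subseteq\mathcal{C}[i+1]$. The modified SATLTLF reconstructs such a sequence and returns UNSAT. The final closing propositional query, performed under the assumption $\bigwedge_{A_i\in A}A_i$, is unsatisfiable, so the SAT solver's unsat-core call returns a subset $\UC\subseteq A$ of these assumptions that already suffices for unsatisfiability. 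Hence $\psi\wedge\bigwedge_{A_i\in \UC}A_i$ is unsatisfiable and, by Theorem~\ref{th:unsatcore}, the set $\Phi_{\UC}=\{\varphi_i\mid A_i\in \UC\}$ is an unsatisfiable core for $\Gamma$.

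The main obstacle will be this last step: justifying that the subset $\UC$ extracted from a single SAT-solver unsat-core call inside SATLTLF globally certifies unsatisfiability of $\psi\wedge\bigwedge_{A_i\in \UC}A_i$ rather than merely the local propositional query that was closed. This requires arguing, on the basis of the conflict-sequence semantics of Theorem~\ref{th6:lin}, that the same sequence $\mathcal{C}$ remains a valid witness under the weakened assumption $\bigwedge_{A_i\in \UC}A_i$. The argument rests on monotonicity: the activation literals occur only positively in $\Psi$ (the antecedents of the implications $A_i\rightarrow\varphi_i$), so dropping activations outside $\UC$ can only enlarge the sets of admissible states at each frame, preserving both the inclusion $\bigcap_{j}\mathcal{C}[j]\subseteq\mathcal{C}[i+1]$ and the non-finality of $\mathcal{C}[0]$ needed to apply Theorem~\ref{th6:lin}.
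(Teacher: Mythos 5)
Your decomposition is essentially the paper's: both directions are reduced to satisfiability preservation of the past-elimination and activation-variable constructions (Theorems~\ref{th:ltlfp2ltlff}, \ref{th:unsatwithactivation}, \ref{th:unsatcore}), followed by the correctness of SATLTLF via the conflict-sequence characterization of Theorem~\ref{th6:lin}, with the \UC read off from the final propositional unsat core, i.e., from the last frame $\mathcal{C}[i+1]$. The one place where you go beyond the paper is also where your argument misfires: the closing monotonicity sketch. The activation literals occur \emph{negatively} in $\Psi=\bigwedge_i(A_i\rightarrow\varphi_i)$ (each $A_i$ sits in an antecedent, i.e., $\neg A_i\vee\varphi_i$); the positive occurrences are only the assumptions. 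Dropping the assumptions outside \UC therefore \emph{weakens} the conjunction being refuted and enlarges the set of admissible behaviours -- which works against you, because the frames of a conflict sequence are sets of states that \emph{cannot} reach a final state, so enlarging admissible behaviours threatens, rather than preserves, the inclusion $\bigcap_{j}\mathcal{C}[j]\subseteq\mathcal{C}[i+1]$ and the non-finality of $\mathcal{C}[0]$. The justification you actually need is not polarity but assumption tracking: each frame of $\mathcal{C}$ is derived from propositional unsat cores whose refutations used only the assumption literals the solver reports, so the very same derivations -- hence the same frames -- remain valid certificates when only the assumptions in \UC are enforced. The paper does not prove this step in detail either; it defers to the construction of SATLTLF in the cited work, and your proof would be sound if you replaced the monotonicity claim with that appeal (or with an explicit induction on the frame construction showing each frame's defining refutation depends only on \UC).
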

\mr{%
\begin{proof}
  If the set $\Gamma$ is satisfiable, then the formula
  $\psi = \phi' \wedge \bigwedge_{\rho \in \Upsilon}\varphi$, where
  $\pair{\phi',\Upsilon} = \ltlptoltl{\Psi,\emptyset}$, is also
  satisfiable since it leverages on transformations that preserve
  satisfiability (as proved in
  Theorems~\ref{th:ltlfp2ltlff},\ref{th:vardidegiacomo} and
  \ref{th:unsatcore}). Thus leveraging the correctness and
  completeness of the SATLTLF~\cite{DBLP:journals/ai/LiPZVR20}, then
  SATLTLF will report sat, and in turn the
  Algorithm~\ref{alg:satltlfalgorithm} will report $\emptyset$ to
  indicate the set $\Gamma$ is satisfiable.

  On the other hand, if $\Gamma$ is unsatisfiable, then also the
  formula $\psi = \phi' \wedge \bigwedge_{\rho \in \Upsilon}\varphi$
  will be unsatisfiable. From Theorem~\ref{th6:lin}, then the SATLTLF
  algorithm will build a conflict sequence $\mathcal{C}$ and there
  exists an integer $i \ge 0$ such that
  $\bigcap_{0\le j \le i} \mathcal{C}[j] \subseteq \mathcal{C}[i+1]$,
  $\mathcal{C}[i+1]$ will represent the states such that from there it
  is not possible to reach a final state for $T_\psi$, i.e. there is
  no path starting from these states that will satisfy $\psi$,
  i.e. all the paths from this states will not satisfy $\psi$.

  Thus, for all state $s \in \mathcal{C}[i+1]$ there exists a set
  $C \subseteq A$ such that $s \models \bigwedge_{A_i \in C} A_i$, and
  the formula $\bigwedge_{i,A_i \in C} \varphi_i$ will be
  unsatisfiable, and the set $C$ corresponds to an unsat core
  extracted from $\mathcal{C}[i+1]$ by construction of the SATLTLF
  algorithm (see \cite{DBLP:journals/ai/LiPZVR20} for further
  details).
\end{proof}
}

\mr{
\subsection{Discussion}

We observe that all the described approaches extract one unsat core,
though not necessarily a minimum/minimal one.
\mr{Algorithm~\ref{alg:bddltlalgorithm} could also be easily extended
  to get the minimal $\UC$ from the $\UCset$ set of all possible
  unsatisfiable cores for the given formula.}
\mr{For the SAT based approaches, } a minimum/minimal unsat core could
be extracted by leveraging the ability of the SAT solver to get a
minimum/minimal propositional unsat core. \mr{Similarly, the temporal
resolution solver could be instrumented } to get a minimum/minimal
core.  In all cases, it might be possible to get a minimum/minimal one
with specialized solvers and/or with additional search.
However, this is left for future work.
}




\section{Experimental Evaluation}
\label{sec:evaluation}

\begin{claudiodo}
In this section, we provide details on the implementations of the proposed algorithms \cdf{(Section~\ref{sub:implementation})}, and then we describe the setup and the data sets used for the experimental evaluation \cdf{(Section~\ref{sub:the_experimental_setup})}. We conclude with a report on the results alongside an examination thereof \cdf{(Section~\ref{sub:results})}. 
\end{claudiodo}
%

\subsection{Implementation of the Algorithms} 
\label{sub:implementation}

We implemented Algorithms~\ref{alg:bddltlalgorithm}~and~\ref{alg:satbmcltlalgorithm} as extensions of the \nusmv model
checker~\cite{DBLP:conf/cav/CimattiCGGPRST02} exploiting
the built-in support for past temporal operators,
the $\ftol{\varphi}$ conversion, and Eq.~\eqref{eq:ltlf2ltl2}.
In particular, we enhanced
\begin{enumerate*}[(i)]
\item the BDD-based algorithm for \LTL language
  emptiness~\cite{DBLP:journals/fmsd/ClarkeGH97} and
\item the SAT-based approaches~\cite{DBLP:journals/lmcs/BiereHJLS06}.
\end{enumerate*}
We shall henceforth refer to these tools as \nusmv-B and \nusmv-S, respectively.
\mr{The source code for the extended version of \nusmv with these implementations is available at \url{https://github.com/roveri-marco/ltlfuc}.}

We implemented Algorithm~\ref{alg:satltlfalgorithm} within an extended version of the
\aaltaf tool~\cite{DBLP:journals/ai/LiPZVR20},
with a novel dedicated extension to
support past temporal operators through
$\ltlptoltl{\varphi,\emptyset}$.
\mr{The source code for our extended version of \aaltaf is available at \url{https://github.com/roveri-marco/aaltaf-uc}.}

%
%
We implemented Algorithm~\ref{alg:trltlalgorithm} as a toolchain. First, it
calls our variant of \aaltaf to generate a file that is suitable for
the \trppp temporal resolution
solver~\cite{DBLP:conf/cade/HustadtK03} using the $\ftol{\varphi}$
conversion as per Eq.~\eqref{eq:ltlf2ltl2} and
$\ltlptoltl{\varphi,\emptyset}$. Then, the resulting file is submitted to
\trppp, and finally the generated \UC is post-processed to extract
the auxiliary variables $A$. \mr{For the experiments, we used the latest version of \trppp downloadable from \url{http://www.schuppan.de/viktor/trp++uc/}.}

\subsection{The Experimental Setup} 
\label{sub:the_experimental_setup}

\begin{table}
	\caption{Benchmarks}
	\label{tab:families}
	\begingroup
	\renewcommand*{\arraystretch}{1.4}
	\begin{adjustbox}{max width=\textwidth}

\begin{tabular}{@{}l S[table-format=2.0]S[table-format=2.0]S[table-format=2.0]S[table-format=3.2] >{\hspace{2em}} l S[table-format=2.0]S[table-format=2.0]S[table-format=2.0]S[table-format=3.2]@{}}
	\toprule
	                                      &                   &     \multicolumn{3}{c}{\textbf{Clauses}}      &                                                          &                   &     \multicolumn{3}{c}{\textbf{Clauses}}      \\ \cmidrule{3-5}\cmidrule{8-10}
	\textbf{Family}                       & \textbf{Problems} & \textbf{Min.} & \textbf{Max.} & \textbf{Avg.} & \textbf{Family}                                          & \textbf{Problems} & \textbf{Min.} & \textbf{Max.} & \textbf{Avg.} \\ \midrule
	LTL-as-LTLf/acacia/demo-v3/demo-v3:cl & 11                & 9             & 49            & 29.00         & LTL-as-LTLf/anzu/genbuf/genbuf:cl                        & 20                & 58            & 461           & 231.50        \\
	LTL-as-LTLf/alaska/lift/lift          & 17                & 13            & 29            & 21.00         & LTL-as-LTLf/forobots                                     & 38                & 6             & 6             & 6.00          \\
	LTL-as-LTLf/alaska/lift/lift:b        & 17                & 12            & 32            & 22.47         & LTL-as-LTLf/rozier/counter/counter                       & 19                & 6             & 24            & 15.00         \\
	LTL-as-LTLf/alaska/lift/lift:b:f      & 17                & 12            & 32            & 22.47         & LTL-as-LTLf/rozier/counter/counterCarry                  & 19                & 8             & 26            & 17.00         \\
	LTL-as-LTLf/alaska/lift/lift:b:f:l    & 17                & 14            & 50            & 32.47         & LTL-as-LTLf/rozier/counter/counterCarryLinear            & 19                & 8             & 8             & 8.00          \\
	LTL-as-LTLf/alaska/lift/lift:b:l      & 17                & 14            & 50            & 32.47         & LTL-as-LTLf/rozier/counter/counterLinear                 & 18                & 6             & 6             & 6.00          \\
	LTL-as-LTLf/alaska/lift/lift:f        & 17                & 13            & 29            & 21.00         & LTL-as-LTLf/rozier/formulas/n                            & 30                & 1             & 4             & 1.33          \\
	LTL-as-LTLf/alaska/lift/lift:f:l      & 17                & 15            & 47            & 31.00         & LTL-as-LTLf/schuppan/O1formula                           & 27                & 4             & 1002          & 224.00        \\
	LTL-as-LTLf/alaska/lift/lift:l        & 17                & 15            & 47            & 31.00         & LTL-as-LTLf/schuppan/O2formula                           & 27                & 2             & 1000          & 222.00        \\
	LTL-as-LTLf/anzu/amba/amba:c          & 17                & 75            & 351           & 213.47        & LTL-as-LTLf/schuppan/phltl                               & 13                & 5             & 101           & 30.85         \\
	LTL-as-LTLf/anzu/amba/amba:cl         & 17                & 77            & 369           & 223.47        & LTLf-specific/benchmarks:ltlf/LTLfRandomConjunction/C100 & 500               & 118           & 154           & 131.50        \\
	LTL-as-LTLf/anzu/genbuf/genbuf        & 20                & 1             & 1             & 1.00          & LTLf-specific/benchmarks:ltlf/LTLfRandomConjunction/V20  & 425               & 13            & 146           & 81.54         \\ \cline{7-10}
	LTL-as-LTLf/anzu/genbuf/genbuf:c      & 20                & 57            & 441           & 221.00        & \textbf{Overall}                                         & 1377              & 1             & 1002          & 97.63         \\ \bottomrule
\end{tabular}
	\end{adjustbox}
	\endgroup
\end{table}

For the experimental evaluation, we considered all the unsatisfiable
problems reported 
in~\cite{DBLP:journals/ai/LiPZVR20}, for a total of
\cdc{\num{1377}}
problems.
\cdc{To select the specifications of interest to our analysis out of the original testbed, we included only those for which at least one solver declared the set was unsatisfiable and no other tool contradicted the result, as per the experimental data reported by Li et al.}
\cdc{To compute the $\Gamma$ set}, we considered all the top-level conjuncts of each formula \mr{in the benchmark set}. \mr{For every benchmark, we used the variant of the formula in the \aaltaf format as an input.} \cdf{For the other tools \cdc{but \aaltaf,} we implemented \mr{within} \aaltaf \mr{dedicated converters} to the \mr{respective tool} input format.}

We carried out the experimental evaluation considering the four implementations provided by \nusmv-B, \nusmv-S, our variant of \aaltaf, and the \trppp toolchain.
We ran all experiments on an Ubuntu 18.04.5 LTS
machine, 
8-Core Intel(\textregistered) Xeon(\textregistered) at \SI{2.2}{\GHz},
equipped with \SI{64}{GB} of RAM. \mr{We set a memory occupation limit of \SI{4}{GB}, and a CPU usage limit of \SI{10}{\minute}.}
Additionally, we considered  $k=50$ as the maximum depth for \nusmv-S, and we ran \nusmv-B with the BDD dynamic variable reordering \cdc{mode} active~\cite{DBLP:conf/eurodac/FeltYBS93} to \cdc{dynamically reduce} the size of the BDDs \cdf{and thus} \cdc{save} space over time.%
\footnote{\mr{These \cdc{settings} are motivated by similar choices performed in the experimental evaluations carried out in~\cite{DBLP:conf/cav/CimattiRST07,DBLP:journals/acta/Schuppan16}.}}
\cdc{Whenever the wall-clock timing reported by the implemented technique fell under the lowest sensitivity of the tool (thus being represented as \num{3500565600.000}), we replaced the timing with the minimum non-zero timing reported overall ($3.78\times10^{-4}$).}

\cdc{Finally, we have categorized the benchmarking specifications into \num{25} families, according to their characteristics and provenance. Table~\ref{tab:families} shows the number of specifications per family, along with the minimum, maximum and average number of clauses within.}
\mr{%
In particular, the \textsl{LTLf-specific/benchmarks/LTLFRandomConjunction/V20} and
\textsl{LTLf-specific/benchmarks/LTLFRandomConjunction/C100}
benchmarks are conjunctions of formulas, each selected randomly from
standard patterns. They are characterized by a \cdc{\emph{temporal depth} (i.e., the maximum 
	nesting of temporal operators)} 
of up to \num{3}, with \num{20} propositional
variables.
The number of conjunctions ranges from \num{20} to \num{100} for the
former, 
and from \num{10} to \num{100} for the latter.
The \textsl{LTL-as-LTLf/rozier/counter/*} benchmarks are characterized
by having a small number of propositional variables, and temporal
formulas of different temporal depths (from \num{2} to \num{20}).
The benchmarks in \textsl{LTL-as-LTLf/schuppan/O1Formula} are
characterized by a large number of propositional variables (from \num{1} to
\num{1000}) with temporal formulas of small depths (\num{2} to \num{3}) and different
operators. The benchmarks in \textsl{LTL-as-LTLf/schuppan/O2Formula}
are big conjunctions of formulas of the form
$\LGlobally \LEventually a_i \leftrightarrow a_j$ with
$a_i \not= a_j$.
}

\mr{All the material to reproduce the experiments reported hereafter
  is available at \url{https://github.com/roveri-marco/ltlfuc/archive/refs/tags/jair-release-v0.zip}}.


\subsection{\cg{The Results}}
\label{sub:results}

\mr{In the experimental evaluation, we considered the following evaluation metrics:
\begin{enumerate*}[label=(\roman*)]
\item the result of the check (expecting all the tools to return unsatisfiability and extract an unsatisfiable core if no resource limit is reached);
\item the search time to compute and return an unsatisfiable core;
\item the size of the computed unsatisfiable core.
\end{enumerate*}}
\cdf{We remark that none of the presented algorithms strives for finding a minimum unsatisfiable core.}
\mr{\cdc{The approach based on \trppp may be used to that end, and} \nusmv-B could in principle be easily
adapted to select from the intermediate computed set $\UCset$ a minimum unsatisfiable core (as discussed previously). 
\cdc{Nevertheless, we pick the first returned $\UC$ for all tools so as to have a fair comparison among them.}
}

\begin{figure}[tbp!]
  \centering
  \begin{tabular}{@{}cc@{}}
  \includegraphics[width=0.45\textwidth]{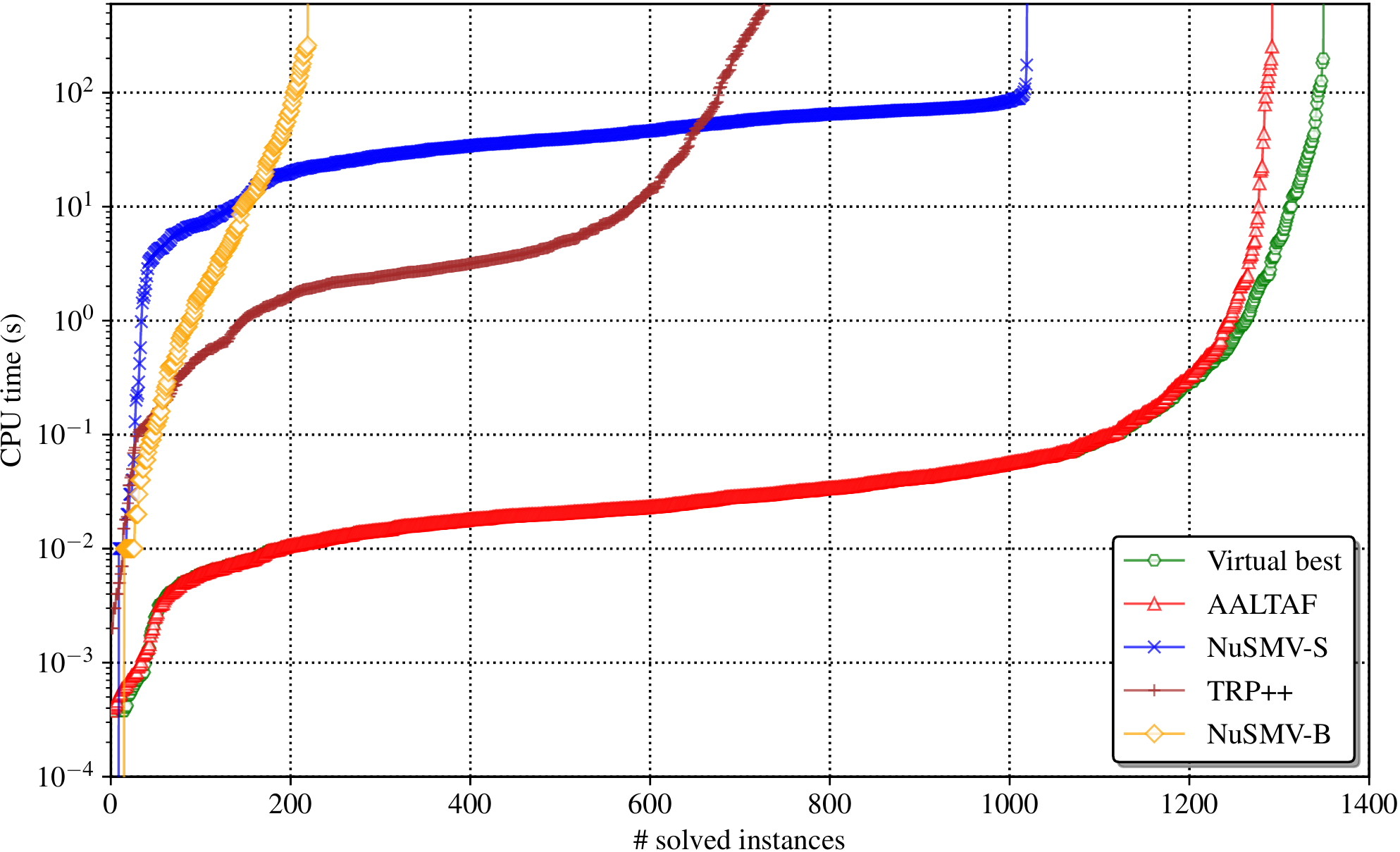} &
  \includegraphics[width=0.45\textwidth]{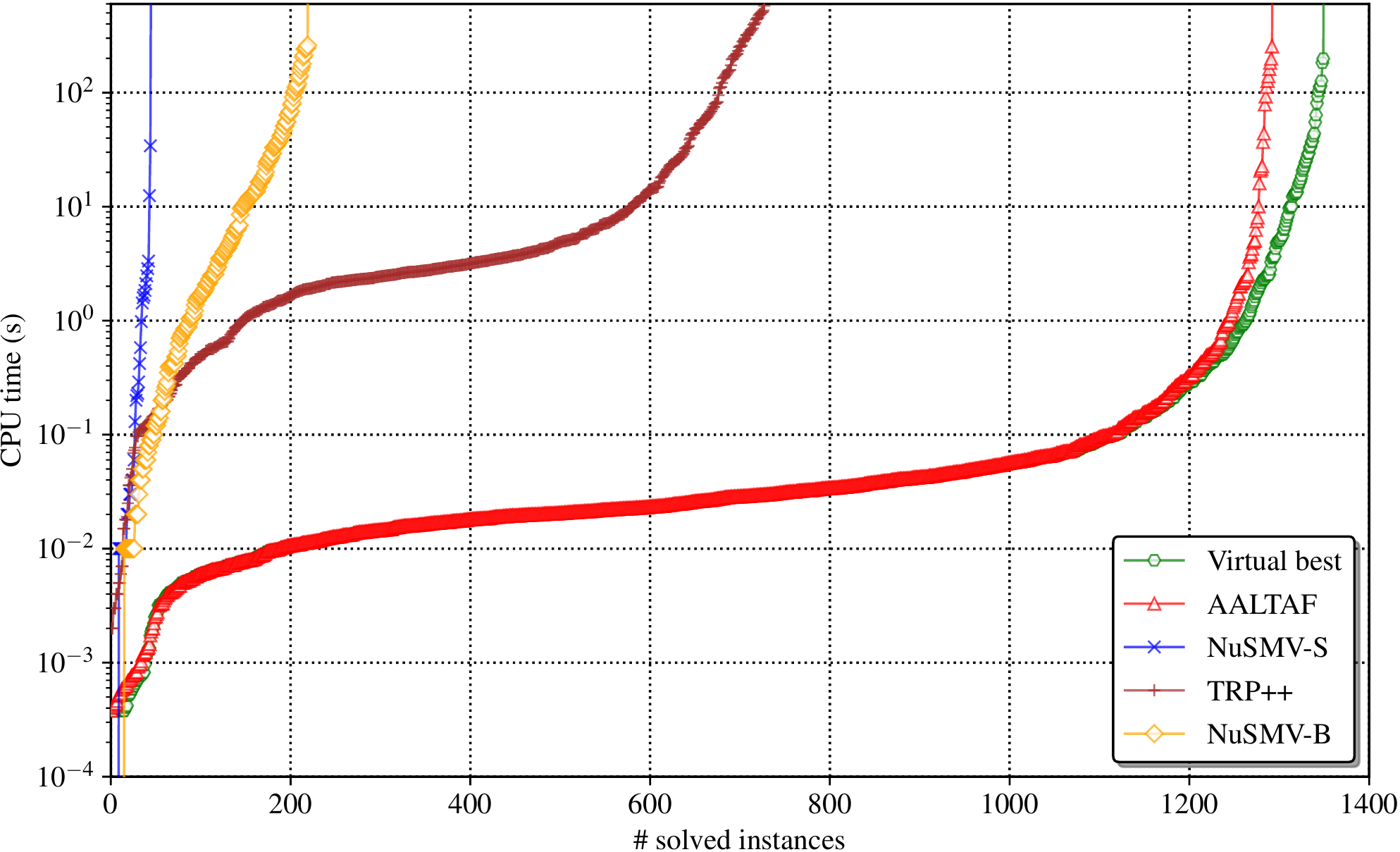}\\
    (a) & (b)
  \end{tabular}
  \caption{\mr{Cactus plots for the whole experimental evaluation.}}
  \label{fig:cactus}
\end{figure}


\begin{marcodo}
  The first result \cdc{is that, as expected,} all the tools reported consistent output
  when terminating without reaching a resource limit (being it memory,
  time or \cdc{search-space} depth). In other words, for all the considered benchmarks it was
  never the case that an algorithm 
  \cdc{declared the specification as satisfiable. This outcome is in line with the original findings in~\cite{DBLP:journals/ai/LiPZVR20}}. 
  \cdc{However, we remark} that 
  individual algorithms could extract different unsatisfiable cores 
  \cdc{among the diverse} possible ones.

  \cdf{Concerning the computation time,} Fig.~\ref{fig:cactus}(a) shows the number of problems solved by each
  algorithm
	\cdf{by \cdc{the \SI{10}{\minute} timeout on the abscissae and the time taken to solve them cumulatively on the ordinates.} 
	\cdc{Alongside the aforementioned tools, the figure illustrates the performance of the} 
	\emph{virtual best}, 
	that is the 
	minimum time required for each solved instance among the four implementations.}
  %
  Fig.~\ref{fig:cactus}(b) is similar to Fig.~\ref{fig:cactus}(a), although
  it 
  excludes the timings for \cdf{the} \nusmv-S runs that reached $k=50$
  albeit being unable to prove unsatisfiability (\emph{unknown}
  answer) and thus construct an unsatisfiable core.
  In this figure, \cdc{then,} the virtual best considers only the cases where the tool computed an unsatisfiable core. \cdc{Both plots show that \aaltaf outperforms the other implementations in the majority of cases, although the tail of the virtual-best curve on the right-hand side \cdf{of both plots} exhibits an influence from \trppp and \nusmv-B\mr{, thus witnessing complementarity of the proposed approaches}. In the remainder of this section, we investigate the comparative assessment more in depth.} 
  \cdc{The overall minimum, maximum, average, and median best timings to return an $\UC$
  	are \SIlist{0.0004;198.5054;1.4931;0.0282}{\second}, respectively.}

\begin{figure}[tbp]
  \centering
  \begin{tabular}{@{}c@{}c@{}}
    \includegraphics[width=0.48\textwidth]{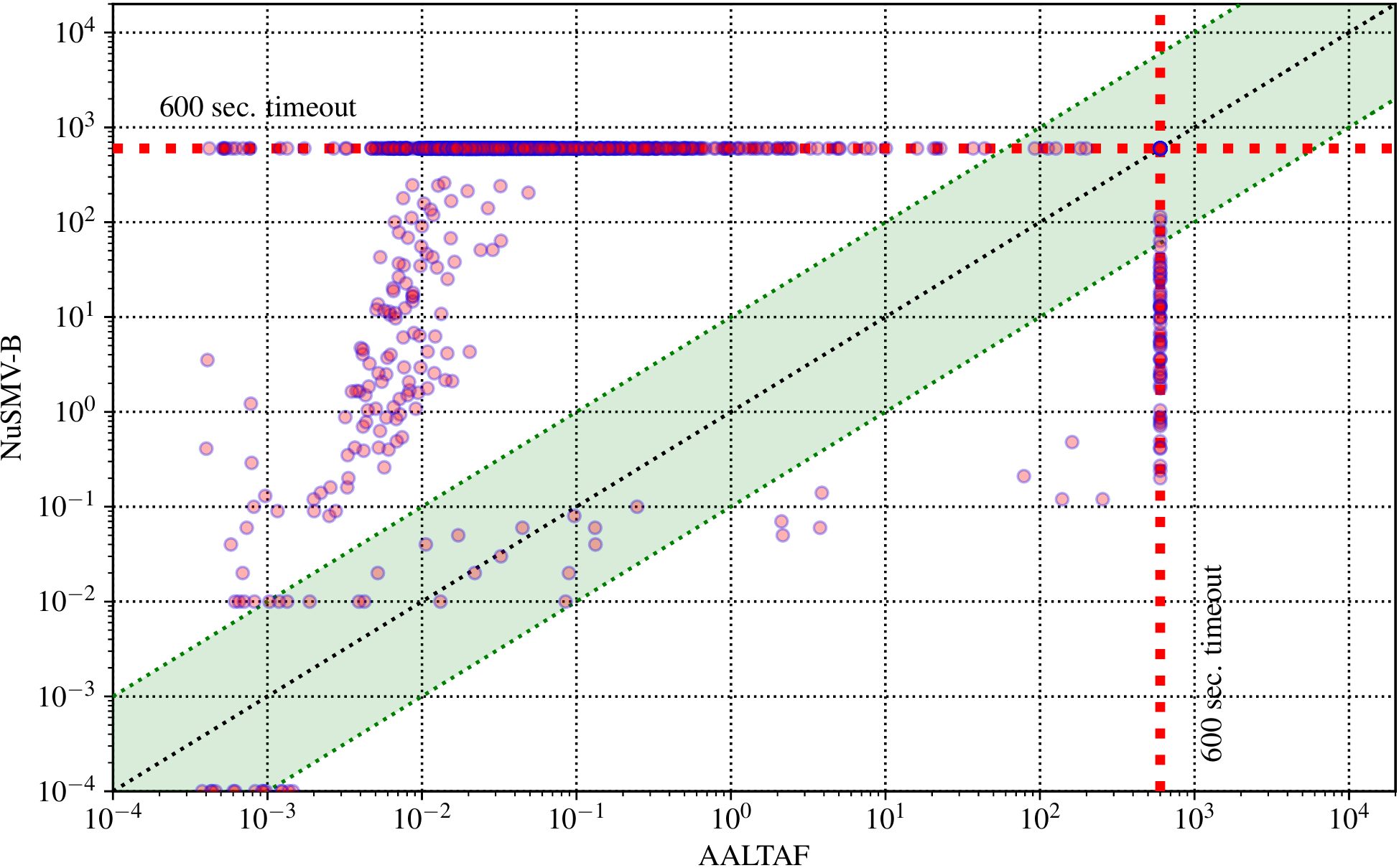} &~
    \includegraphics[width=0.48\textwidth]{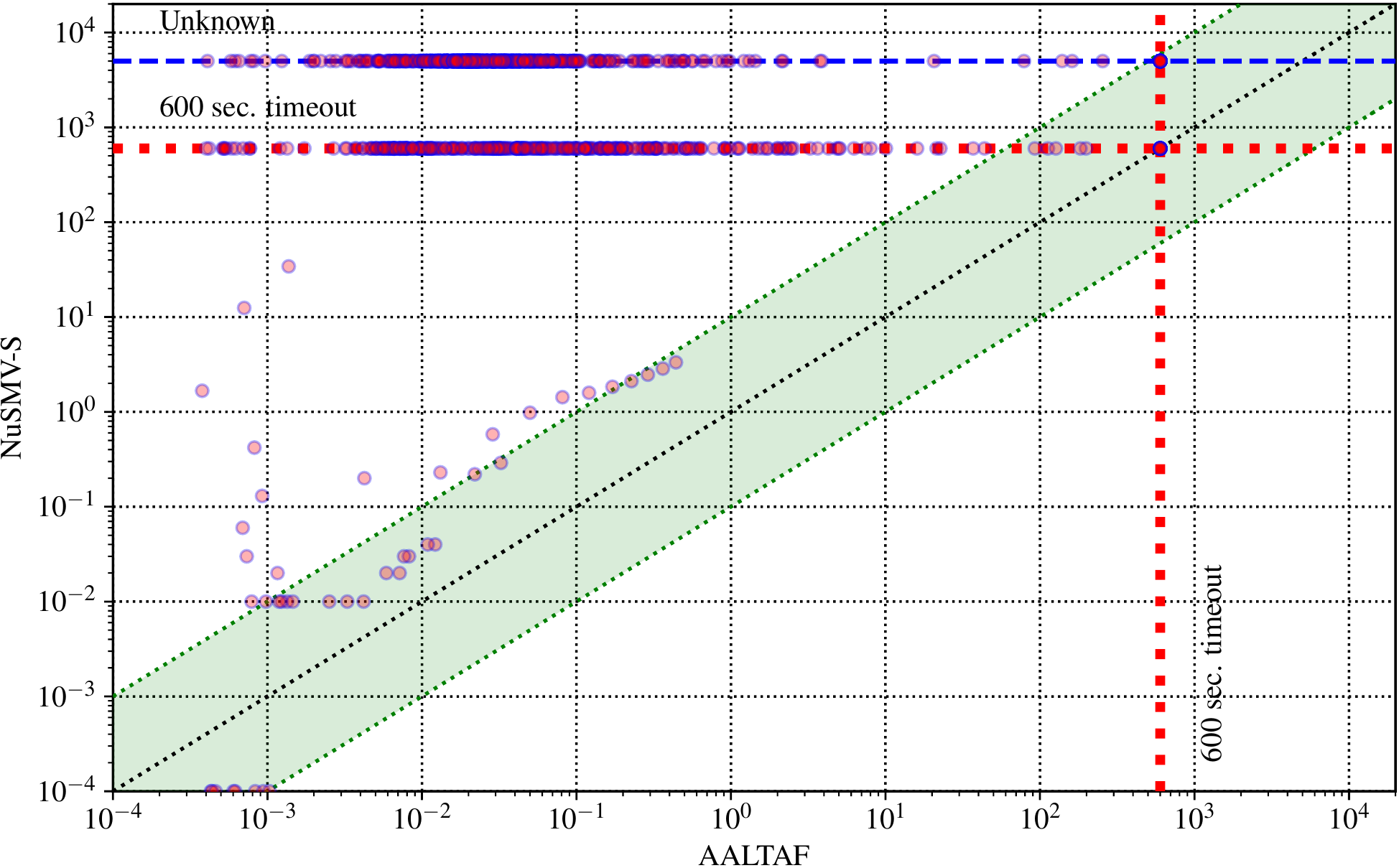} \\
    (a) & (b) \\
    \includegraphics[width=0.48\textwidth]{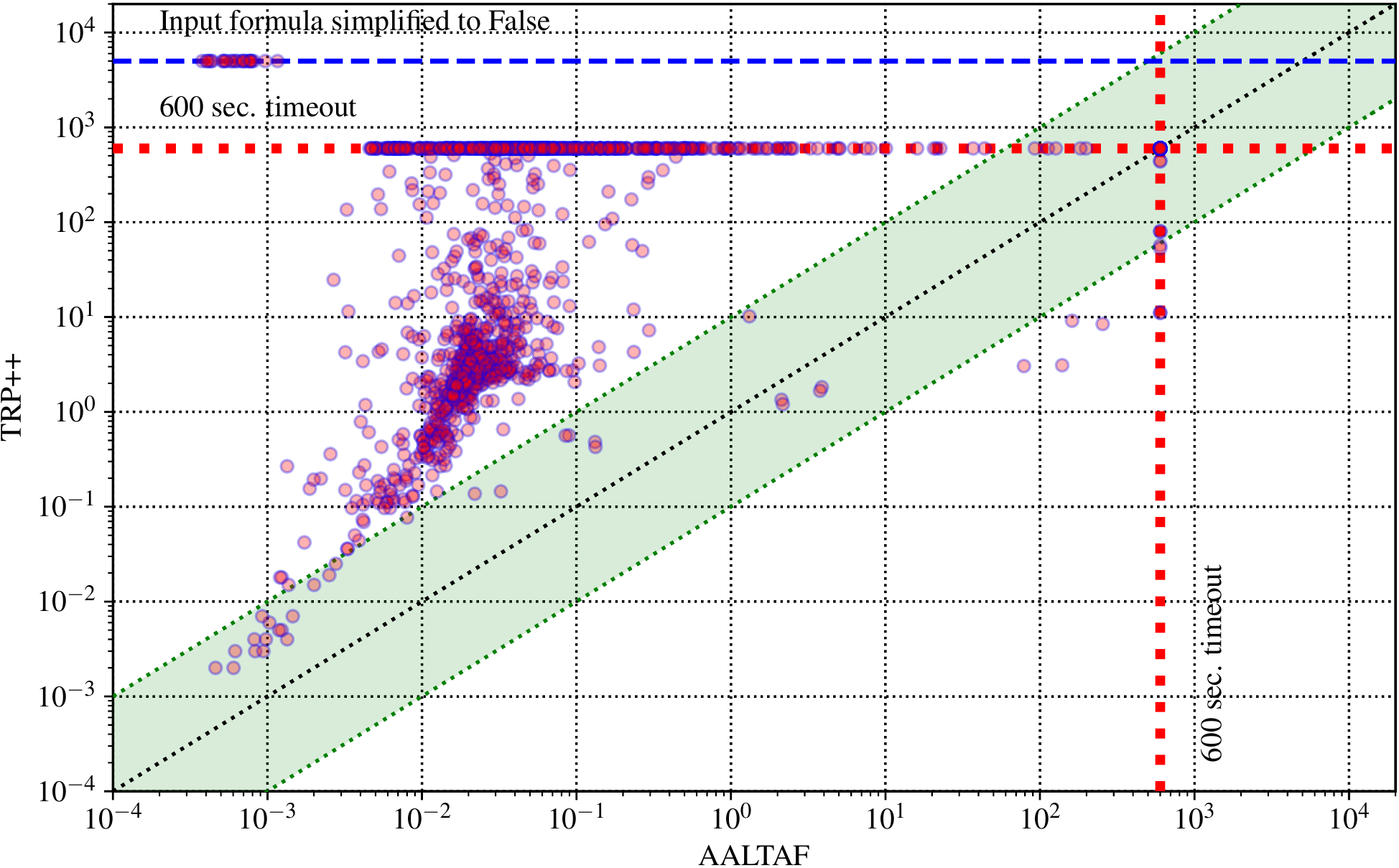} &~
    \includegraphics[width=0.48\textwidth]{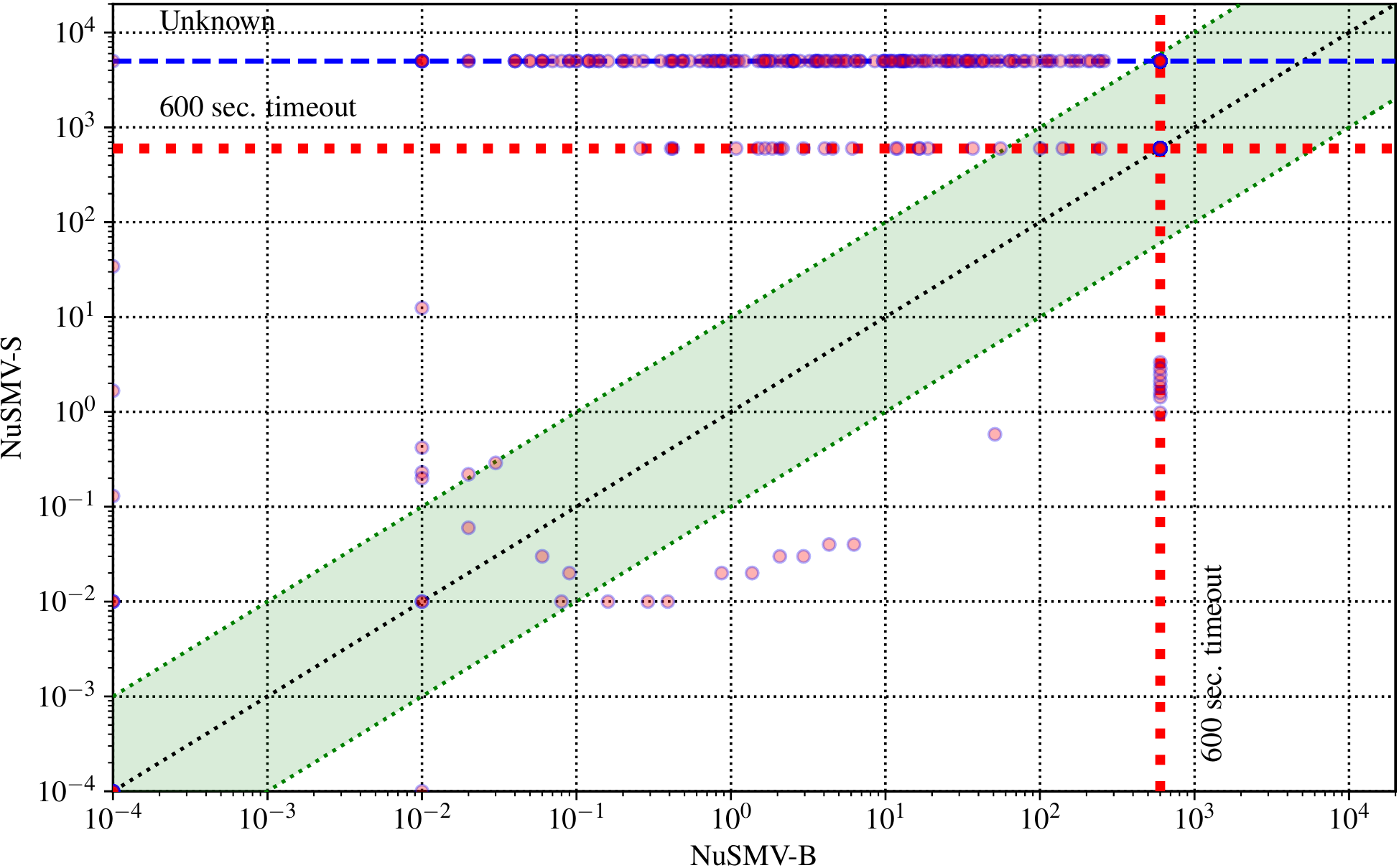} \\
    (c) & (d) \\
    \includegraphics[width=0.48\textwidth]{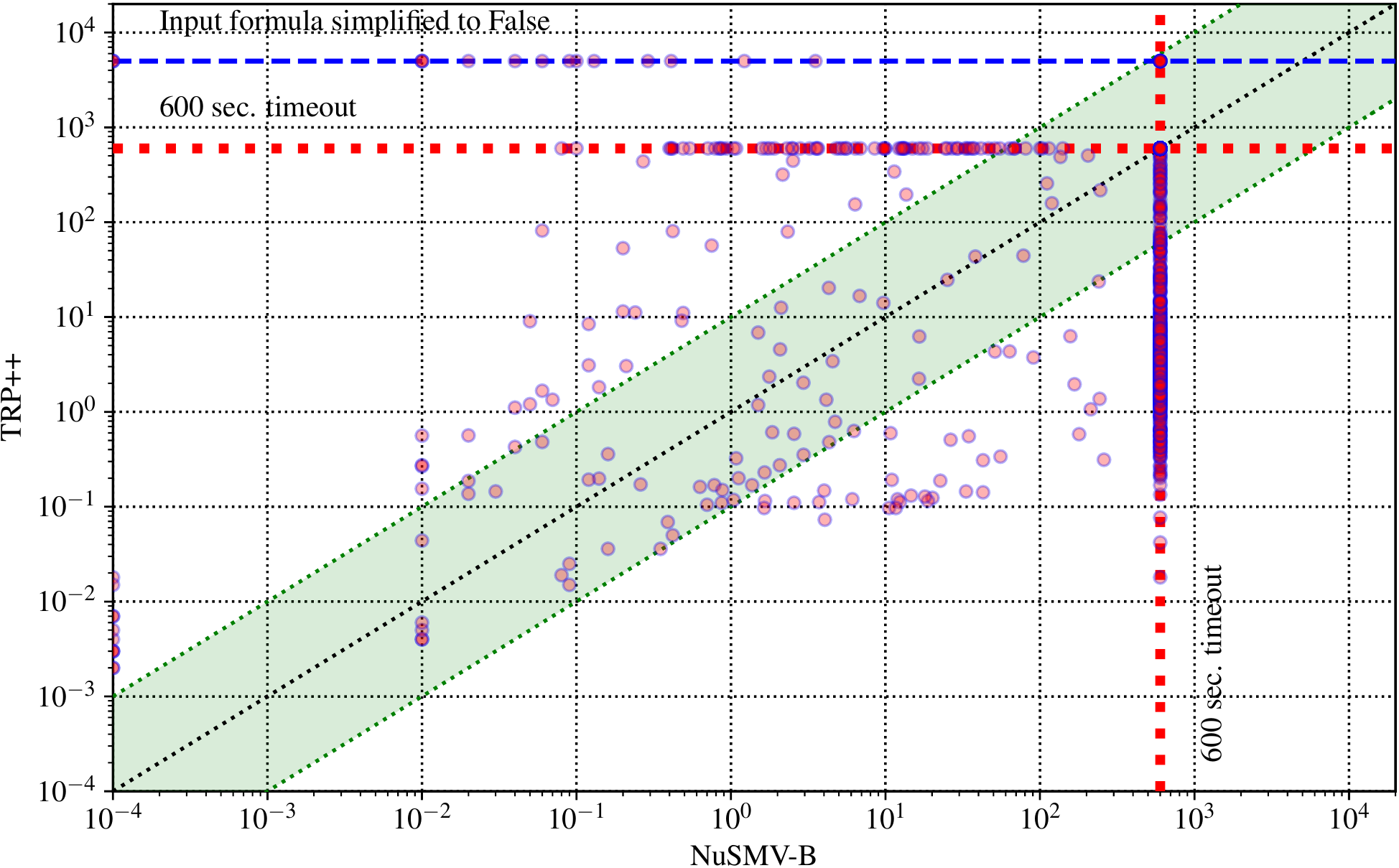}  &~
    \includegraphics[width=0.48\textwidth]{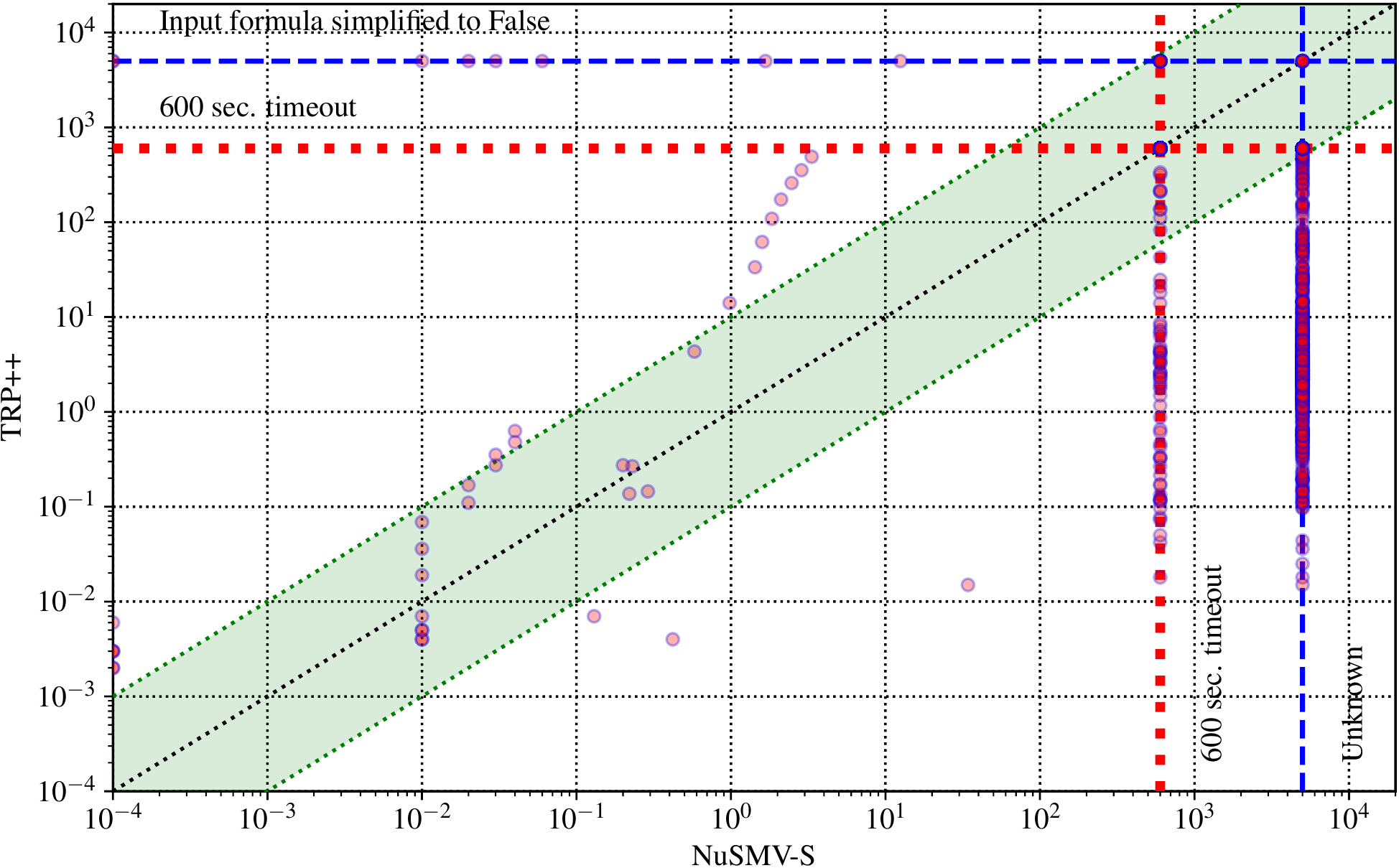}  \\
    (e) & (f)\\
  \end{tabular}
  \caption{\mr{Scatter plots comparing search timings for each algorithm pair.}}
  \label{fig:scatter-time}
\end{figure}

\mr{
\cdc{Figure~\ref{fig:scatter-time} illustrates pairwise comparisons of time efficiency of the \mr{considered} tools}. In particular, Fig.~\ref{fig:scatter-time}(a)
compares \aaltaf with \nusmv-B, Fig.~\ref{fig:scatter-time}(b)
compares \aaltaf with \nusmv-S, Fig.~\ref{fig:scatter-time}(c)
compares \aaltaf with \trppp,
Fig.~\ref{fig:scatter-time}(d) compares \nusmv-B with \nusmv-S,
Fig.~\ref{fig:scatter-time}(e) compares \nusmv-B with \trppp,
and finally Fig.~\ref{fig:scatter-time}(f) compares \nusmv-S with \trppp.
\cdf{Figure~\ref{fig:scatter-time}(c), for instance, shows that \aaltaf outperforms \trppp: most of the points, indeed, are located \cdc{above} the diagonal, thus indicating that \aaltaf requires less time than \trppp to return the unsatisfiable cores. The plot also shows that \trppp exceeds the timeout in several cases (points on the red line \cdc{marked with ``600 sec.\ timeout''}). \cdc{Furthermore, we observe that \trppp operates a pre-processing phase on the input specification prior to the actual identification of \UCs. If it manages to reduce the given set of conjuncts to false at that stage, it stops the computation and raises an alert. The points lying on the line marked with ``Input formula simplified to False'' indicate those cases. Notice that, this simplification occurred in \num{26} cases overall, as depicted in Figs.~\ref{fig:scatter-time}(c),~(e)~and~(f).}} 
\cdf{Moreover,} \nusmv-B, \trppp, \nusmv-S, and \aaltaf
reached the timeout \cdf{overall} in \num{1158}, \num{624}, \num{358} and \num{85}
cases, respectively.
\nusmv-S was able to conclude that the formula was unsatisfiable only in
\num{53} cases, and returned an unknown answer \mr{(i.e., it reached $k=50$ without being
able to decide on unsatisfiability)} 
in \num{985} cases.
%
\cdf{We can \cdc{conclude} that, in terms of \cdc{computation time, \aaltaf outperforms the other three tools}, \nusmv-B \cdc{requires less computation time than} \nusmv-S and \trppp \cdc{in the majority of cases}, and \trppp reveals \cdc{slightly faster and less subject to timeouts} 
than \nusmv-S.}

%
}

\begin{figure}[t!]
	\centering
	\begin{tabular}{@{}cc@{}}
		\begin{tabular}{c}\includegraphics[width=0.35\textwidth]{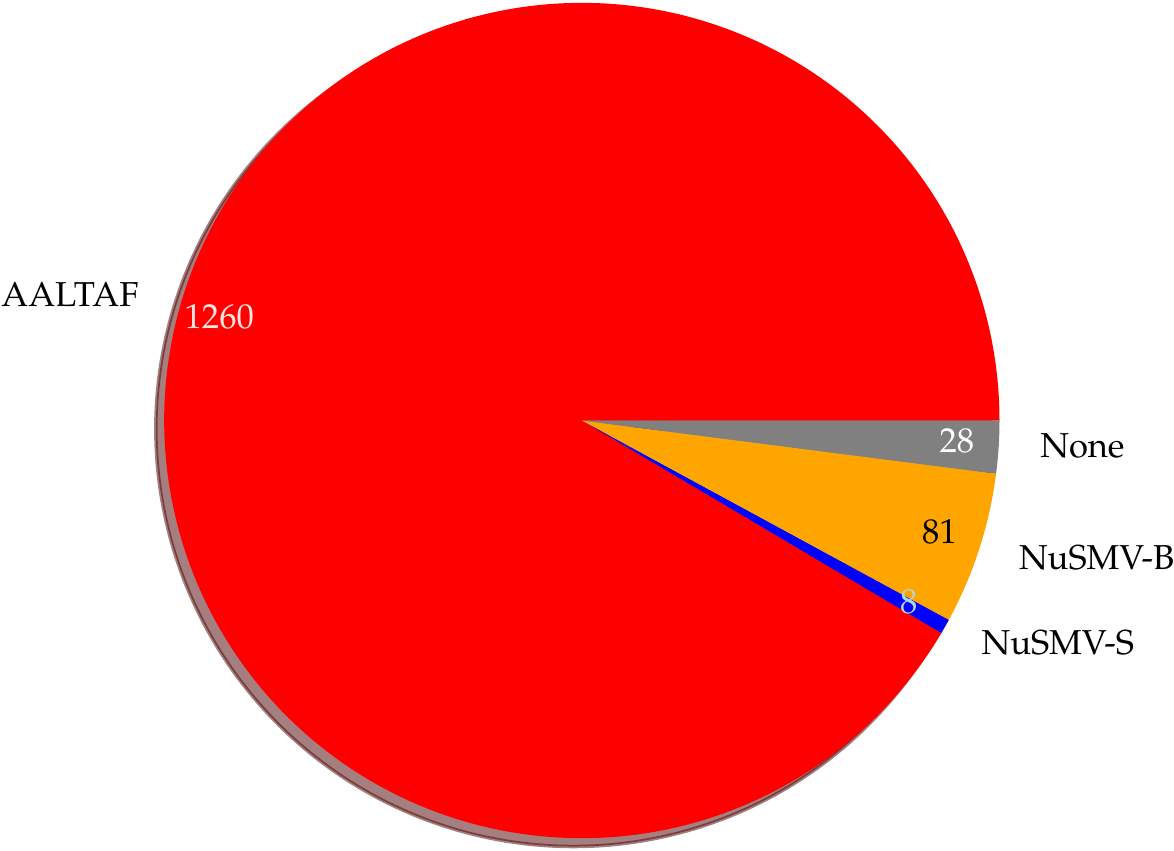}\end{tabular} &  
		\begin{tabular}{c}\includegraphics[width=0.55\textwidth]{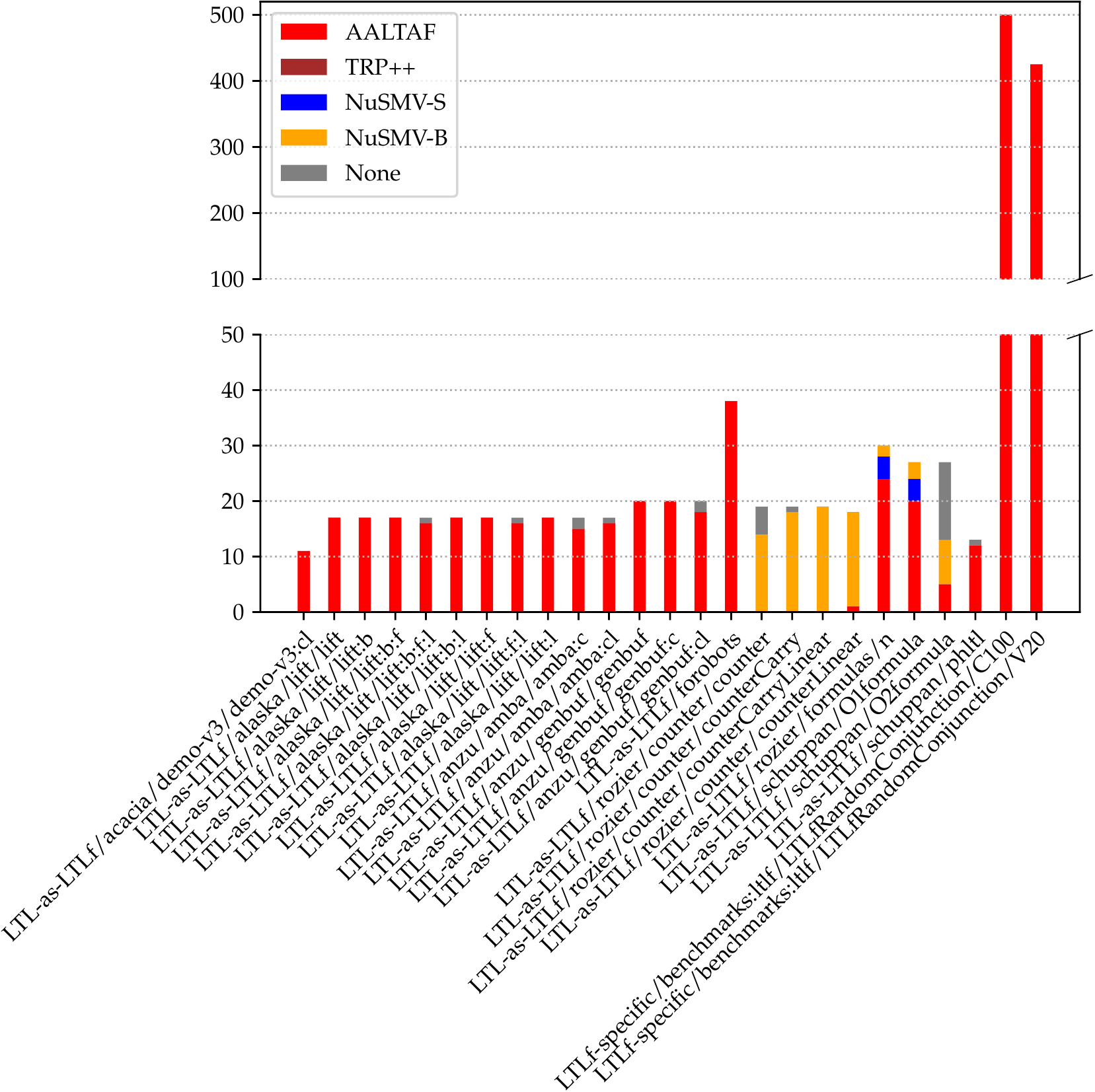}\end{tabular} \\  
		(a) & (b)
	\end{tabular}
	\caption{Pie chart reporting how the different solvers contributed
		to the virtual best (a); Contribution of each solver to the
		virtual best per benchmark family (b), i.e. for each family we
		report the number of times each algorithm was performing best.}
	\label{fig:time-pie-charts}
\end{figure}
%
%
%
%
%
\cdc{Henceforth, we consider the sole cases in which the tools were able to return an $\UC$ within the given resource limits -- thus excluding timeouts, unknown answers and simplifications to false. 
Figure~\ref{fig:time-pie-charts} focuses on computation time:  
\begin{enumerate*}[label=(\roman*)]
	\item the pie chart in Fig.~\ref{fig:time-pie-charts}(a) gives an overview of the number of tests in which a tool was the fastest, and
	\item the stacked bar chart in Fig.~\ref{fig:time-pie-charts}(b) illustrates the results grouped by benchmark family.
\end{enumerate*}

The pie chart in Fig.~\ref{fig:time-pie-charts}(b) confirms that \aaltaf is the most time-efficient tool 
as evidenced by its \num{1260} fastest runs, followed by \nusmv-B (\num{81}), and \nusmv-S (\num{8}). In \num{28} cases, no tool was able to return an unsatisfiable core.
As shown by Fig.~\ref{fig:time-pie-charts}(b), \cdf{\nusmv-B}
 turned out to find the $\UC$ in minimum time with the 
\textsl{LTL-as-LTLf/rozier/counter/*},
benchmark families.
}
\cdf{Moreover, the  plot also shows that the problems of the \textsl{LTL-as-LTLf/schuppan/O2Formula} benchmark family are the most challenging ones for \cdc{all the implemented techniques}. Indeed, \num{14} out of the \num{28} problems that were not solved by any tools belong to this benchmark family.}
\mr{%
\cdf{Moreover,} the analysis
\cdf{per} benchmark family (see
Fig.~\ref{fig:time-pie-charts}(b)) confirms the superiority \cdc{in terms of time efficiency of}
\aaltaf in many benchmark families.
%
%
%

\begin{figure}[tbp!]
	\centering
	\begin{tabular}{@{}c@{}c@{}}
		\includegraphics[width=0.45\textwidth]{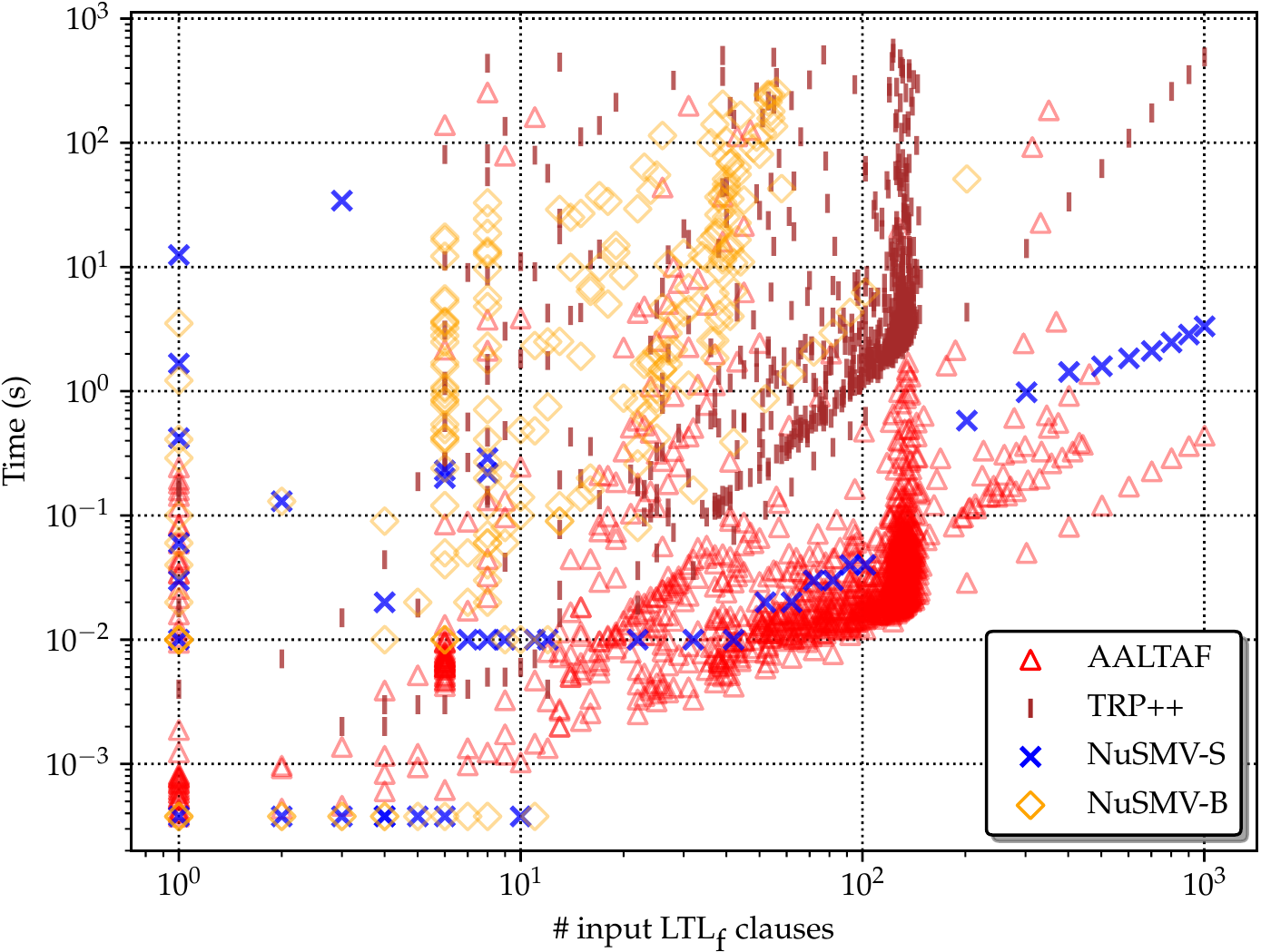}
	\end{tabular}
	\caption{\mr{Number of conjuncts in $\Gamma$ and respective computation time (in seconds) for each algorithm.}}
	\label{fig:clauses-vs-time}
\end{figure}
\begin{figure}[tbp!]
	\centering
	\begin{tabular}{@{}c@{}c@{}c@{}}
		\includegraphics[width=0.33\textwidth]{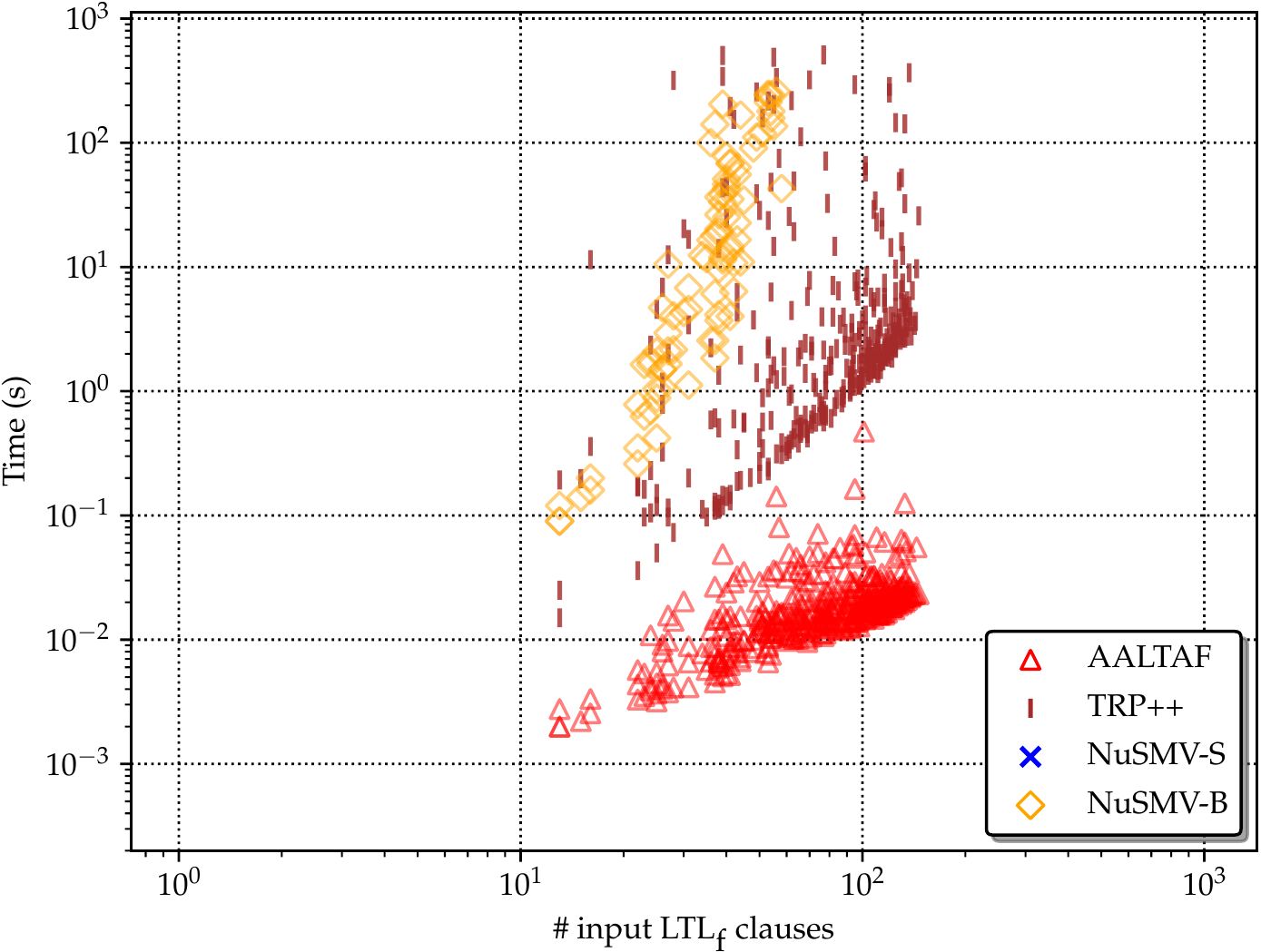} &~
		\includegraphics[width=0.33\textwidth]{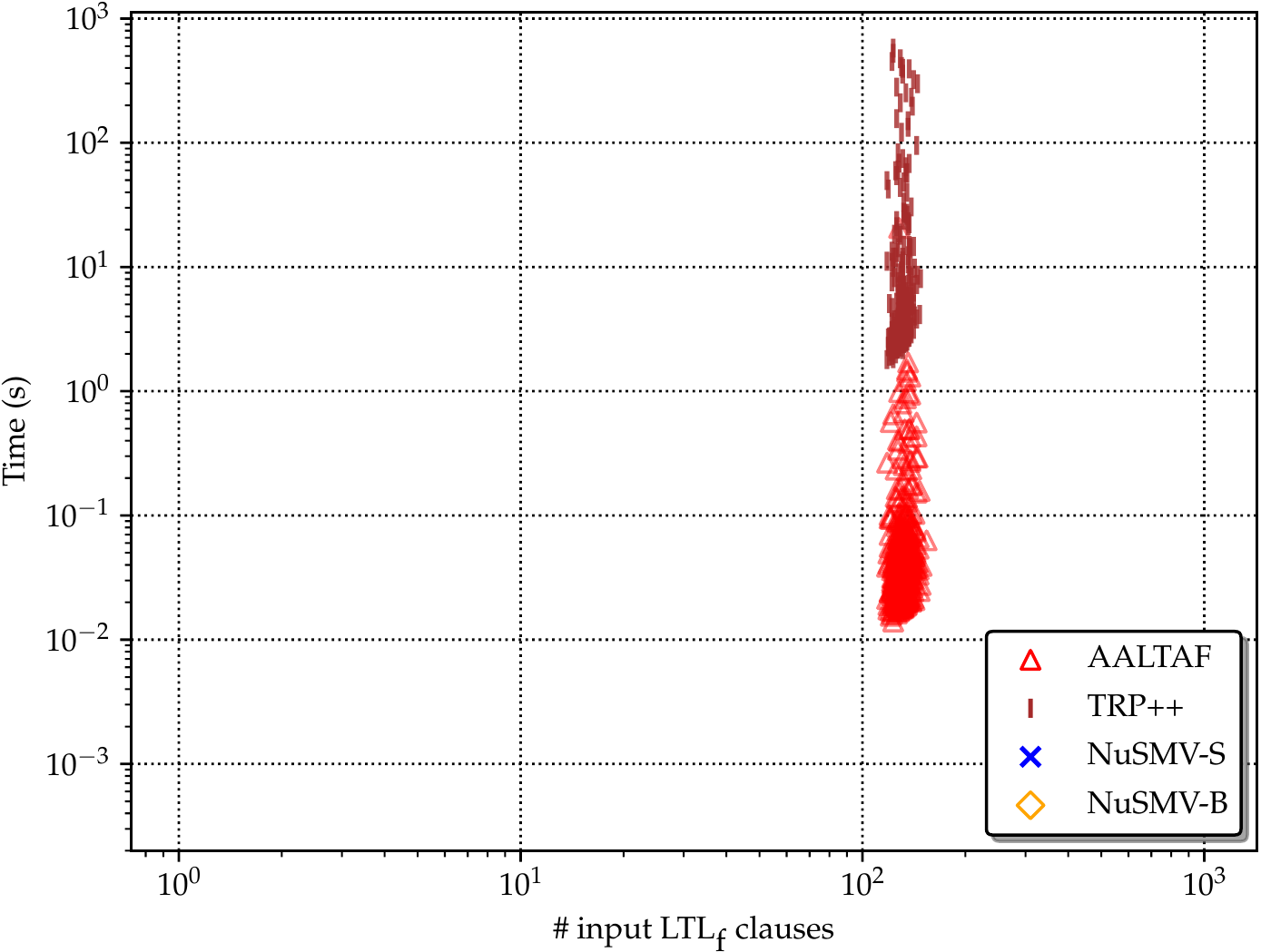} &~
		\includegraphics[width=0.33\textwidth]{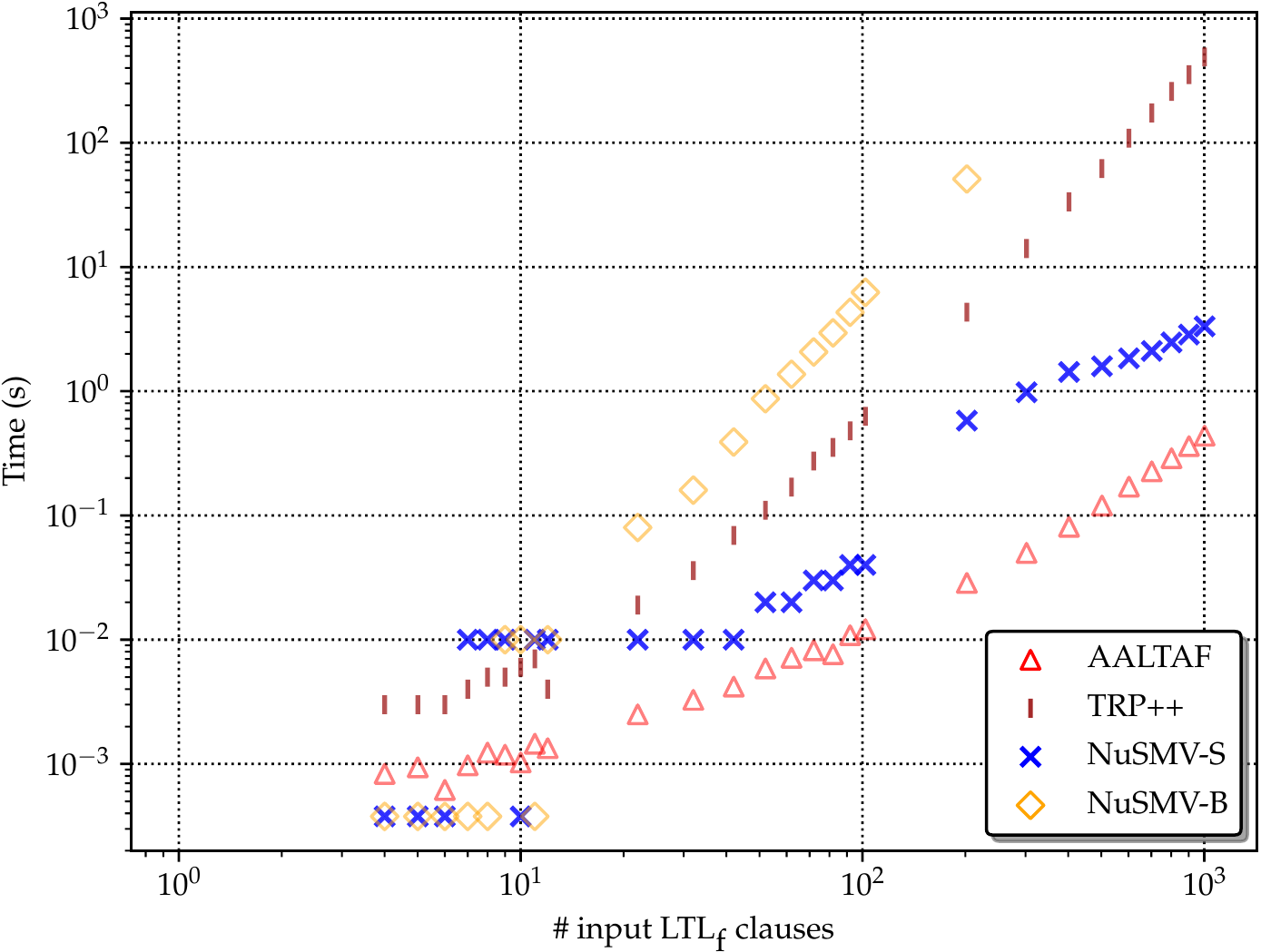} \\
		(a) & (b) & (c)\\
	\end{tabular}
	\caption{\mr{Number of conjuncts in $\Gamma$ and respective computation time (in seconds) for each algorithm for three categories:
			(a) \textsl{LTLf-specific/benchmarks/LTLFRandomConjunction/V20},
			(b) \textsl{LTLf-specific/benchmarks/LTLFRandomConjunction/C100}, and
			(c) \textsl{LTL-as-LTLf/schuppan/O1Formula}.}}
	\label{fig:clauses-vs-time-cat}
\end{figure}
\cdf{In order to further inspect the correlation between the time performance of the tools and the type of problems solved, we analyzed the relationship between the number of conjuncts of the problems and the corresponding computation time.}
Figure~\ref{fig:clauses-vs-time} plots the number of conjuncts in
$\Gamma$ (i.e., its cardinality) against the respective computation time
(in seconds) for each of the considered algorithms. 
Figure~\ref{fig:clauses-vs-time-cat} 
\cdc{isolates the points} stemming from \cdf{three} families \cdc{in particular}:
  \textsl{LTLf-specific/benchmarks/LTLFRandomConjunction/V20}
  (Fig.~\ref{fig:clauses-vs-time-cat}(a)),
  \textsl{LTLf-specific/benchmarks/LTLFRandomConjunction/C100}
  (Fig.~\ref{fig:clauses-vs-time-cat}(b)), and
  \textsl{LTL-as-LTLf/schuppan/O1Formula}
  (Fig.~\ref{fig:clauses-vs-time-cat}(c)).
\cdf{The plots show that a relationship exists between the number of \LTLf clauses and the computation time for all the four tools: the required time overall increases when the number of clauses increases. However, the number of clauses is not the only factor affecting the computation time. For instance, for the \textsl{LTLf-specific/benchmarks/LTLFRandomConjunction/C100} benchmark family (Fig.~\ref{fig:clauses-vs-time-cat}(b)), the computation time varies independently of the number of conjuncts,} \cdc{which ranges in a short interval (\num{118} to \num{154} clauses, as per Table~\ref{tab:families}). Also, we can observe that\cdf{, with this benchmark family,} neither \nusmv-S nor \nusmv-B could return an $\UC$ under the imposed experimental resource constraints, while \aaltaf appears to be faster than \trppp, following the general trend.
Figures~\ref{fig:clauses-vs-time-cat}(a)~and~\ref{fig:clauses-vs-time-cat}(c) illustrate the different rapidity with which curves increase with the number of conjuncts: the sharpest one is associated to \nusmv-B, followed by \trppp and \nusmv-S (the latter performing better than the former with smaller sets of conjuncts, though, as depicted in Fig.~\ref{fig:clauses-vs-time-cat}(c)).
The most gradual upward trend belongs to the curve of \aaltaf.
} 
\end{marcodo}

\begin{figure}[tbp]
	\centering
	\begin{tabular}{@{}cc@{}}
		\begin{tabular}{c}\includegraphics[width=0.35\textwidth]{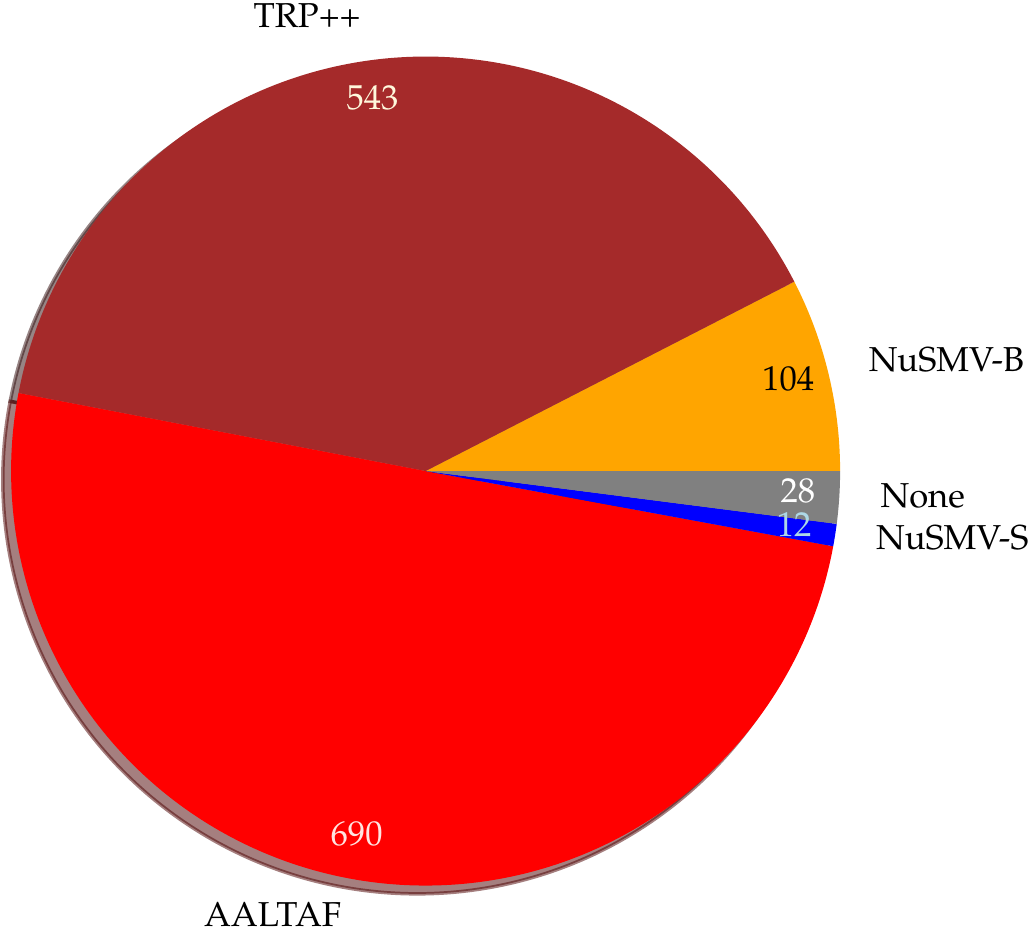} \end{tabular} &  
		\begin{tabular}{c}\includegraphics[width=0.55\textwidth]{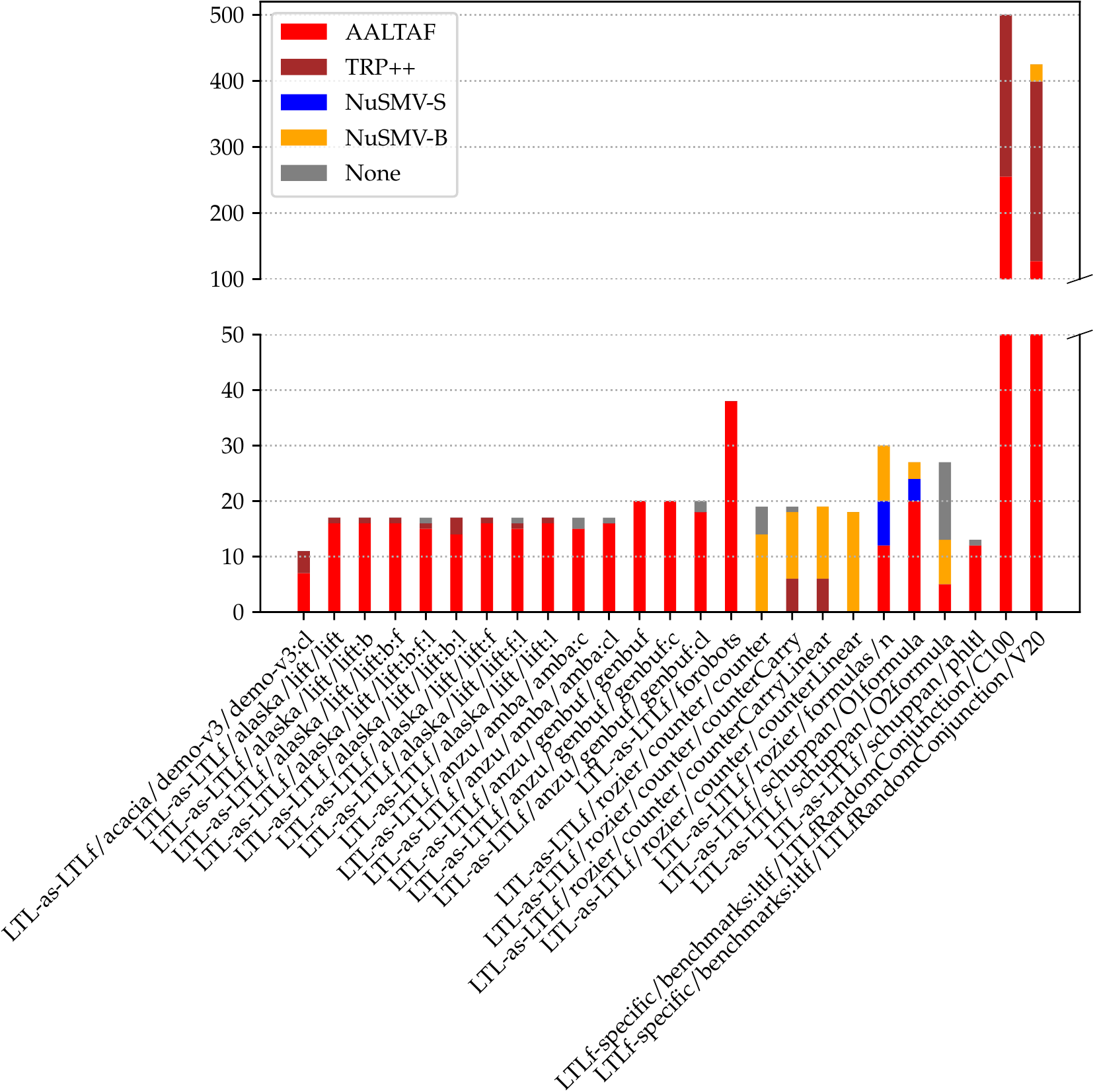} \end{tabular} \\  
		(a) & (b) \\
	\end{tabular}
	\caption{Number of benchmarks in which the solver returned an
		unsatisfiable core of the smallest size (a) for the entire set of benchmarks,
		and (b) per benchmarks family.}
	\label{fig:size-pie-charts}
\end{figure}
%
%
%
%
As far as the cardinality of the extracted \UCs is concerned,
\mr{Fig.~\ref{fig:size-pie-charts} depicts the result of our analysis in the
different benchmarks families.}
As shown
in the pie chart of Fig.~\ref{fig:size-pie-charts}, we report that
\aaltaf, \trppp, \nusmv-B, and \nusmv-S extract \UCs
that are the smallest in size%
\footnote{\mr{Notice that, by ``smallest'' we mean the unsatisfiable core of smallest
		cardinality among the ones computed by the solvers. Notice that, the smallest \UC does not
		necessarily correspond to the minimum one, as discussed when
		presenting the algorithms.}
	\cdc{Also, we remark that when two solvers return an \UC of the
	same size, we associate the best result to the tool that took the lowest time
	to return it.}} in \cdc{\num{690}, \num{543}, \num{104}}
and
\num{12} cases, respectively.
\cdc{The overall minimum, maximum, average, and median cardinality
of the smallest computed \UCs were \num{1}, \num{74}, \num{6.615}, and \num{4}, respectively.}

\cdc{We observe that}
\mr{%
	for the \textsl{LTL-as-LTLf/rozier/counter/*} benchmarks, \cdf{\nusmv-B}
	computes the unsatisfiable core with the smallest size \cdc{in the majority of cases}.
	Notice that, it is also
	the one that \cdc{most often} performs best in terms of search time (see
	Fig.~\ref{fig:time-pie-charts}(b)).
	%
}
\cdf{\nusmv-B is also able to obtain the smallest \UC with} \cdc{most of the benchmarks within the \textsl{LTL-as-LTLf/schuppan/O2formula} family.}
\cdf{On all other benchmarks, \aaltaf outperforms the other algorithms, and \cdc{with the \textsl{LTLf-specific/*} benchmarks, \trppp is the second best solver to find the smallest \UCs after \aaltaf.}
}
%
These results suggest that \nusmv-B could be preferred on benchmarks
with fewer propositional variables and larger temporal depth.
However, the SAT based approaches seem to work
better on benchmarks with a higher number of propositional variables that are not
always directly correlated with one another. Indeed, in these last
cases, BDDs may suffer a blow-up in size due to the canonicity of the
representation (BDD dynamic variable ordering could help 
\cdc{though to a limited extent}~\cite{DBLP:conf/eurodac/FeltYBS93}).}
\cdf{Notice that, all solvers 
	were not capable of dealing with most of the the big conjunctions of formulas in \textsl{LTL-as-LTLf/schuppan/O2Formula}%
	}
\cdc{(the corresponding tallest stacked bar in Fig.~\ref{fig:size-pie-charts} is labeled as ``None'', indeed).}

%

\begin{figure}[t!]
  \centering
  \begin{tabular}{@{}c@{}c@{}c@{}}
    \includegraphics[width=0.32\textwidth]{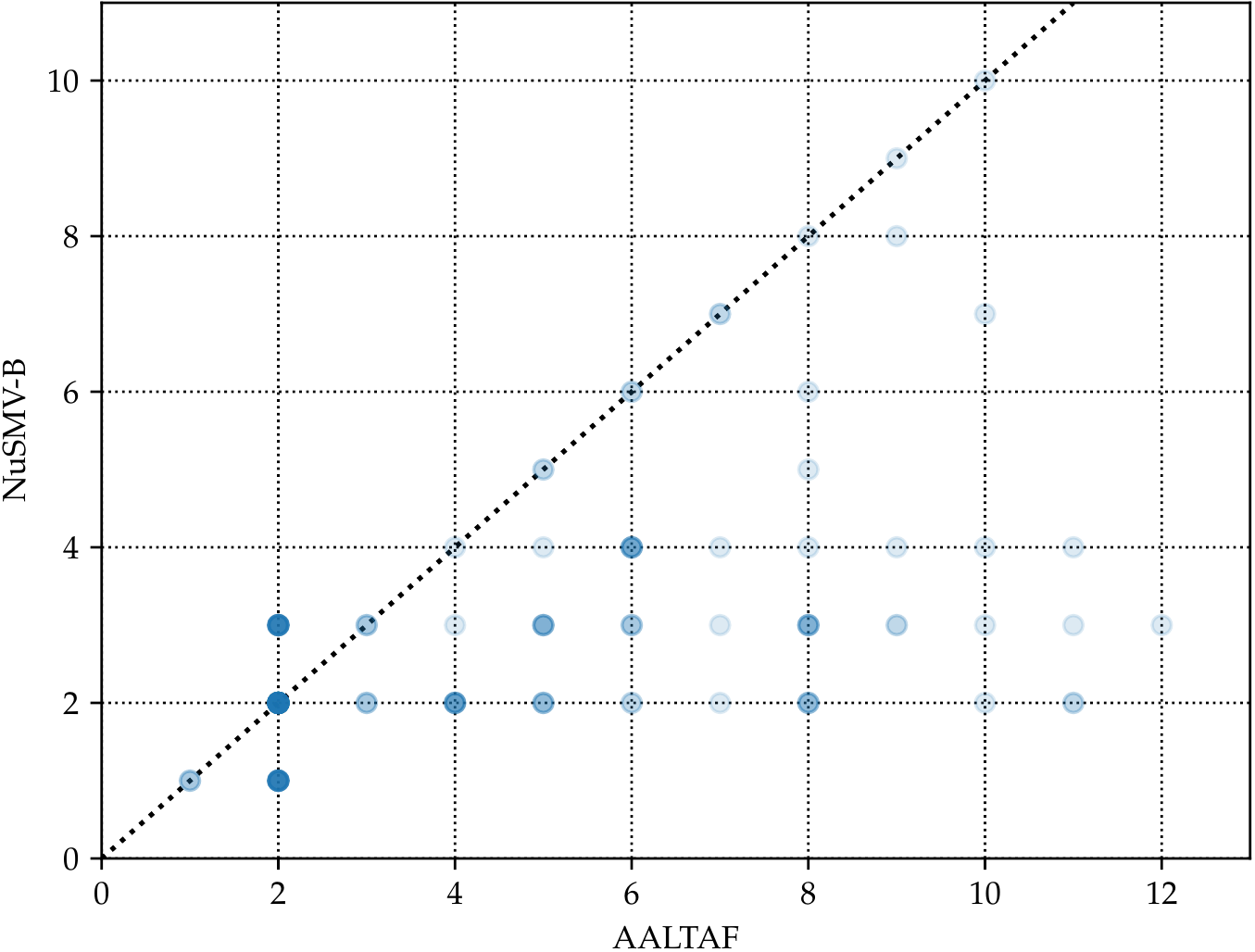} &~
    \includegraphics[width=0.32\textwidth]{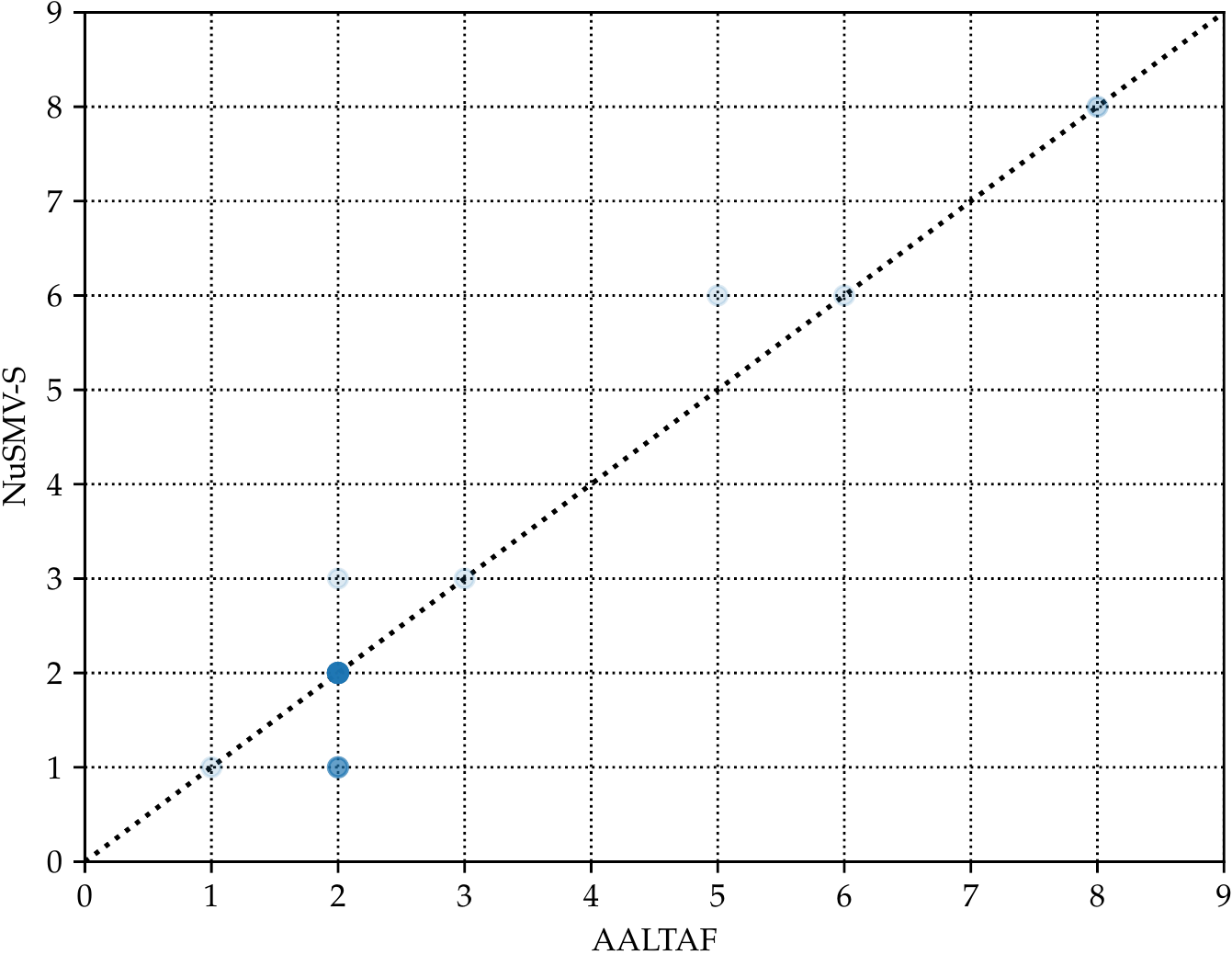} &~
    \includegraphics[width=0.32\textwidth]{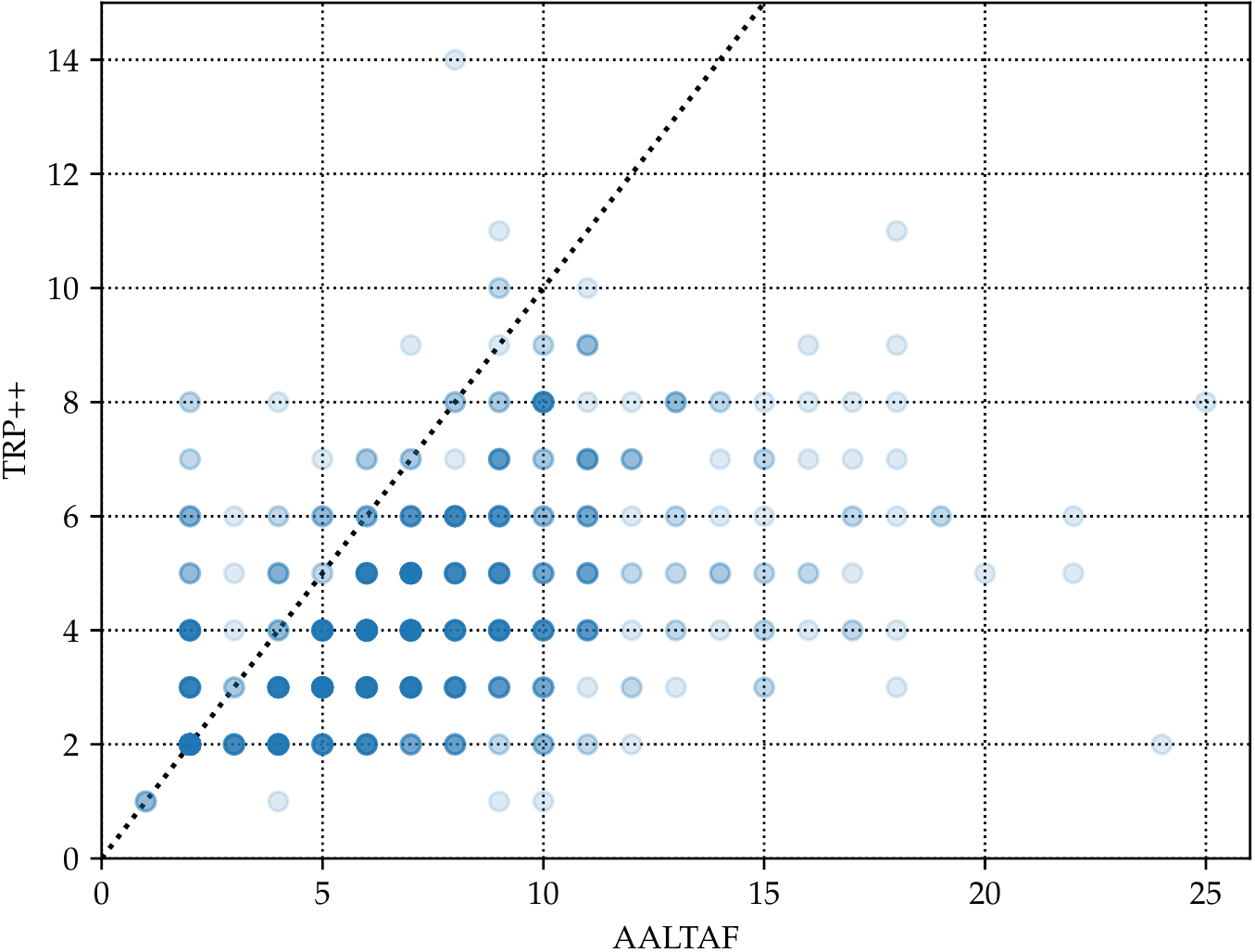} 
     \\
    (a) & (b) & (c)\\
    \includegraphics[width=0.32\textwidth]{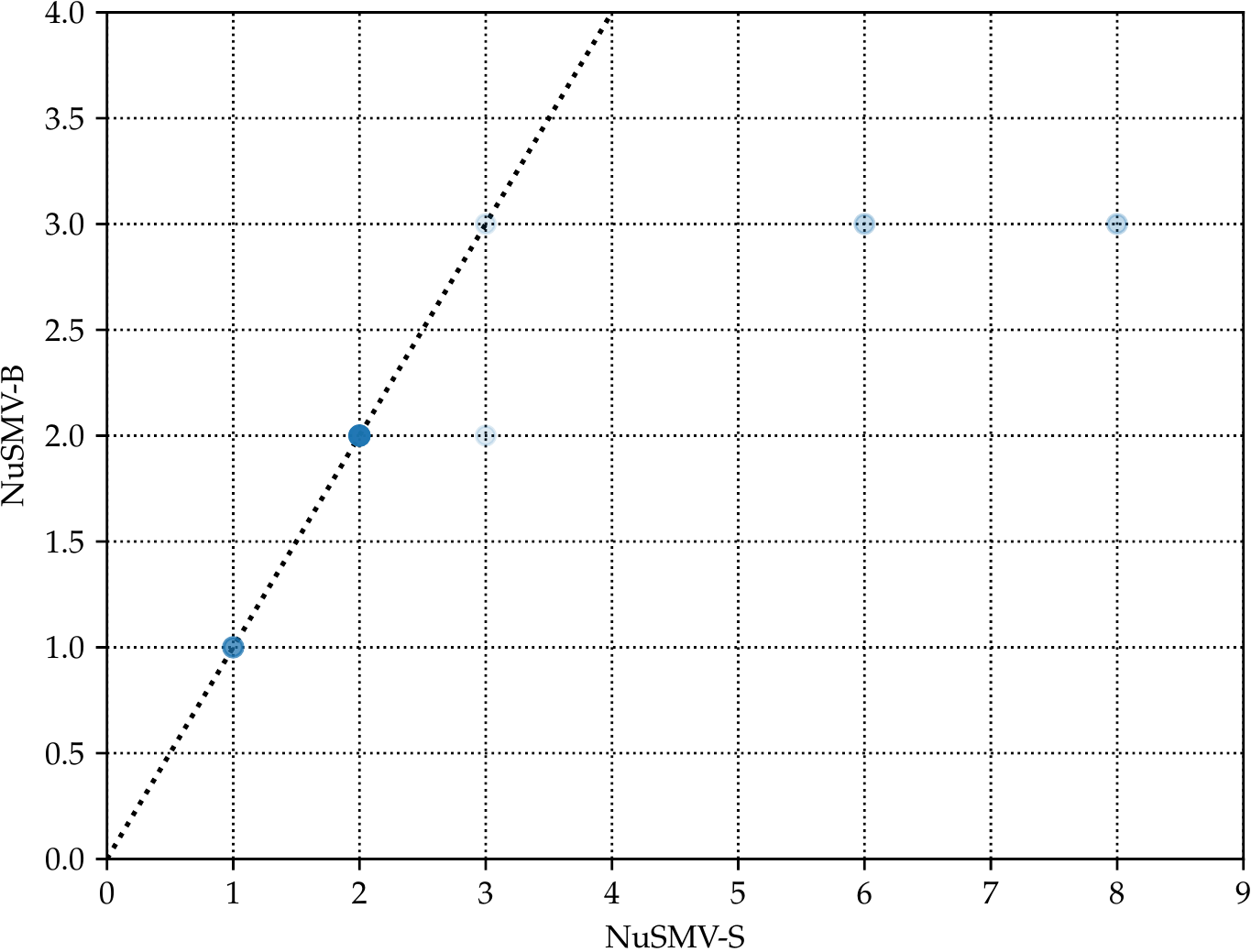} &~
    \includegraphics[width=0.32\textwidth]{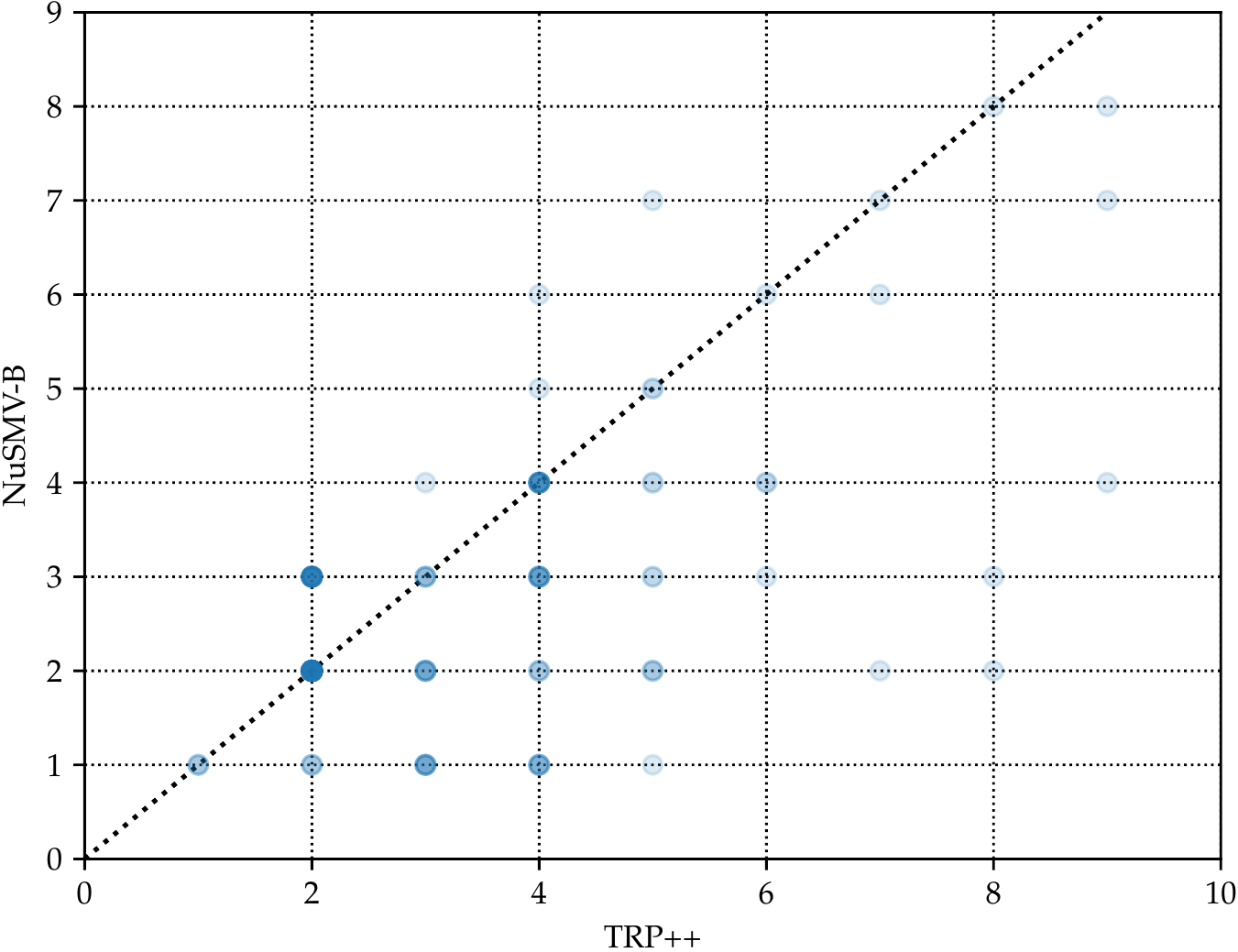}  &~
    \includegraphics[width=0.32\textwidth]{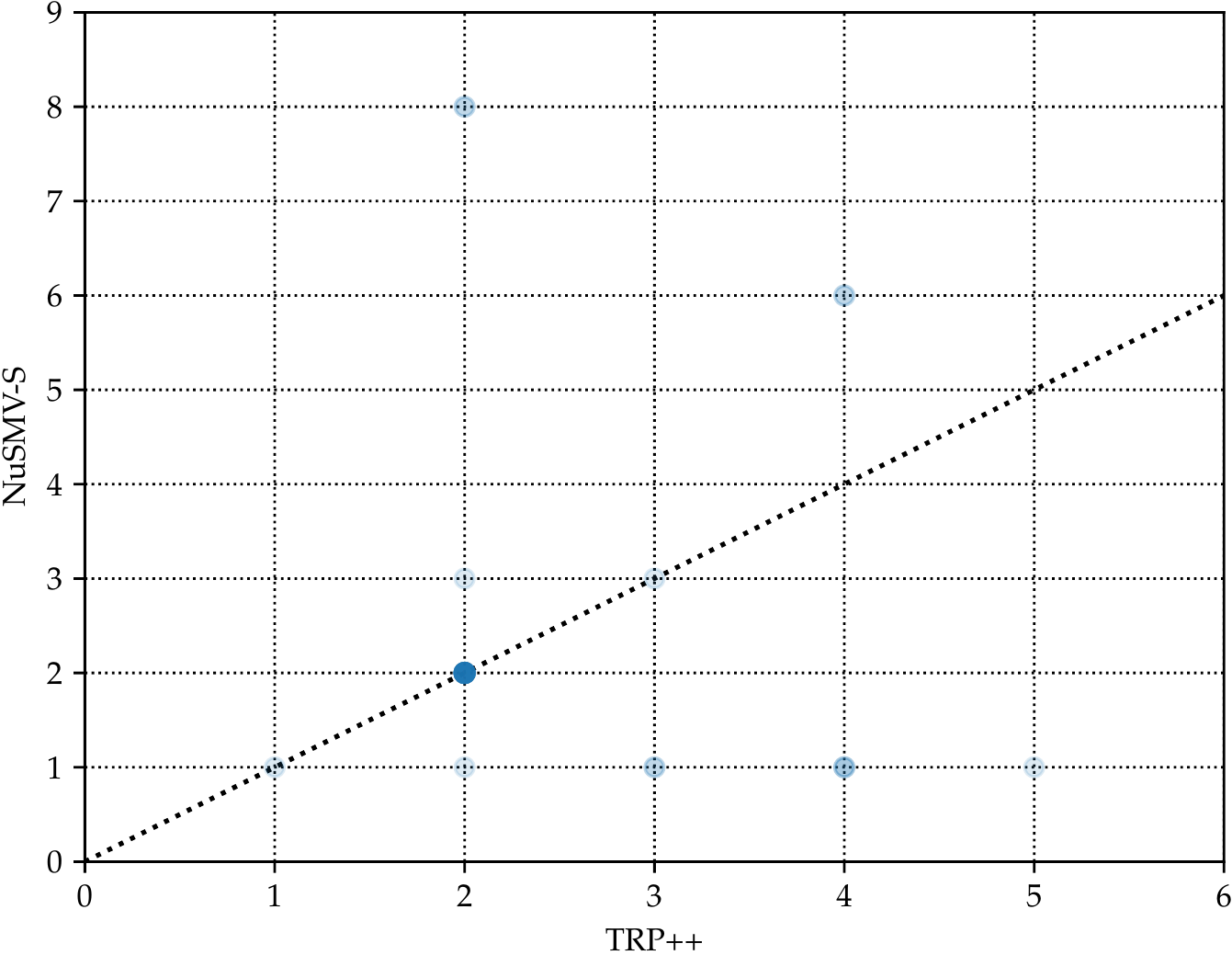}  \\
    (d) & (e) & (f)\\
  \end{tabular}
  \caption{\mr{Scatter plots comparing the cardinality of computed \UCs for each algorithm pair.}}
  \label{fig:scatter-cardinality}
\end{figure}


Figure~\ref{fig:scatter-cardinality}
plots the
\mr{pairwise comparison between different tools} on the subset of the
cases where both approaches were able to compute the \UC.
\cdf{For instance, Fig.\ref{fig:scatter-cardinality}(a) compares the cardinality of the \UCs returned by \aaltaf and the cardinality of the \UCs returned by \nusmv-B. The plot shows that \nusmv-B \UCs are smaller than the ones returned by \aaltaf: most of the points are indeed located below the diagonal. The intensity of the points represents the number of cases for which the two algorithms returned \UCs with the specific cardinalities corresponding to the coordinates of the point in the plot. Overall, we can observe that when a solution is returned, the cardinality of the \UCs returned by \trppp and \nusmv-B is often lower than the cardinality of the \UCs returned by \aaltaf.}
%


\mr{To conclude, we remark that these results
\begin{enumerate*}[label=(\roman*)]
\item evidence an overall better performance of \aaltaf both in terms of time efficiency and cardinality of the extracted \UCs, and
\item emphasise a complementarity of the proposed approaches.
\end{enumerate*}
\cdc{Table~\ref{tab:of:fame} summarizes the findings above}. 
Observe that \cdf{none of the algorithms outperforms all the others on every benchmark. For example, }
 \nusmv-S and \nusmv-B end up in a timeout and return an unknown answer in considerably many cases,
\cdf{so that} a number of problems are solved by only one of them.
\cdc{\aaltaf does not always turn out to return the smallest \UC: 
in a number of cases, \trppp, \nusmv-B and \nusmv-S extract \UCs of a lower cardinality,
excelling in particular in those cases in which \aaltaf ends in a timeout.
A deeper investigation of the characteristics that lead to such behaviors paves the path for future work.}
}
%

\begin{table}[tbp]
  \caption[Tool performance]{Best results as per the cardinality of \UCs and wall-clock timings.}
  \label{tab:of:fame}
  \begingroup
  \renewcommand*{\arraystretch}{1.4}
  \begin{adjustbox}{max width=\textwidth}
\begin{tabular}{@{}l S[table-format=4.0] S[table-format=5.0]@{ (}S[table-format=3.2]@{\,\%)} S[table-format=5.0]@{ (}S[table-format=3.2]@{\,\%)}    S[table-format=5.0]@{ (}S[table-format=3.2]@{\,\%)} S[table-format=5.0]@{ (}S[table-format=3.2]@{\,\%)}    S[table-format=5.0]@{ (}S[table-format=3.2]@{\,\%)} S[table-format=5.0]@{ (}S[table-format=3.2]@{\,\%)}    S[table-format=5.0]@{ (}S[table-format=3.2]@{\,\%)} S[table-format=5.0]@{ (}S[table-format=3.2]@{\,\%)}    S[table-format=5.0]@{ (}S[table-format=3.2]@{\,\%)} }
	\toprule
	                                              &                           &                \multicolumn{4}{c}{\aaltaf}                &                \multicolumn{4}{c}{\trppp}                 &               \multicolumn{4}{c}{\nusmv-S}                &               \multicolumn{4}{c}{\nusmv-B}                &                 \multicolumn{2}{c}{}                  \\ \cmidrule(l){3-6} \cmidrule(l){7-10} \cmidrule(l){11-14} \cmidrule(l){15-18} 
	                                       Family & \multicolumn{1}{r}{Total} & \multicolumn{2}{c}{min.UC} & \multicolumn{2}{c}{min.time} & \multicolumn{2}{c}{min.UC} & \multicolumn{2}{c}{min.time} & \multicolumn{2}{c}{min.UC} & \multicolumn{2}{c}{min.time} & \multicolumn{2}{c}{min.UC} & \multicolumn{2}{c}{min.time} & \multicolumn{2}{c}{None} \\ \cmidrule(r){1-2} \cmidrule(l){3-4} \cmidrule(l){5-6} \cmidrule(l){7-8} \cmidrule(l){9-10} \cmidrule(l){11-12} \cmidrule(l){13-14} \cmidrule(l){15-16} \cmidrule(l){17-18} \cmidrule(l){19-20}
	Overall                                       & 1377                      & 690 & 50.11                & 1260 & 91.50                 & 543 & 39.43                & 0 & 0.00                     & 12 & 0.87                  & 8 & 0.58                     & 104 & 7.55                 & 81 & 5.88                    & 28 & 2.03                                \\ \midrule
	LTL-as-LTLf/acacia/demo-v3/demo-v3:cl         & 11                        & 7   & 63.64                & 11   & 100.00                & 4   & 36.36                & \multicolumn{2}{c}{~}          & \multicolumn{2}{c}{~}       & \multicolumn{2}{c}{~}          & \multicolumn{2}{c}{~}      & \multicolumn{2}{c}{~}         & \multicolumn{2}{c}{~}       \\ \arrayrulecolor{black!30}\midrule
	LTL-as-LTLf/alaska/lift/lift                  & 17                        & 16  & 94.12                & 17   & 100.00                & 1   & 5.88                 & \multicolumn{2}{c}{~}          & \multicolumn{2}{c}{~}       & \multicolumn{2}{c}{~}          & \multicolumn{2}{c}{~}      & \multicolumn{2}{c}{~}         & \multicolumn{2}{c}{~}       \\ \midrule
	LTL-as-LTLf/alaska/lift/lift:b                & 17                        & 16  & 94.12                & 17   & 100.00                & 1   & 5.88                 & \multicolumn{2}{c}{~}          & \multicolumn{2}{c}{~}       & \multicolumn{2}{c}{~}          & \multicolumn{2}{c}{~}      & \multicolumn{2}{c}{~}         & \multicolumn{2}{c}{~}       \\ \midrule
	LTL-as-LTLf/alaska/lift/lift:b:f              & 17                        & 16  & 94.12                & 17   & 100.00                & 1   & 5.88                 & \multicolumn{2}{c}{~}          & \multicolumn{2}{c}{~}       & \multicolumn{2}{c}{~}          & \multicolumn{2}{c}{~}      & \multicolumn{2}{c}{~}         & \multicolumn{2}{c}{~}       \\ \midrule
	LTL-as-LTLf/alaska/lift/lift:b:f:l            & 17                        & 15  & 88.24                & 16   & 94.12                 & 1   & 5.88                 & \multicolumn{2}{c}{~}          & \multicolumn{2}{c}{~}       & \multicolumn{2}{c}{~}          & \multicolumn{2}{c}{~}      & \multicolumn{2}{c}{~}         & 1  & 5.88                   \\ \midrule
	LTL-as-LTLf/alaska/lift/lift:b:l              & 17                        & 14  & 82.35                & 17   & 100.00                & 3   & 17.65                & \multicolumn{2}{c}{~}          & \multicolumn{2}{c}{~}       & \multicolumn{2}{c}{~}          & \multicolumn{2}{c}{~}      & \multicolumn{2}{c}{~}         & \multicolumn{2}{c}{~}       \\ \midrule
	LTL-as-LTLf/alaska/lift/lift:f                & 17                        & 16  & 94.12                & 17   & 100.00                & 1   & 5.88                 & \multicolumn{2}{c}{~}          & \multicolumn{2}{c}{~}       & \multicolumn{2}{c}{~}          & \multicolumn{2}{c}{~}      & \multicolumn{2}{c}{~}         & \multicolumn{2}{c}{~}       \\ \midrule
	LTL-as-LTLf/alaska/lift/lift:f:l              & 17                        & 15  & 88.24                & 16   & 94.12                 & 1   & 5.88                 & \multicolumn{2}{c}{~}          & \multicolumn{2}{c}{~}       & \multicolumn{2}{c}{~}          & \multicolumn{2}{c}{~}      & \multicolumn{2}{c}{~}         & 1  & 5.88                   \\ \midrule
	LTL-as-LTLf/alaska/lift/lift:l                & 17                        & 16  & 94.12                & 17   & 100.00                & 1   & 5.88                 & \multicolumn{2}{c}{~}          & \multicolumn{2}{c}{~}       & \multicolumn{2}{c}{~}          & \multicolumn{2}{c}{~}      & \multicolumn{2}{c}{~}         & \multicolumn{2}{c}{~}       \\ \midrule
	LTL-as-LTLf/anzu/amba/amba:c                  & 17                        & 15  & 88.24                & 15   & 88.24                 & \multicolumn{2}{c}{~}      & \multicolumn{2}{c}{~}          & \multicolumn{2}{c}{~}       & \multicolumn{2}{c}{~}          & \multicolumn{2}{c}{~}      & \multicolumn{2}{c}{~}         & 2  & 11.76                  \\ \midrule
	LTL-as-LTLf/anzu/amba/amba:cl                 & 17                        & 16  & 94.12                & 16   & 94.12                 & \multicolumn{2}{c}{~}      & \multicolumn{2}{c}{~}          & \multicolumn{2}{c}{~}       & \multicolumn{2}{c}{~}          & \multicolumn{2}{c}{~}      & \multicolumn{2}{c}{~}         & 1  & 5.88                   \\ \midrule
	LTL-as-LTLf/anzu/genbuf/genbuf                & 20                        & 20  & 100.00               & 20   & 100.00                & \multicolumn{2}{c}{~}      & \multicolumn{2}{c}{~}          & \multicolumn{2}{c}{~}       & \multicolumn{2}{c}{~}          & \multicolumn{2}{c}{~}      & \multicolumn{2}{c}{~}         & \multicolumn{2}{c}{~}       \\ \midrule
	LTL-as-LTLf/anzu/genbuf/genbuf:c              & 20                        & 20  & 100.00               & 20   & 100.00                & \multicolumn{2}{c}{~}      & \multicolumn{2}{c}{~}          & \multicolumn{2}{c}{~}       & \multicolumn{2}{c}{~}          & \multicolumn{2}{c}{~}      & \multicolumn{2}{c}{~}         & \multicolumn{2}{c}{~}       \\ \midrule
	LTL-as-LTLf/anzu/genbuf/genbuf:cl             & 20                        & 18  & 90.00                & 18   & 90.00                 & \multicolumn{2}{c}{~}      & \multicolumn{2}{c}{~}          & \multicolumn{2}{c}{~}       & \multicolumn{2}{c}{~}          & \multicolumn{2}{c}{~}      & \multicolumn{2}{c}{~}         & 2  & 10.00                  \\ \midrule
	LTL-as-LTLf/forobots                          & 38                        & 38  & 100.00               & 38   & 100.00                & \multicolumn{2}{c}{~}      & \multicolumn{2}{c}{~}          & \multicolumn{2}{c}{~}       & \multicolumn{2}{c}{~}          & \multicolumn{2}{c}{~}      & \multicolumn{2}{c}{~}         & \multicolumn{2}{c}{~}       \\ \midrule
	LTL-as-LTLf/rozier/counter/counter            & 19                        & \multicolumn{2}{c}{~}      & \multicolumn{2}{c}{~}        & \multicolumn{2}{c}{~}      & \multicolumn{2}{c}{~}          & \multicolumn{2}{c}{~}       & \multicolumn{2}{c}{~}          & 14  & 73.68                & 14 & 73.68                    & 5  & 26.32                  \\ \midrule
	LTL-as-LTLf/rozier/counter/counterCarry       & 19                        & \multicolumn{2}{c}{~}      & \multicolumn{2}{c}{~}        & 6   & 31.58                & \multicolumn{2}{c}{~}          & \multicolumn{2}{c}{~}       & \multicolumn{2}{c}{~}          & 12  & 63.16                & 18 & 94.74                    & 1  & 5.26                                 \\ \midrule
	LTL-as-LTLf/rozier/counter/counterCarryLinear & 19                        & \multicolumn{2}{c}{~}      & \multicolumn{2}{c}{~}        & 6   & 31.58                & \multicolumn{2}{c}{~}          & \multicolumn{2}{c}{~}       & \multicolumn{2}{c}{~}          & 13  & 68.42                & 19 & 100.00                   & \multicolumn{2}{c}{~}       \\ \midrule
	LTL-as-LTLf/rozier/counter/counterLinear      & 18                        & \multicolumn{2}{c}{~}      & 1    & 5.56                  & \multicolumn{2}{c}{~}      & \multicolumn{2}{c}{~}          & \multicolumn{2}{c}{~}       & \multicolumn{2}{c}{~}          & 18  & 100.00               & 17 & 94.44                    & \multicolumn{2}{c}{~}       \\ \midrule
	LTL-as-LTLf/rozier/formulas/n                 & 30                        & 12  & 40.00                & 24   & 80.00                 & \multicolumn{2}{c}{~}      & \multicolumn{2}{c}{~}          & 8  & 26.67                  & 4 & 13.33                      & 10  & 33.33                & 2  & 6.67                     & \multicolumn{2}{c}{~}       \\ \midrule
	LTL-as-LTLf/schuppan/O1formula                & 27                        & 20  & 74.07                & 20   & 74.07                 & \multicolumn{2}{c}{~}      & \multicolumn{2}{c}{~}          & 4  & 14.81                  & 4 & 14.81                      & 3   & 11.11                & 3  & 11.11                    & \multicolumn{2}{c}{~}       \\ \midrule
	LTL-as-LTLf/schuppan/O2formula                & 27                        & 5   & 18.52                & 5    & 18.52                 & \multicolumn{2}{c}{~}      & \multicolumn{2}{c}{~}          & \multicolumn{2}{c}{~}       & \multicolumn{2}{c}{~}          & 8   & 29.63                & 8  & 29.63                    & 14 & 51.85                   \\ \midrule
	LTL-as-LTLf/schuppan/phltl                    & 13                        & 12  & 92.31                & 12   & 92.31                 & \multicolumn{2}{c}{~}      & \multicolumn{2}{c}{~}          & \multicolumn{2}{c}{~}       & \multicolumn{2}{c}{~}          & \multicolumn{2}{c}{~}      & \multicolumn{2}{c}{~}         & 1  & 7.69                    \\ \midrule
	LTLf-specific/benchmarks:ltlf/\ldots/C100     & 500                       & 255 & 51.00                & 500  & 100.00                & 245 & 49.00                & \multicolumn{2}{c}{~}          & \multicolumn{2}{c}{~}       & \multicolumn{2}{c}{~}          & \multicolumn{2}{c}{~}      & \multicolumn{2}{c}{~}         & \multicolumn{2}{c}{~}       \\ \midrule
	LTLf-specific/benchmarks:ltlf/\ldots/V20      & 425                       & 127 & 29.88                & 425  & 100.00                & 272 & 64.00                & \multicolumn{2}{c}{~}          & \multicolumn{2}{c}{~}       & \multicolumn{2}{c}{~}          & 26  & 6.12                 & \multicolumn{2}{c}{~}         & \multicolumn{2}{c}{~}          \\ \arrayrulecolor{black} \bottomrule
\end{tabular}
  \end{adjustbox}
  \endgroup
\end{table}
%


\section{Related work}
\label{sec:related}

\cdc{To the best of our knowledge, this is the first research endeavour aimed at extrating unsatisfiable cores for \LTLf.} 
In the following, we review the most relevant
literature concerning \LTL/\LTLf satisfiability, and \LTL SAT-based \UC
extraction.

The \LTL satisfiability problem has been addressed through tableau-based methods (e.g., \cite{Janssen99}), temporal resolution (e.g., \cite{DBLP:journals/tocl/FisherDP01}), and reduction to model checking (e.g., \cite{DBLP:conf/cav/CimattiRST07,DBLP:journals/sttt/RozierV10,DBLP:conf/fm/RozierV11}).
In~\cite{DBLP:journals/sttt/RozierV10}, a reduction of the \LTL satisfiability problem to a model checking problem, and a comparison of different model checkers (explicit/symbolic) has been carried out, resulting in better performance and quality for symbolic approaches.
A thorough comparison of the main tools dealing with the \LTL satisfiability problem is reported in~\cite{DBLP:conf/atva/SchuppanD11}. The paper considers also
tableau and temporal resolution based solvers, revealing a complementary behaviour between some of the considered solvers.

The problem of checking the satisfiability of \LTLf properties has been the subject of several works~\cite{DBLP:conf/ecai/Li0PVH14,
DBLP:journals/jair/FiondaG18,DBLP:journals/ai/LiPZVR20}.
\citet{DBLP:conf/ecai/Li0PVH14}, leverage the finite semantics of traces for introducing a propositional SAT based algorithm for the \LTLf satisfiability problem together with some heuristics to guide the search. The approach has been implemented in the \textsc{aalta-finite} tool, which has been shown to outperform other existing approaches based on the reduction to the \LTL satisfiability problem.
This work has been then extended in~\cite{DBLP:journals/ai/LiPZVR20} to leverage a transition system (TS) for the input \LTLf formula, and reducing satisfiability checking to a SAT based path-search problem over this TS. This approach, also implemented in \textsc{aalta-finite}, has been shown to provide the best results in checking unsatisfiable formulae and comparable results for satisfiable ones.
%
\citet{DBLP:journals/jair/FiondaG18} investigate the complexity of some fragments of \LTLf, and present a SAT based algorithm that outperforms the \textsc{aalta-finite} version in~\cite{DBLP:conf/ecai/Li0PVH14}.
%
Algorithm 3 presented here exploits the work in \cite{DBLP:journals/ai/LiPZVR20} as state-of-the-art tool for checking the satisfiability of \LTLf properties.

The \UC extraction for \LTL has also been the subject of several studies~\cite{gore_huang_sergeant_thomson_mus_pltl,DBLP:conf/caise/AwadGTW11,DBLP:journals/acta/Schuppan16,DBLP:conf/nfm/NarizzanoPTV18}.
\citet{gore_huang_sergeant_thomson_mus_pltl} presents a BDD based approach that leverages a method to determine minimal \UCs for SAT~\cite{DBLP:conf/aspdac/Huang05} to find minimal \UCs in \LTL. 
In~\cite{DBLP:conf/caise/AwadGTW11}, \UCs are extracted by leveraging a tableau-based solver to obtain an initial subset of unsatisfiable \LTL formulae and then applying a deletion-based minimization to the subset. The approach, implemented in \textsc{procmine} is part of a tool for the synthesis of business process templates.
In~\cite{DBLP:journals/acta/Schuppan16} fine-grained \UCs are extracted constructing and optimizing resolution graphs for temporal resolution.
Finally, \citet{DBLP:conf/nfm/NarizzanoPTV18} presents a SAT based encoding suitable for the unsat core extraction of \LTL-based property specification patterns~\cite{DBLP:conf/icse/DwyerAC99} extended with inequality statements on Boolean and numeric variables.
Algorithm 4 presented here starts from the work in \cite{DBLP:journals/acta/Schuppan16} to compute \UCs using temporal resolution.

In the context of process mining, the works in \cite{DiCiccio.etal/IS2017:ResolvingInconsistenciesRedundanciesDeclare} and \cite{DBLP:conf/bpm/CoreaD19} identify inconsistencies for  specific \LTLf-based constraints contained in the Declare \cite{2009-Aalst} modeling language.
They rely on automata language and language inclusion techniques to identify the inconsistencies, and are specific to the precise structure of Declare. Thus they cannot be generalized to address generic \LTLf-based specifications.

Finally, works on propositional \UC extraction (e.g., \cite{DBLP:conf/aspdac/Huang05,DBLP:journals/corr/Marques-SilvaJ14,DBLP:conf/date/GoldbergN03}) could be used to improve the quality of the computed cores but we leave this investigation for future developments.

\section{Conclusions and future work}
\label{sec:conclusions}

In this paper, we have addressed the problem of \LTLf unsatisfiable core extraction, presenting
four algorithms based on different state-of-the-art techniques for \LTL and \LTLf satisfiability checking. We have implemented each of them based on existing tools, and we have carried out an experimental
evaluation on a set of \mr{reference} benchmarks for unsatisfiable temporal formulas. The results
have shown feasibility and complementarities of the proposed algorithms.

For future work, we \cdc{envisage the following research endeavours. We}
\begin{enumerate*}[(i)]
\item will address the problem of extracting minimal UCs,
\item plan to extend the approach to other \LTL/\LTLf algorithms based
  on $k$-liveness~\cite{kliveness}, liveness to
  safety~\cite{liveness2safety}, or tableau
  constructions~\cite{DBLP:journals/iandc/GeattiGMR21},
\item intend to extend the problems set with benchmarks from emerging
  domains (e.g., AI Planning, or BPM),
\item want to correlate structural information (e.g., $\PVarSet$
  cardinality, temporal depth, number of operators) with solving
  algorithms,
\item aim to investigate the extension to the infinite state case
  exploiting SMT
  techniques~\cite{DBLP:series/faia/BarrettSST09,DBLP:conf/cav/DanielCGTM16}.
\end{enumerate*}

\clearpage

\bibliography{biblio}

\end{document}